\documentclass[a4paper,UKenglish,cleveref,autoref,thm-restate]{lipics-v2021}
\usepackage[utf8]{inputenc}
\usepackage{multicol}
\usepackage[ruled,noend]{algorithm2e}
\usepackage[noend]{algorithmic} 
\usepackage{comment}
\usepackage{lineno}
\usepackage{tabularx}
\usepackage{makecell}
\usepackage{hyperref}
\usepackage{graphicx}
\usepackage{bm}
\usepackage{adjustbox}
\newtheorem{invariant}{Invariant}

\newcommand{\eps}{\varepsilon}

\newcommand{\poly}{\operatorname{poly}}
\title{Local Density and its Distributed Approximation}
\newcommand{\gee}{\textsl{g}}
\newcommand{\hee}{\textsl{h}}
\usepackage{float}
\usepackage{todonotes}
\hideLIPIcs
\nolinenumbers

\funding{This work was supported by the the VILLUM Foundation grant (VIL37507) ``Efficient Recomputations for Changeful Problems'', the Independent Research Fund Denmark grant 2020-2023 (9131-00044B) ``Dynamic Network Analysis'', and the European Union's Horizon 2020 research and innovation programme under the Marie Sk\l{}odowska-Curie grant agreement No 899987.
}

\author{Aleksander Bjørn Christiansen}{Technical University of Denmark, Kongens Lyngby, Denmark}{abgch@dtu.dk}{}{}
\author{Ivor van der Hoog}{Technical University of Denmark, Kongens Lyngby, Denmark}{idjva@dtu.dk}{https://orcid.org/0009-0006-2624-0231}{}
\author{Eva Rotenberg}{Technical University of Denmark, Kongens Lyngby, Denmark}{erot@dtu.dk}{https://orcid.org/0000-0001-5853-7909}{}

\authorrunning{Aleksander Bjørn Christiansen, Ivor van der Hoog, Eva Rotenberg}

\newtheorem{problem}{Problem}

\newcommand{\tor}{\!\to\!}
\begin{document}

\ccsdesc[500]{Theory of computation~Graph algorithms analysis}

\ccsdesc[500]{Theory of computation~Distributed algorithms}

\keywords{Distributed graph algorithms, graph density computation, graph density approximation, network analysis theory.}

\maketitle

\begin{abstract}
The densest subgraph problem is a classic problem in combinatorial optimisation.
Graphs with low maximum subgraph density are often called ``uniformly sparse'', leading to algorithms parameterised by this density. However, in reality, the sparsity of a graph is not necessarily uniform. This calls for a formally well-defined, fine-grained notion of density.

Danisch, Chan, and Sozio propose a definition for \emph{local density} that assigns to each vertex $v$ a value $\rho^*(v)$. This local density is a generalisation of the maximum subgraph density of a graph. I.e., if $\rho(G)$ is the subgraph density of a finite graph $G$, then $\rho(G)$ equals the maximum local density $\rho^*(v)$ over vertices $v$ in $G$. 
They present a Frank-Wolfe-based algorithm to approximate the local density of each vertex with no theoretical (asymptotic) guarantees. 

We provide an extensive study of this local density measure. 
Just as with (global) maximum subgraph density, we show that there is a dual relation between the local out-degrees and the minimum out-degree orientations of the graph.
We introduce the definition of the local out-degree $\gee^*(v)$ of a vertex $v$, and show it to be equal to the local density $\rho^*(v)$.
We consider the local out-degree to be conceptually simpler, shorter to define, and easier to compute. 

Using the local out-degree we show a previously unknown fact: that 
 existing algorithms already dynamically 
approximate 
the local density for each vertex with polylogarithmic update time. Next, we provide the first distributed algorithms that compute the local density with provable guarantees: given any $\varepsilon$ such that $\varepsilon^{-1} \in O(\operatorname{poly} n)$, we show a deterministic distributed algorithm in the LOCAL model where, after $O(\varepsilon^{-2} \log^2 n)$ rounds, every vertex $v$ outputs a $(1 + \varepsilon)$-approximation of their local density $\rho^*(v)$. 
In CONGEST, we show a deterministic distributed algorithm that requires $\text{poly}(\log n,\varepsilon^{-1}) \cdot 2^{O(\sqrt{\log n})}$ rounds, which is sublinear in $n$. 

As a corollary, we obtain the first deterministic algorithm running in a sublinear number of rounds for $(1+\varepsilon)$-approximate densest subgraph detection in the CONGEST model.
\end{abstract}

\setcounter{page}{0}

\thispagestyle{empty}

\newpage

\section{Introduction}

Density or sparsity measures of graphs are widely studied and have many applications. Examples include the arboricity, the degeneracy, and the maximum subgraph density, all of which are asymptotically related within a factor of $2$. Given a graph or subgraph $H$, its density, $\rho(H)$, is the average number of edges per vertex in $H$. The \emph{maximum subgraph density} $\rho^{\max}(G)$ of a graph $G$ is the maximum density $\rho(H)$ amongst all subgraphs $H\subseteq G$.

Computing maximum subgraph density has been studied both in the dynamic~\cite{chekuri2024adaptive,sawlani2020near}, streaming~\cite{BahmaniKV12,BhattacharyaHNT15} and distributed~\cite{SarmaLNT12,SuVu20} setting. 
Often these measures are used to parameterise the sparsity of ``uniformly sparse graphs''~\cite{Behnezhadetal,Brodal99dynamicrepresentations,Onaketal}. 
These measures are global measures in the sense that they measure the sparsity of the most dense part of the graph. 
In many cases the graph is not equally sparse (or dense) everywhere. 
Consider for example a lollipop graph: a large clique joined to a long path. 
The clique is a subgraph of high density, yet the vertices along the path sit in a part of the graph that is significantly less dense. 
Often, solutions for graph density related problems provide guarantees based on the most dense part of the graph.  
In some areas of computation more ``local'' solutions are desirable. Prior works of a global nature often completely disregard certain parts of the graphs, meaning that the output in sparser parts holds little to no information. We give three examples: 

$1)$ Many dynamic algorithms for estimating the subgraph density rely on modifying the solution locally. 
Algorithmic performance is expressed in terms of (global) graph sparsity, and thus fails to exploit the more fine-grained guarantee that local sparse areas yield.

$2)$ In network analysis, one is often interested in determining dense subgraphs as these subgraphs can be interpreted, for instance, as communities within a social network. 
However, since many classical algorithms are tuned towards only detecting the densest subgraphs, these algorithms might fail to detect communities in sparser parts of the network~\cite{Prac2,Prac1,Prac3}.

$3)$ Computing the maximum subgraph density is not very local, nor distributed. We consider the models LOCAL and CONGEST, and the lollipop graph. 
Here, almost instantly, the vertices of the clique realise they are part of a (very dense) clique. The vertices on the path may have to wait for diameter-many rounds before realising the maximum subgraph density of the graph.
Distributed algorithms that wish to compute the \emph{value} of the subgraph density are thus posed with a choice: either use $\Omega(D)$ rounds (where $D$ is the diameter of the graph), or let every vertex output a value that is at most the maximum subgraph density.

\subsection{Local density and results.}
We consider the definition of \emph{local density} $\rho^*(v)$ by Danisch, Chan, and Sozio~\cite{Prac2}, defined at each node $v$ of the graph (Definition~\ref{def:localdensity}).  Our contributions can be split into four categories, which we present in four sections with corresponding titles.

\subparagraph{A: Conceptual results for local density. }
Our primary contribution is an extensive overview of the theoretical properties of this local density measure.
We show that, just as in the maximum-subgraph density problem,  computing local density has a natural dual problem as we define the \emph{local out-degree}. 
Consider a (fractional) orientation of the graph that is \emph{locally fair}. i.e., for each directed edge $(u, v)$, the out-degree $g(u)$ is at most $g(v)$. 
We prove that for each vertex $v$, the out-degree of $v$ has the same value over all locally fair orientations. We define this value $\gee^*(v)$ as the local out-degree of $v$. 
We prove that the local density of each vertex is the dual of its local out-degree and thereby $\gee^*(v) = \rho^*(v)$. 
This new definition for local out-degree is considerably shorter than the definition for local density. It  allows us to show some previously unknown interesting properties of local density:

\subparagraph{B: Results for dynamic algorithms. }
We prove that in an \emph{approximately fair} orientation (a definition by Chekuri et al.~\cite{chekuri2024adaptive}) the out-degree $g(v)$ of each vertex is a $(1+\eps)$-approximation of $\gee^*(v) = \rho^*(v)$ (Theorem~\ref{thm:approx}). This implies a previously unknown fact: that there exist dynamic polylogarithmic algorithms~\cite{chekuri2024adaptive, christiansenICALP, ChristiansenNR23} where each vertex $v$ maintains a $(1+\eps)$-approximation of its local density $\rho^*(v)$ as by Danisch, Chan, and Sozio~\cite{Prac2}.

\subparagraph{C: Results in LOCAL. }
 We show that each node $v$ can obtain a $(1+\eps)$-approximation of $\rho^*(v)$ by surveying its $O(\eps^{-2}\log^2 n)$-hop neighbourhood. This induces a LOCAL algorithm where each vertex $v \in V$ computes a $(1 + \eps)$-approximation of $\rho^*(v)$ in $O(\eps^{-2} \log^2 n)$ rounds.

\emph{Commentary on runtime:}
We  observe that $\rho^{\max}(G)$ can be computed in LOCAL in $O(\eps^{-1} \log n)$ 
rounds.  In contrast, the stricter local density is computed in $O(\eps^{-2} \log^2 n)$ rounds. This gap may be explained by considering the local density $\rho^*(v)$ for low-local-density vertices $v$ in a graph that has high global density. 
The local density of $v$ can be affected by a dense subgraph within a hop distance of $\Theta(\eps^{-2} \log^2 n)$ (although it unclear if it can be affected \emph{enough} to prohibit a $(1+\eps)$-approximation of $\rho^*(v)$ in $o(\eps^{-2}\log^2 n)$ time).
The potential for a barrier of $O(\eps^{-2} \log^2 n)$ rounds is also illustrated by existing dynamic algorithms~\cite{chekuri2024adaptive, SuVu20} that maintain $\eta$-fair orientations. In this scenario, these algorithms have a worst-case recourse of $\Omega(\eps^{-2} \log^2 n)$.
We consider it an interesting open problem to either improve our running time in LOCAL, or, show that $O(\eps^{-2} \log^2 n)$ is tight.

 \subparagraph{D: Results in CONGEST}
We show a significantly more involved algorithm in CONGEST, where after $O( \poly \{ \eps^{-1}, \log n \} \cdot 2^{O(\sqrt{\log n})})$ rounds, 
each vertex $v$ outputs a $(1 + \eps)$-approximation of $\rho^*(v)$. 
Since $\max\limits_{v \in V} \rho^*(v) = \rho^{\max}(G)$, this is the first deterministic algorithm for $(1+\eps)$-approximating of the global subgraph density $\rho^{\max}(G)$ in CONGEST, that runs in a number of rounds that is sublinear in the diameter of the graph.

In the main body, we focus on the value variant where we want each vertex $v$ to output an approximation of $\rho^*(v)$. 
In Appendix~\ref{app:reporting} we extend our analysis so that each vertex $v$ can output a subgraph $H$ with $v \in H$ where $\rho(H)$ approximates $\rho^*(v)$. See also Tables~\ref{tab:results} and \ref{tab:results2}.

\begin{table}[h]
    \centering
       \begin{adjustbox}{max width=1.3\textwidth,center}
    \begin{tabular}{c|c|l|l|l}
        Model & Problem & Each $v$ outputs $\rho_v$ with & Rounds & Source \\ \hline
L & \ref{oldproblem:value}.1 & $\rho_v \in [(1+\eps)^{-1} \rho^{\max}(G), (1+\eps)\rho^{\max}(G) ]$ & $\Theta(D)$ & \cite{SuVu20} \\
 & \ref{oldproblem:value}.2 & $\max_{v} \rho_v \in [(1+\eps)^{-1} \rho^{\max}(G), (1+\eps)\rho^{\max}(G) ]$ & $O(\eps^{-1} \log n)$ & \cite{SuVu20} \\
 & \textbf{\ref{newproblem:value}} & $\pmb{\rho_v \in [(1+\eps)^{-1} \rho^*(v), (1+\eps)\rho^*(v) ]}$ & $\pmb{O(\eps^{-2} \log^2 n)}$ & \textbf{Cor.~\ref{cor:local}}      \\ \hline
C & \ref{oldproblem:value}.1 & $\rho_v \in [(2+\eps)^{-1} \rho^{\max}(G), (2+\eps)\rho^{\max}(G) ]$ & $O(D  \cdot \eps^{-1} \log n)$ & \cite{SarmaLNT12} \\ 
 & \ref{oldproblem:value}.1 & $\rho_v \in [(1+\eps)^{-1} \rho^{\max}(G), (1+\eps)\rho^{\max}(G) ]$ & \textcolor{orange}{ $O(\eps^{-4} \log^4 n + D)$ \hfill whp.}& \cite{SuVu20} \\
 & \ref{oldproblem:value}.2 & $\max_{v} \rho_v \in [(1+\eps)^{-1} \rho^{\max}(G), (1+\eps)\rho^{\max}(G) ]$ & \textcolor{orange}{ $O(\eps^{-4} \log^4 n)$ \hfill whp.}& \cite{SuVu20} \\
& \textbf{\ref{oldproblem:value}}\textbf{.2} &$\pmb{\rho_v \in [(1+\eps)^{-1} \rho^*(v), (1+\eps)\rho^*(v) ]}$ &$ \pmb{O(  \poly \{ \log n, \eps^{-1} \}) \cdot 2^{O(\sqrt{\log n})}}$  & \textbf{Thm.~\ref{thm:main}}   \\
& \textbf{\ref{newproblem:value}} &$\pmb{\rho_v \in [(1+\eps)^{-1} \rho^*(v), (1+\eps)\rho^*(v) ]}$ &$ \pmb{O(  \poly \{ \log n, \eps^{-1} \}) \cdot 2^{O(\sqrt{\log n})}}$  & \textbf{Thm.~\ref{thm:main}}   \\
  & \textbf{\ref{newproblem:value}} &   $\pmb{\rho_v \in [(1+\eps)^{-1} \rho^*(v), (1+\eps)\rho^*(v) ]}$ & \textcolor{orange}{$\pmb{O(\poly \{ \log n, \eps^{-1} \})}$ \hfill whp.} & \textbf{Thm.~\ref{thm:main}}    \\
    \end{tabular}
    \end{adjustbox}
\caption{Results in LOCAL (L) or CONGEST (C) where prior work for computing the global subgraph density is compared to our running time for the local subgraph density. D denotes the diameter. Orange running times are not deterministic and occur with high probability. }
    \label{tab:results}
\end{table}

\newpage
\section{Preliminaries and related work}
Let $G = (V, E)$ be an undirected weighted graph with $n$ vertices and $m$ edges. 
For any $v \in V$ and any integer $k$, we denote by $H_k(v)$ the $k$-hop neighborhood of $v$. 
For each edge $e \in E$ we denote by $\gee(e)$ the \emph{weight} of $e$. 
Any edge with endpoints $u,v$ may be denoted as $\overline{uv}$.

We can augment any weighted graph $G$ with a \emph{(fractional) orientation}. 
An orientation $\overrightarrow{G}$ assigns to each edge $\overline{uv}$ two positive real values: $\gee( u \tor v)$ and $\gee(v \tor u)$ such that $\gee(\overline{uv}) = \gee( u \tor v) + \gee( v \tor u)$. 
These values may be interpreted as pointing a fraction of the edge $\overline{uv}$ from $u$ to $v$, and the other fraction from $v$ to $u$. Given an orientation $\overrightarrow{G}$, we denote by $\gee(u) = \sum\limits_{v \in V} \gee(u \tor v)$ the \emph{out-degree} of $u$ (i.e., how much fractional edges point outwards from $u$ in $\overrightarrow{G}$).
Given these definitions, we can consider two graph measures of $G$: the maximum subgraph density and the minimum orientation of $G$.

\subparagraph{Global graph measures.}
For any subgraph $H \subseteq G$, its \emph{density} $\rho(H)$ is defined as $\rho(H) = \frac{1}{|V(H)|} \sum\limits_{e \in E(H)} \gee(e)$. 
The \emph{maximum subgraph density} $\rho^{\max}(G)$ is then the maximum over all $H \subseteq G$ of $\rho(H)$.  A subgraph $H \subseteq G$ is \emph{densest} whenever $\rho(H) = \rho^{\max}(G)$.  For any orientation $\overrightarrow{G}$ of $G$, its \emph{maximum out-degree} $\Delta(\overrightarrow{G})$ is the maximum over all $u$ of the out-degree $\gee(u)$. 
The \emph{optimal out-degree} of $G$, denoted by $\Delta^{\min}(G)$, is subsequently the minimum over all $\overrightarrow{G}$ of $\Delta(\overrightarrow{G})$. 
An orientation $\overrightarrow{G}$ itself is \emph{minimum} whenever $\Delta(\overrightarrow{G}) = \Delta^{\min}(G)$. 

The density of $G$ and the optimal out-degree are closely related. 
One way to illustrate this is through the following dual linear programs:
\vspace{-0.5cm}

\begin{align*}
 \hspace{-0.6cm}  \textbf{DS (Densest Subgraph)} &   & & & \Vert &  & \textbf{FO (Fractional Orientation)} & & \\
 \max \sum_{ \overline{uv} \in E} \gee(\overline{uv}) \cdot y_{u, v} & & \text{s.t.} &   &\Vert & & \min \rho  & & \text{s.t.} \\
x_u,x_v \geq y_{u,v}  & & \forall \overline{uv} \in E & &\Vert & & \gee(u \tor v) + \gee(v \tor u) \geq \gee(\overline{uv}) & &\forall \overline{uv}\in E \\
\sum_{v\in V} x_v  \leq 1 & &   & & \Vert & & \rho \geq \sum_{v\in V} \gee(u \tor v) & &\forall u\in V \\
\quad x_v,y_{u,v} \geq 0  & & \forall u, v \in V & & \Vert & & \gee(u \tor v), \gee(v \tor u) \geq 0 &  & \forall u, v \in V
\end{align*}

\noindent
Denote by $R$ the optimal value of DS and by $\Delta$ the optimal value of FO. By duality, $R = \Delta$. Moreover, Charikar~\cite{charikar2003greedy}  relates these two linear programs to the densest subgraph problem:

\begin{theorem}[Theorem 1 in \cite{charikar2003greedy}]
  Let $G$ be a unit weight graph.
Denote by $R$ the optimal solution of DS and by $D$ the optimal solution of $FO$. Then $\rho^{\max}(G) = R = \Delta = \Delta^{\min}(G)$.
\end{theorem}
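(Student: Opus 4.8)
The plan is to establish the chain $\rho^{\max}(G) = R = \Delta = \Delta^{\min}(G)$ one link at a time, essentially following Charikar's argument. The link $R = \Delta$ is exactly the LP duality between \textbf{DS} and \textbf{FO} recorded just above, so nothing new is needed there. The link $\Delta = \Delta^{\min}(G)$ is a reformulation: given any feasible solution of \textbf{FO} with objective value $\rho$, replacing each inequality $\gee(u\tor v)+\gee(v\tor u)\ge\gee(\overline{uv})$ by an equality only decreases out-degrees and yields an honest fractional orientation whose maximum out-degree is at most $\rho$; conversely every fractional orientation $\overrightarrow{G}$ gives a feasible \textbf{FO} solution with objective $\Delta(\overrightarrow{G})$. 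Taking optima on both sides gives $\Delta = \Delta^{\min}(G)$.

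The substance is the equality $\rho^{\max}(G) = R$, which I would prove by two inequalities. For $\rho^{\max}(G)\le R$, take a densest subgraph $H$ and feed \textbf{DS} the point $x_v = 1/|V(H)|$ for $v\in V(H)$ and $x_v=0$ otherwise, together with $y_{u,v}=1/|V(H)|$ for $\overline{uv}\in E(H)$ and $y_{u,v}=0$ otherwise. One checks $x_u,x_v\ge y_{u,v}$, that $\sum_v x_v \le 1$, and that the objective equals $|E(H)|/|V(H)| = \rho(H) = \rho^{\max}(G)$; hence $R\ge\rho^{\max}(G)$.

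For the reverse inequality $R\le\rho^{\max}(G)$ I would use a level-set (threshold) rounding argument. Fix an optimal solution $(x,y)$ of \textbf{DS}; since the objective is nondecreasing in every $y_{u,v}$ and the only binding upper bounds are $y_{u,v}\le\min(x_u,x_v)$, we may assume $y_{u,v}=\min(x_u,x_v)$ for all edges. For a threshold $r\ge 0$ set $V(r)=\{v : x_v\ge r\}$ and observe that $G[V(r)]$ has edge set exactly $E(r)=\{\overline{uv} : \min(x_u,x_v)\ge r\}$. Using $\int_0^\infty \mathbf{1}[x_v\ge r]\,dr = x_v$ and the analogous identity for $y_{u,v}$, one obtains $\int_0^\infty |V(r)|\,dr = \sum_v x_v \le 1$ and $\int_0^\infty |E(r)|\,dr = \sum_{\overline{uv}\in E} y_{u,v} = R$. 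By definition of maximum subgraph density, $|E(r)|\le\rho^{\max}(G)\cdot|V(r)|$ for every $r$ (trivially when $V(r)=\emptyset$), and integrating this pointwise bound gives $R = \int_0^\infty |E(r)|\,dr \le \rho^{\max}(G)\int_0^\infty |V(r)|\,dr \le \rho^{\max}(G)$. Combined with the previous paragraph this yields $R = \rho^{\max}(G)$ and closes the chain.

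The hard part will be this last rounding step: one must argue that some level set of a fractional optimum is already an integral subgraph whose density is at least $R$, and the cleanest route is the integral identity above rather than a pigeonhole over finitely many candidate thresholds. Everything else — the duality, the orientation reformulation, and the explicit primal point built from $H$ — is routine bookkeeping.
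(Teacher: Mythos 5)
Your proposal is correct and follows essentially the same route as the paper's proof of the weighted generalisation (Lemma~\ref{lem:equalmaxima}): LP duality for $R=\Delta$, the inequality-to-equality relaxation for $\Delta=\Delta^{\min}(G)$, an explicit uniform primal point for $\rho^{\max}(G)\le R$, and a level-set/integral argument for $R\le\rho^{\max}(G)$. The only cosmetic differences are that you normalise $y_{u,v}=\min(x_u,x_v)$ up front (which the paper leaves implicit) and you integrate the pointwise bound $|E(r)|\le\rho^{\max}(G)\,|V(r)|$ directly, whereas the paper phrases the same step as a proof by contradiction exhibiting a threshold $\hat r$.
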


\noindent
We show that this can be generalised to when $G$ is a weighted graph: 

\begin{restatable}{lemma}{equalmaxima}
    \label{lem:equalmaxima}
    Let $G$ be any weighted graph.
Denote by $R$ the optimal solution of DS and by $D$ the optimal solution of $FO$. Then $\rho^{\max}(G) = R = D = \Delta^{\min}(G)$.
\end{restatable}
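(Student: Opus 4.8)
The plan is to establish the chain $\rho^{\max}(G) = R = D = \Delta^{\min}(G)$ by combining strong LP duality with a weighted version of Charikar's threshold‑rounding argument; the only content beyond \cite{charikar2003greedy} is bookkeeping with edge weights, so I expect no genuine obstacle, only routine care.

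First, $R = D$ follows from strong LP duality once we check both programs are feasible and bounded. DS is feasible via $x \equiv y \equiv 0$ and bounded above, since every feasible $(x,y)$ has $y_{u,v} \le x_u \le 1$, hence objective at most $\sum_{\overline{uv}\in E}\gee(\overline{uv})$. FO is feasible (split each edge evenly) and its objective $\rho$ is bounded below by $0$. The two programs as displayed are a primal–dual pair, so their optima coincide.

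Second, $D = \Delta^{\min}(G)$. Every fractional orientation $\overrightarrow{G}$ gives a feasible FO point with $\rho = \Delta(\overrightarrow{G})$, so $D \le \Delta^{\min}(G)$. Conversely, take an optimal FO solution; whenever $\gee(u\tor v)+\gee(v\tor u) > \gee(\overline{uv})$ we may decrease the directed weights to restore equality, which preserves feasibility and never increases any out‑degree $\gee(u)=\sum_v \gee(u\tor v)$. The resulting solution is an honest orientation with maximum out‑degree $\le D$, so $\Delta^{\min}(G)\le D$.

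Third, $\rho^{\max}(G) = R$, the substantive step. For $R \ge \rho^{\max}(G)$: let $H$ be a densest subgraph and set $x_v = y_{u,v} = 1/|V(H)|$ on vertices and edges of $H$ and $0$ elsewhere; this is feasible and has objective $\rho(H) = \rho^{\max}(G)$. For $R \le \rho^{\max}(G)$: fix any feasible $(x,y)$; since the objective coefficients $\gee(\overline{uv})\ge 0$ we may assume $y_{u,v} = \min(x_u,x_v)$. For $t \ge 0$ let $V(t) = \{v : x_v \ge t\}$ and let $E(t)$ be the edge set of the induced subgraph $G[V(t)]$, so that $\overline{uv}\in E(t)$ iff $y_{u,v}\ge t$. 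The layer‑cake identity gives $\int_0^\infty |V(t)|\,dt = \sum_v x_v \le 1$ and $\int_0^\infty \big(\sum_{\overline{uv}\in E(t)}\gee(\overline{uv})\big)\,dt = \sum_{\overline{uv}\in E}\gee(\overline{uv})\,y_{u,v}$, the DS objective. Because $G[V(t)]$ is a subgraph of $G$, we have $\sum_{\overline{uv}\in E(t)}\gee(\overline{uv}) \le \rho^{\max}(G)\cdot |V(t)|$ for every $t$ (trivially when $V(t)=\emptyset$), and integrating yields objective $\le \rho^{\max}(G)\int_0^\infty |V(t)|\,dt \le \rho^{\max}(G)$. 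Taking the supremum over feasible $(x,y)$ gives $R \le \rho^{\max}(G)$, completing the chain. The main point to watch is the reduction to $y_{u,v}=\min(x_u,x_v)$ and the integral bookkeeping, both of which go through verbatim with $\gee$ in place of unit weights precisely because $\gee$ is nonnegative.
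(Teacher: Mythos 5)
Your proposal is correct and follows the paper's approach: a threshold/layer-cake argument for $\rho^{\max}(G)=R$, LP duality for $R=D$, and the weight-reduction argument for $D=\Delta^{\min}(G)$. The only cosmetic differences are that you phrase the $R\le\rho^{\max}(G)$ step as a direct integral bound rather than a contradiction, and you fix $y_{u,v}=\min(x_u,x_v)$ up front so that the $y$-threshold sets coincide with induced-subgraph edge sets --- both equivalent to what the paper does.
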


\begin{proof}
    \textbf{We show that $R \geq \rho^{\max}(G)$.}
    First note that if $H \subseteq G$ is the densest subgraph of $G$, then for every edge $e \in E(H)$, $\hee(e) = \gee(e)$ (as decreasing weights only decreases the density of the subgraph). 
    There are finitely many subgraphs of $G$ with the same edge weights as $G$, and thus there exists a $H \subseteq G$ that is a densest subgraph of $G$.  
    We use $H$ to find a feasible solution to DS:

    For each $v \in V(H)$, we set $x_v = \frac{1}{|V(H)|}$.
    For every other $v \in V$, we set $x_v$ to be $0$. 
     For every $\overline{uv} \in E(H)$, we set $y_{u, v} = \frac{1}{|V(H)|}$.
     For every other $\overline{uv} \in E$, we set $y_{u, v} = 0$.
     This gives a feasible solution to DS. 
     Moreover: 
\begin{equation*}  
     \sum_{\overline{uv} \in E} \gee(\overline{uv}) \cdot y_{u, v} = \sum_{\overline{uv} \in E(H)} \gee(\overline{uv}) \cdot \frac{1}{|V(H)|} = \rho(H) = \rho^{\max}(G).
\end{equation*}

    \noindent
    It follows that $R \geq \rho^{\max}(G)$.

      \textbf{We show that $R \leq \rho^{\max}(G)$.}
Consider any solution to DS that realises the value $R$. 
Denote for a real value $r$ by $V(r) := \{ u \in V \mid x_u \geq r\}$.
Similarly, denote by $E(r) := \{ \overline{uv} \in E \mid y_{u, v} \geq r\}$.
Denote by $\theta[\cdot]$ the indicator function which gives $1$ when the expression $\cdot$ is true and $0$ otherwise.

\begin{equation}
    \label{eq:lessthan1}
\int_{0}^{\infty} |V(r) | \,dr = \int_{0}^{\infty} \sum_v \theta[ u_v \geq r ]\,dr = \sum_u \int_{0}^{x_u} 1 \, dr = \sum_u x_u \leq 1 
\end{equation}
Similarly, we can denote:

\begin{equation}
\label{eq:equalsr}
\int_{0}^{\infty} \sum_{\overline{uv} \in E(r)} \gee(\overline{uv}) \,dr = \int_{0}^{\infty} \sum_{\overline{uv}} \theta[ y_{u, v} \geq r ] \cdot \gee(\overline{uv})    \,dr = \sum_{\overline{uv}} \int_{0}^{y_{u, v}} \gee(\overline{uv}) \, dr = \sum_{\overline{uv}}  \gee(\overline{uv}) \cdot y_{u, v} = R 
\end{equation}

Now we claim that we can choose a value $\hat{r}$ such that $\frac{\sum_{\overline{uv} \in E(\hat{r})} \gee(\overline{uv})}{|V(\hat{r})|} \geq R$. Observe that the existence of $\hat{r}$ implies the lemma. 

Suppose for the sake of contradiction that such $\hat{r}$ does not exist.
Then per assumption:
\begin{equation*}
     \forall r \in (0, 1): \quad R \cdot |V(r)| \, \, - \sum_{\overline{uv} \in E(r)} \gee(\overline{uv}) > 0 \Rightarrow  \int_0^\infty  R \cdot |V(r)| - \sum_{\overline{uv} \in E(r)} \gee(\overline{uv}) \, dr > 0
\end{equation*}

However, this now gives a contradiction through Equations~\ref{eq:lessthan1} and \ref{eq:equalsr}:
\begin{align*}
    0 < \int_0^\infty  R \cdot |V(r)| \, \, - \sum_{\overline{uv} \in E(r)} \gee(\overline{uv}) \, dr = R \cdot \int_0^\infty |V(r)| dr - \int_0^\infty \sum_{\overline{uv} \in E(r)} \gee(\overline{uv}) \, dr \leq R - R = 0
\end{align*}

The proof that $D = \Delta^{\min}(G)$ is more straightforward:

Any orientation of $G$ is per definition is a solution to FO and so $\Delta^{\min}(G) \geq D$. 
What remains is to show that $\Delta^{\min}(G) \leq D$.
Suppose for the sake of contradiction that $D < \Delta^{\min}(G)$
and consider the solution to FO that realises the value $D$. 
Let this solution assign to every $\overline{uv} \in E$ weights $\gee(u \tor v) \geq 0$ and $\gee(v \tor u) \geq 0$. 
Because $D < \Delta^{\min}(G)$, it must be that the solution to $FO$ is not an orientation and so there must be edges $\overline{uv} \in E$ for which $\gee(u \tor v) + \gee(v \tor u) > \gee(\overline{uv})$.
Let $\gee(u \tor v) \geq \gee(v \tor u)$.
We may decrease $\gee(u \tor v)$ until $\gee(u \tor v) + \gee(v \tor u) = \gee(\overline{uv})$. 
By performing this action, we only reduce the sum $\sum_{v \in V} \gee(u \tor v)$ and thus we keep having a valid solution to FO. 
Doing this for every edge $\overline{uv} \in E$ creates an orientation where for all $u$, $\gee(u) = \sum_{v \in V} \gee(u \tor v) \leq R$.
Thus $\Delta^{\min}(G)\leq R$ --- a contradiction. 
\end{proof}

\subsection{Densest subgraph in dynamic algorithms}
In a classical, non-distributed model of computation  we can immediately formalise both the value variant of the (approximate) densest subgraph:

\begin{problem}
    \label{dynamic:value}
    Given a graph $G$ and an $\eps > 0$, output $\rho' \in [(1 + \eps)^{-1} \rho^{\max}(G), (1 + \eps) \rho^{\max}(G)]$.
\end{problem}

Alternatively, in the Fractional Orientation (FO) problem the goal is to output a $(1+\eps)$-approximation of $\Delta^{\min}(G)$. It turns out that FO is a more accessible problem to study. 
The LP formulations allow for a straightforward way to compute $\Delta^{\min}(G)$ and/or $\rho^{\max}(G)$.
However, solving the LP requires information about the entire graph, and this information is expensive to collect. 
Sawlani and Wang~\cite{sawlani2020near} get around this difficulty by instead solving an approximate version of (FO). They work with a concept we call \emph{local fairness}.

\begin{definition}
Let $\overrightarrow{G}$ be a fractional orientation of a graph $G$.
We say that $\overrightarrow{G}$ is locally fair whenever  $\gee(u \tor v) > 0$ implies $\gee(u) \leq \gee(v)$. 
\end{definition}

\noindent
Chekuri et al.~\cite{chekuri2024adaptive} extend this definition to $\eta$-fairness:  

\begin{definition}
Let $\overrightarrow{G}$ be a fractional orientation of a graph $G$.
We say that $\overrightarrow{G}$ is $\eta$-fair (for $\eta > 0$) whenever $\gee(u \tor v) > 0$ implies that $\gee(u) \leq (1 + \eta) \gee(v)$. 
\end{definition}

\subparagraph{Related work in dynamic algorithms} Chekuri et al.~\cite{chekuri2024adaptive} continue to focus on computing a $(1 + \eps)$-approximation of the Densest Subgraph problem.
They show that, if $G$ is a unit weight graph, there exists a $(1 + \eps)$-approximate solution to FO that is $\eta$-fair (for some smartly chosen $\eta$). 
They subsequently prove that an $\eta$-fair orientation allows you to find a $(1 + \eps)$-approximate densest subgraph. 
This allows them to dynamically maintain the value of the  densest subgraph of $G$ in $O( \eps^{-6} \log^3 n \log \rho^{\max}(G))$ time per insertion or deletion of edges in $G$. 
By leveraging the $\eta$-fairness of the orientation, they can report a $(1 + \eps)$-approximate densest subgraph in time proportional to the size of the subgraph. 

\subsection{Approximate densest subgraph in LOCAL and CONGEST}
\label{sub:suvu}

We focus on the \emph{value} variant of the problem, where each vertex outputs a value (as opposed to the \emph{reporting} variant in Appendix~\ref{app:reporting}, where the goal is to report a densest subgraph).

 \begin{problem}
    \label{oldproblem:value}
     Given a graph $G$ and an $\eps > 0$, each vertex $v$ outputs a value $\rho_v$ and either:
     \begin{itemize}
      \item    \textbf{Problem 2.1:} we require that  $\, \forall v, \rho_v \in [(1 + \eps)^{-1} \rho^{\max}(G), (1 + \eps) \rho^{\max}(G)]$, or
             \item  \textbf{Problem 2.2}: we require that $\,  \max_v \rho_v \in [(1+\eps)^{-1} \rho^{\max}(G), (1+\eps)\rho^{\max}(G) ]$.
     \end{itemize}
 \end{problem}

\subparagraph{Related work. }
Problem \ref{oldproblem:value}.1 has a trivial $\Omega(D)$ lower bound, obtained by constructing a lollipop graph (where $D$ denotes the diameter). 
In LOCAL, it is trivial to solve Problem \ref{oldproblem:value}.1 in $\Theta(D)$ time. 
Problem \ref{oldproblem:value}.2 was studied by Ghaffari and Su~\cite{GhaffariS17} who present a randomised $(1 + \eps)$-approximation in LOCAL that uses $O(\eps^{-3} \log^4 n)$ rounds. 
Fischer et al.~\cite{fischer2017deterministic} present a deterministic $(1 + \eps)$-approximation in LOCAL  that uses $2^{O(\log^2 ( \eps^{-1} \log n))}$ rounds. 
Ghaffari et al.~\cite{ghaffari2018derandomizing} improve this to $O(\eps^{-9} \log^{15}  n )$ rounds.
The work by Harris~\cite{harris2019distributed} improves this to $\tilde{O}(\eps^{-6} \log^6 n)$ rounds. 
Su and Vu~\cite{SuVu20} present the state-of-the-art in this area. 
They prove that for any graph $G$, 
 there exists a vertex $v$ such that for the $k$-hop neighbourhood $H_k(v)$ (with $k  \in O(\eps^{-1} \log n)$) the density $\rho^{\max}(H_k)$ is a $(1 + \eps)$-approximation of $\rho^{\max}(G)$. 
This immediately leads to a trivial LOCAL algorithm: each vertex $u$ collects its $k$-hop neighbourhood $H_k(u)$ in $O(\eps^{-1} \log n)$ rounds, solves the LP of Densest Subgraph in its own node, and reports the value $\rho^{\max}(H_k(u))$. 

In CONGEST, the state-of-the-art deterministic algorithm for Problem~\ref{oldproblem:value}.1 and \ref{oldproblem:value}.2 is by Das Sarma et al.~\cite{SarmaLNT12} who present a $(2+\eps)$-approximation in $O(D \cdot \eps^{-1} \log n)$ rounds.
The best randomised work is by Su and Vu~\cite{SuVu20} who present a randomised algorithm for Problem~\ref{oldproblem:value}.2 that runs in $O(\eps^{-4} \log^4 n)$ rounds w.h.p. See also Table~\ref{tab:results} for an overview.

\subsection{Local density}
Danisch, Chan, and Sozio~\cite{Prac2} introduce a more local measure which they call the \emph{local density}. 
Its lengthy definition assigns to each vertex $v$ a value.
We note for the reader that we almost immediately define our local out-degree (Definition~\ref{def:local_out_degree}), and only use local out-degree in proofs. Hence, the reader is not required to have a thorough understanding of the following:

\begin{definition}[Definition 2.2 in ~\cite{Prac2}]
    Let $G = (V, E)$ be a weighted graph where an edge $e$ has weight $\gee(e)$. 
    Let $B \subseteq V$.
    For any $X \subseteq V - B$, we define the \emph{quotient edges} $\hat{E}_B(X)$ as all edges in $G$ with one endpoint in $X$, and the other endpoint in $X$ or $B$.  We define:  
    \begin{itemize}
        \item for $X \subseteq V - B$, the \emph{quotient subgraph density} $\hat{\rho}_B(X) :=  \frac{1}{|X|} \sum\limits_{e \in \hat{E}_B(X)} \gee(e)$.
        \item the \emph{maximum quotient density} $\hat{\rho}_B(G) := \max\limits_{X \subseteq V - B } \, \hat{\rho}_B(X)$. 
    \end{itemize} 
\end{definition}

\begin{definition}[Definition 2.3 in ~\cite{Prac2}]
\label{def:diminishing}
Given a weighted undirected graph $G = (V, E)$, we define the diminishing-dense decomposition $\mathcal{B}$ of $G$ as the sequence $B_0 \subset B_1 \ldots \subset B_\ell = V$:

We define $B_0  = \emptyset$. 
For $i \geq 1$ if $B_{i-1} =V$ then $\ell := i$. 
Otherwise:
\[
S_i := arg  \, \max_{X \subseteq V -  B_{i-1}} \hat{\rho}_{B_{i-1}}(X), \textnormal{ and }  B_i := B_{i-1} \cup S_i.
\]
\end{definition}

\begin{definition}[Definition 2.3 in~\cite{Prac2}]
    \label{def:localdensity}
    Given a weighted undirected graph $G = (V, E)$ and a diminishing-dense decomposition $\mathcal{B}$, each vertex $v \in V$ has one integer $i$ where $v \in S_i$. We define the \emph{local density} $\rho^*(v) := \hat{\rho}_{B_{i-1}}(S_i)$.  
\end{definition}

\subparagraph{The benefit of local measures:}
Problem~\ref{oldproblem:value}'s variants have drawbacks in a distributed model of computation. Problem~\ref{oldproblem:value}.1 has an $\Omega(d)$ lower bound (making it trivial in LOCAL). Problem~\ref{oldproblem:value}.2 allows some vertices to output nonsense. 
The definition of local density alleviates these issues, as we may define an algorithmic problem which we consider to be more natural: 

\begin{problem}
    \label{newproblem:value}
    Given $(G, \eps)$, each vertex $v$ outputs $\rho_v \in [(1 + \eps)^{-1} \rho^*(v) , (1 + \eps) \rho^*(v) ]$. 
\end{problem}

\section{Results and organisation}

Now we are ready to formally state our contributions.
Our primary contribution is that we show a dual definition to local density, which we call the local out-degree:

\begin{definition}\label{def:local_out_degree}
    Given a graph $G = (V, E)$, we define the \emph{local out-degree} as:
    \[
 \gee^*(u) := \textnormal{ the out-degree } \gee(u) \textnormal{ in any locally fair fractional orientation of } G.
\]
\end{definition}

\noindent
It is not immediately clear that the local out-degree is well-defined. We prove (Theorem~\ref{thm:localDensity}) that each vertex in $G$ has the same out-degree across all locally fair orientations of $G$ (and thus, the set of all locally fair orientations of $G$ assigns to each vertex a real value). 
We believe that the local out-degree is conceptually simpler that the local density. Through this definition, we are able to show various algorithms to approximate the local density.

\renewcommand\thesubsubsection{\thesection.\Alph{subsubsection}}
\subsubsection{Conceptual results for local density }
We prove in Section~\ref{sec:definition} that these local definitions generalise the global definition of subgraph density and out-degree, as they exhibit the same dual behaviour. We show several previously unknown properties of the local density,  which we consider to be of independent interest:

\begin{restatable}{theorem}{fairImpliesDense}
\label{thm:localDensity}
    For any weighted graph $G$, $\forall v \in V$, $\gee^*(v)$ is well-defined and equals $\rho^*(v)$.
\end{restatable}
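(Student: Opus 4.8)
The plan is to prove two things: first, that every locally fair fractional orientation assigns the \emph{same} out-degree to each vertex (well-definedness), and second, that this common value coincides with $\rho^*(v)$ from the diminishing-dense decomposition. I would begin with well-definedness. Suppose $\overrightarrow{G}_1$ and $\overrightarrow{G}_2$ are two locally fair orientations and that they disagree on out-degrees somewhere. Consider the set $U$ of vertices where $\overrightarrow{G}_1$ assigns strictly larger out-degree than $\overrightarrow{G}_2$ (or the reverse), picking $U$ to be the vertices attaining the maximum discrepancy, or more robustly the super-level set $\{u : \gee_1(u) \ge t\}$ for a threshold $t$ chosen just below the largest out-degree in $\overrightarrow{G}_1$. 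The key structural fact from local fairness is that mass only flows "uphill" in out-degree: if $\gee_1(u\tor v)>0$ then $\gee_1(u)\le \gee_1(v)$. So in $\overrightarrow{G}_1$, any edge leaving $U$ (toward lower out-degree) carries zero $\overrightarrow{G}_1$-flow out of $U$; equivalently all the weight of the cut edges, plus all internal edges of $U$, is oriented into/within $U$, forcing $\sum_{u\in U}\gee_1(u)$ to equal the total weight of $\hat E_{V\setminus U}(U)$-type edges. I would then compare this to the total out-degree $\sum_{u\in U}\gee_2(u)$ in the other orientation, which is at most that same edge-weight total (since in any orientation the out-degrees of a set sum to at most the weight of all edges incident to the set from inside). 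This sandwiches the two sums and contradicts the assumed strict discrepancy on $U$; iterating down the level sets (peeling off the top set, then the next, etc.) yields equality vertex by vertex. This peeling/level-set argument is, I expect, the main obstacle — getting the exact bookkeeping of which edges are "saturated into $U$" right, and handling the boundary case where out-degrees on the cut are equal (so fairness permits flow in either direction) without breaking the inequality chain.

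For the identification with $\rho^*$, I would induct on the diminishing-dense decomposition $B_0\subset B_1\subset\cdots\subset B_\ell=V$. The claim is that in any locally fair orientation, every $v\in S_i$ has $\gee(v)=\hat\rho_{B_{i-1}}(S_i)=:\rho^*(v)$, and moreover all edges in $\hat E_{B_{i-1}}(S_i)$ are oriented entirely within $S_i\cup B_{i-1}$ "away from" later sets — i.e., no flow enters $S_i$ from $S_j$ with $j>i$. The base of the intuition is the top set $S_1$, the (global) densest subgraph: by Lemma~\ref{lem:equalmaxima} its density equals $\Delta^{\min}(G)$, and a locally fair orientation restricted to $S_1$ must be a \emph{minimum} orientation of the subgraph on $E(S_1)$ with every vertex at out-degree exactly $\rho(S_1)$ — if some vertex in $S_1$ had smaller out-degree, fairness would let us route mass off the max-out-degree vertices, contradicting that $\rho(S_1)=\Delta^{\min}$ is the floor; and fairness also forbids extra mass flowing into $S_1$ from outside because vertices outside $S_1$ have out-degree $<\rho(S_1)$ (this is where the strict "diminishing" nature of the decomposition, $\hat\rho_{B_i}(S_{i+1})<\hat\rho_{B_{i-1}}(S_i)$, is used). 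For the inductive step, I would contract $B_{i-1}$ and apply the same densest-quotient-subgraph reasoning to $\hat\rho_{B_{i-1}}(\cdot)$, using the inductive hypothesis that edges incident to $B_{i-1}$ from $S_i$ are all oriented into $B_{i-1}$ (so they contribute fully to $\gee(v)$ for $v\in S_i$) and that no mass enters $S_i$ from $S_j$, $j>i$.

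The two halves combine cleanly: well-definedness says "$\gee(v)$ in a locally fair orientation" is a legitimate definition, and the decomposition argument pins its value to $\rho^*(v)$. One subtlety worth flagging is \emph{existence} of a locally fair orientation for weighted $G$ — I would note this follows either from the LP feasibility of FO together with a local balancing/flow-augmentation argument (repeatedly pushing mass from higher to lower out-degree along violating edges converges), or simply by exhibiting the orientation built layer-by-layer from the decomposition itself, which is the natural witness and doubles as the starting point for the identification proof. I expect the write-up to lean on the decomposition-based construction as the canonical locally fair orientation, prove it \emph{is} locally fair directly from the strict density drop between layers, and then invoke the well-definedness theorem to transfer its out-degree values to every other locally fair orientation.
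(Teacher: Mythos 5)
Your proposal takes a genuinely different route from the paper on all three sub-tasks, but the well-definedness argument as you sketch it has a concrete error that would need to be repaired.

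\textbf{Well-definedness.} You propose a level-set/peeling argument with $U = \{u : \gee_1(u) \ge t\}$. Local fairness does force every cut edge $\overline{uv}$ with $u\in U$, $v\notin U$ to have $\gee_1(u \tor v) = 0$, so the cut edges are oriented \emph{into} $U$. But that means those edges contribute to the out-degrees of vertices \emph{outside} $U$, not inside; the correct identity is $\sum_{u\in U}\gee_1(u) = \sum_{e\in E[U]}\gee(e)$ (internal edges only), not the total weight of all edges incident to $U$ as you write. Consequently, the comparison with $\overrightarrow{G}_2$ gives $\sum_{u\in U}\gee_2(u) \ge \sum_{e\in E[U]}\gee(e) = \sum_{u\in U}\gee_1(u)$, which is the \emph{opposite} of the upper bound you claim and does not by itself yield a contradiction for an arbitrary super-level set of $\overrightarrow{G}_1$. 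A two-sided peeling argument can be completed (peel the maximal-out-degree set, show it coincides across orientations, recurse on the residual instance with the cut edges treated as fixed charges), but this is considerably more delicate than your sketch, and you would need to be explicit about what ``restriction to $G \setminus U$'' means once cut edges have been pinned. The paper avoids this entirely: it forms the symmetric-difference digraph $S$ on edges where $\gee(a\tor b) > \gee'(a\tor b)$, removes directed cycles, follows a path from a vertex with $\gee(u)>\gee'(u)$ to a sink $v$, and uses local fairness of $\overrightarrow{G}$ along the path and of $\overrightarrow{G}'$ along the reverse path to derive $\gee(u)\le\gee(v) < \gee'(v)\le\gee'(u) < \gee(u)$, a contradiction at the vertex level with no set-level bookkeeping.

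\textbf{Existence.} Your first suggested route (repeatedly push mass from higher to lower out-degree until convergence) is precisely the Bera--Bhattacharya--Choudhari--Ghosh argument that the paper explicitly flags as not clearly convergent: a single transfer can create new violations elsewhere, and no potential/termination argument is given. The paper fixes this by observing that the quadratic program $\mathbf{FO}^2$ (minimise $\sum_u \gee(u)^2$ over the FO polytope) has an optimum by compactness, and that any optimum is necessarily an orientation that is locally fair; this simultaneously supplies existence. Your alternative --- building a locally fair orientation layer-by-layer from the diminishing-dense decomposition --- is plausible, but you would have to prove it is locally fair, which requires the strict density drop between consecutive $S_i$ and a careful edge-orientation rule within each layer; none of that is worked out.

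\textbf{Identification with $\rho^*$.} Here the routes diverge most. The paper does not re-derive this from scratch: it shows the optimal $\mathbf{FO}^2$ solution is a locally fair orientation and then cites Danisch--Chan--Sozio (their Corollary~4.4) for the fact that the optimal out-degrees in $\mathbf{FO}^2$ equal the local densities, and then invokes well-definedness to transfer that to all locally fair orientations. Your plan is a direct induction over the decomposition $B_0\subset\cdots\subset B_\ell$, showing layer by layer that $\gee(v)=\hat\rho_{B_{i-1}}(S_i)$ and that no flow enters $S_i$ from later layers. This is a more self-contained (and arguably more illuminating) route, and the key structural facts you identify --- the strict decrease of quotient densities and the forced orientation of edges incident to earlier layers --- are exactly the right ones. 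But it is a substantially longer argument than the paper's, and you would need to manage several invariants simultaneously (flow across $B_{i-1}$, flow from later layers into $S_i$, equality within $S_i$), none of which is worked through. On balance: the intended decomposition-level intuition is sound, the existence discussion needs the QP (or a fully developed layered construction) rather than the convergence heuristic, and the level-set identity in the well-definedness step is simply wrong as stated and needs replacing.
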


\begin{restatable}{corollary}{equal}
\label{cor:equal}
Given a weighted graph $G$, $\rho^{\max}(G)= \Delta^{\min}(G) = \max_v \gee^*(v)$.
\end{restatable}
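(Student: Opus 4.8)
The plan is to derive Corollary~\ref{cor:equal} directly from Theorem~\ref{thm:localDensity} and Lemma~\ref{lem:equalmaxima}, so the real content is the identity $\rho^{\max}(G) = \max_v \gee^*(v)$, since $\rho^{\max}(G) = \Delta^{\min}(G)$ is already Lemma~\ref{lem:equalmaxima}. By Theorem~\ref{thm:localDensity} we know $\gee^*(v) = \rho^*(v)$, so equivalently I want $\rho^{\max}(G) = \max_v \rho^*(v)$; alternatively I can argue purely in terms of out-degrees, which I think is cleaner given the tools at hand.

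First I would fix a locally fair fractional orientation $\overrightarrow{G}$ of $G$; such an orientation exists (this is presumably established where Theorem~\ref{thm:localDensity} is proved, and I may assume it). By definition of $\gee^*$, the out-degree of each vertex $v$ in $\overrightarrow{G}$ is exactly $\gee^*(v)$. Since $\overrightarrow{G}$ is itself a feasible orientation, $\Delta(\overrightarrow{G}) = \max_v \gee^*(v) \geq \Delta^{\min}(G)$. For the reverse inequality, I would take an optimal orientation $\overrightarrow{G}_{\min}$ achieving $\Delta^{\min}(G)$ and argue that no vertex in the locally fair orientation can have out-degree exceeding $\Delta^{\min}(G)$. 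The natural way: suppose $v$ maximises $\gee(v)$ in the locally fair orientation $\overrightarrow{G}$. Consider the set $S$ of vertices reachable from $v$ via a directed path of positive-weight arcs in $\overrightarrow{G}$ (including $v$ itself). Local fairness forces $\gee(u) = \gee(v)$ for every $u \in S$ (out-degrees are non-decreasing along arcs, and $v$ is maximal, so they are all equal), and moreover no positive-weight arc leaves $S$. Hence all edges with at least one endpoint in $S$ have both endpoints in $S$ \emph{or} are oriented entirely into $S$; counting edge weight, $\sum_{u \in S} \gee(u) = \gee(v)\cdot|S|$ is at most the total weight of edges incident to $S$, which is at most $|S|$ times... — more precisely, $\gee(v) \cdot |S| = \sum_{u\in S}\gee(u) \le \sum_{e \in E(G[S])}\gee(e) \le \rho^{\max}(G)\cdot|S|$ would give $\max_v\gee^*(v) \le \rho^{\max}(G)$, using $E(G[S])$ denotes edges with both endpoints in $S$. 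Combined with $\max_v \gee^*(v) \ge \Delta^{\min}(G) = \rho^{\max}(G)$ from the previous paragraph, this closes the loop.

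Actually, the slickest route avoids re-proving anything: Theorem~\ref{thm:localDensity} gives $\gee^*(v) = \rho^*(v)$, and the diminishing-dense decomposition satisfies $\hat\rho_{B_0}(S_1) = \hat\rho_\emptyset(S_1) = \max_{X\subseteq V}\rho(X) = \rho^{\max}(G)$ by Definition~\ref{def:diminishing} with $B_0 = \emptyset$; moreover the quotient densities are non-increasing along the decomposition (this is why it is called ``diminishing''), so $\rho^*(v) = \hat\rho_{B_{i-1}}(S_i) \le \hat\rho_{B_0}(S_1) = \rho^{\max}(G)$ for all $v$, with equality for $v \in S_1$. Hence $\max_v \rho^*(v) = \rho^{\max}(G)$, and then Lemma~\ref{lem:equalmaxima} supplies $\rho^{\max}(G) = \Delta^{\min}(G)$, finishing the corollary. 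I would present both observations but lead with whichever of ``quotient densities are diminishing'' or ``reachable set argument'' is cheapest given what has already been shown.

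The main obstacle is justifying the monotonicity claim used in whichever argument I pick: either that the quotient densities $\hat\rho_{B_{i-1}}(S_i)$ are non-increasing in $i$ (a standard but not entirely trivial fact about dense decompositions, which may already be in~\cite{Prac2} and citeable), or that in a locally fair orientation the ``top'' reachable set $S$ has no positive-weight arc leaving it and hence realises density at least $\max_v \gee^*(v)$. Both are short, but one must be careful that the fractional (weighted) setting does not break the counting — in particular that $\sum_{u \in S}\gee(u)$ double-counts nothing and is bounded by the weight of edges internal to $S$, which is where feasibility of the orientation restricted to $G[S]$ and the definition of $\rho^{\max}$ as a maximum over \emph{all} subgraphs both get used. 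Everything else is bookkeeping and an appeal to Lemma~\ref{lem:equalmaxima} and Theorem~\ref{thm:localDensity}.
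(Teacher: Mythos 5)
Your proposal is correct, and your ``slickest route'' is precisely the paper's one-line proof: combine Theorem~\ref{thm:localDensity} ($\gee^*(v) = \rho^*(v)$), Lemma~\ref{lem:equalmaxima} ($\rho^{\max}(G) = \Delta^{\min}(G)$), and the fact $\max_v \rho^*(v) = \rho^{\max}(G)$, which the paper simply cites from~\cite{Prac2} rather than re-deriving the monotonicity of the quotient densities you were worried about. Your reachable-set argument is a valid and pleasantly self-contained alternative: with $v$ of maximum out-degree in a locally fair orientation (which exists by the quadratic-program argument in the proof of Theorem~\ref{thm:localDensity}), the set $S$ reachable from $v$ via positive-weight arcs has all out-degrees equal to $\gee(v)$ and no positive weight leaving $S$, so in fact $\rho(G[S]) = \gee(v)$ with equality, giving $\max_v\gee^*(v) \le \rho^{\max}(G)$ without any appeal to the Danisch--Chan--Sozio decomposition; the reverse inequality $\max_v\gee^*(v) = \Delta(\overrightarrow{G}) \ge \Delta^{\min}(G) = \rho^{\max}(G)$ comes from feasibility and Lemma~\ref{lem:equalmaxima}. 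The paper chose the citation route since~\cite{Prac2} already supplies the needed identity; your alternative buys independence from that reference at the modest cost of a short reachability argument.
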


\begin{restatable}{corollary}{alwaysFair}
\label{cor:alwaysfair}
    For any graph $G$, there exists a fractional orientation $\overrightarrow{G}$ that is locally fair. 
\end{restatable}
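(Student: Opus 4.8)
\textbf{Proof proposal for Corollary~\ref{cor:alwaysfair}.}

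The plan is to derive the existence of a locally fair fractional orientation directly from the linear program FO, rather than constructing one by hand. First I would invoke Lemma~\ref{lem:equalmaxima}: the optimal value $D$ of FO equals $\Delta^{\min}(G)$, and moreover the argument in that proof shows that an optimal solution to FO can be taken to be an honest orientation (every edge constraint tight). So fix an optimal orientation $\overrightarrow{G}$ achieving $\Delta(\overrightarrow{G}) = \Delta^{\min}(G)$. The idea is then to repair $\overrightarrow{G}$ into a locally fair one by a potential-function / flow-rebalancing argument: as long as there is a directed edge $(u,v)$ with $\gee(u\tor v) > 0$ but $\gee(u) > \gee(v)$, push an infinitesimal amount of orientation mass from the $u\tor v$ direction to the $v\tor u$ direction. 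This weakly decreases $\gee(u)$ and weakly increases $\gee(v)$, and one checks it never increases the maximum out-degree (or more robustly, never increases $\sum_w \gee(w)^2$, which strictly decreases under such a swap when $\gee(u) > \gee(v)$).

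The key steps, in order: (1) quote Lemma~\ref{lem:equalmaxima} and extract an optimal orientation; (2) define the potential $\Phi(\overrightarrow{G}) = \sum_{w\in V}\gee(w)^2$ and show a single rebalancing move along a ``violated'' edge strictly decreases $\Phi$ while keeping the object a valid orientation; (3) argue the process converges to a fixed point which, by definition, is locally fair. The convergence in step (3) is the delicate part: since this is a continuous optimisation, a naive greedy process need not terminate in finitely many steps. The cleanest fix is to avoid iteration entirely and instead observe that $\Phi$ is a strictly convex function minimised over the (compact, convex) polytope of orientations of $G$; the minimiser exists and is unique, and at the minimiser no improving swap exists, hence it is locally fair. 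So I would phrase step (2)–(3) as: let $\overrightarrow{G}$ be the orientation minimising $\Phi$ over all orientations of $G$ (this set is nonempty by Lemma~\ref{lem:equalmaxima} and compact, so a minimiser exists); if some edge violated local fairness, the swap above would yield an orientation of strictly smaller $\Phi$, contradicting minimality.

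I expect the main obstacle to be making the ``swap strictly decreases $\Phi$'' computation airtight in the fractional setting and confirming the set of orientations (with the edge constraints possibly slack, or forced tight) is genuinely compact and convex so that the minimiser is attained — but both are routine once set up. An alternative, if one wants to stay closer to the paper's machinery, is simply to note that Theorem~\ref{thm:localDensity} already asserts $\gee^*(v)$ is well-defined \emph{via} ``the out-degree in any locally fair orientation,'' whose very statement presupposes such orientations exist; so this corollary is really the existence half of that theorem, and can be cited as the portion of the proof of Theorem~\ref{thm:localDensity} that constructs a locally fair orientation. I would include the standalone $\Phi$-minimisation argument as the self-contained proof, since it is short and does not rely on the (longer) duality argument behind Theorem~\ref{thm:localDensity}.
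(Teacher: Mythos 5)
Your proposal is correct and is essentially the paper's own argument. The paper proves Theorem~\ref{thm:localDensity} via the quadratic program $\textbf{FO}^2$ that minimises exactly the potential $\sum_{v} \gee(v)^2$ over the feasible region of FO (shown to be a compact convex polytope whose optimum is an orientation), and shows any optimal solution must be locally fair via the same rebalancing-swap contradiction; Corollary~\ref{cor:alwaysfair} is then stated to follow immediately because a continuous convex objective over a compact convex domain attains its minimum. You identified the same potential, the same compactness observation, and the same reason to sidestep the naive iterative procedure. One small remark: citing Lemma~\ref{lem:equalmaxima} to get nonemptiness of the orientation polytope is unnecessary overhead, since the all-halves orientation $\gee(u\tor v)=\gee(v\tor u)=\tfrac{1}{2}\gee(\overline{uv})$ is always available.
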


\subsubsection{Results for dynamic algorithms}
We show in Section~\ref{sec:etafair} that $\eta$-fair orientations imply approximations for our local measures:

\begin{restatable}{theorem}{approx}
    \label{thm:approx}
    Let $G$ be a weighted graph and $\overrightarrow{G}$ be an $\eta$-fair fractional orientation for $\eta \leq \frac{\eps^2}{128 \cdot \log n}$. Then $\forall v \in V$:  $
     (1 + \eps)^{-1}\rho^*(v) \leq \gee(v) \leq (1 + \eps)\rho^*(v).
    $   
\end{restatable}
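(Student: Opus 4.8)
The plan is to leverage Theorem~\ref{thm:localDensity}, so that it suffices to compare the out-degree $\gee(v)$ in an $\eta$-fair orientation $\overrightarrow{G}$ against the out-degree $\gee^*(v)$ in a \emph{locally fair} orientation. The natural route is to show that the two orientations cannot differ too much on the out-degrees, vertex by vertex, using the structure of the diminishing-dense decomposition $\mathcal{B} = B_0 \subset B_1 \subset \cdots \subset B_\ell$. Recall that on the level set $S_i = B_i \setminus B_{i-1}$, a locally fair orientation assigns every vertex the value $\gee^*(v) = \hat{\rho}_{B_{i-1}}(S_i)$, and these values are non-increasing in $i$ (this is exactly the ``diminishing'' property, which I would first extract as a lemma or cite from the earlier sections). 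So the strategy is: (1) show that in an $\eta$-fair orientation, out-degrees are ``almost monotone'' along the decomposition, i.e.\ any edge leaving $S_i$ towards $B_{i-1}$ or staying inside carries most of its weight ``downhill''; (2) use a counting/averaging argument over the $j$ highest level sets $B_j = S_1 \cup \cdots \cup S_j$ together with the fact that every edge with both endpoints in $B_j$ must be oriented within $B_j$ (no out-edge can escape a down-closed set to a vertex of strictly smaller out-degree outside it, up to the $(1+\eta)$ slack) to pin $\gee(v)$ to within a $(1\pm O(\eta \log n))$ factor of $\gee^*(v)$.

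Concretely, for the upper bound $\gee(v) \le (1+\eps)\rho^*(v)$: suppose $v \in S_i$, so $\gee^*(v) = \hat{\rho}_{B_{i-1}}(S_i) =: \rho_i$. Consider the ``high'' set $T = \{ u : \gee(u) \ge (1+\eta)^{-k}\gee(v)\}$ for a well-chosen threshold index $k \approx \Theta(\eps^{-1})$ or $k \approx \Theta(\eps^{-1}\log n)$-style cutoff. By $\eta$-fairness, no edge leaves $T$ (every out-edge from $u\in T$ goes to a vertex of out-degree $\ge (1+\eta)^{-1}\gee(u)$, hence essentially stays in $T$ for a sufficiently deep threshold), so $\sum_{u \in T}\gee(u) = \sum_{e \in E(G[T])}\gee(e) \le |T| \cdot \hat\rho_{\emptyset}(T)$-type bound; then relate $\hat\rho$ on $T$ back to the decomposition to argue $\hat\rho(T) \le \rho_{i}$ (using that $T$ lies within the high levels $B_j$ and the max-quotient-density optimality of each $S_i$). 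Combined with $|T|\cdot (1+\eta)^{-k}\gee(v) \le \sum_{u\in T}\gee(u)$, one gets $\gee(v) \le (1+\eta)^{k}\rho_i \le (1+\eps)\rho^*(v)$ provided $(1+\eta)^k \le 1+\eps$, which holds because $k \eta = O(\eps^2/\log n \cdot \log n / \eps) = O(\eps)$ with the stated bound on $\eta$. The lower bound is symmetric, using a ``low'' set and the fact that $v$'s level set $S_i$ has a subgraph (quotient) of density $\rho_i$ that forces enough edge weight to be oriented out of the low-out-degree vertices.

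The main obstacle I anticipate is controlling how the $\eta$-slack \emph{compounds} across the up-to-$\ell$ level sets: a single $\eta$ factor per ``hop down'' in out-degree value is harmless, but if the out-degrees in $\overrightarrow{G}$ span a range requiring $\Theta(\log n)$ nested threshold sets, the naive product $(1+\eta)^{\Theta(\log n)}$ is $n^{\Theta(\eta)}$, which is only $1+\eps$ when $\eta = O(\eps/\log n)$ — and the hypothesis gives the stronger $\eta = O(\eps^2/\log n)$, so there should be room, but making the geometric-series bookkeeping tight (and justifying why $\Theta(\log n)$ levels, rather than more, suffice — presumably because out-degrees are bounded by $\mout \le \poly n$ and bounded below by a constant on any nonempty level, or because $\rho^*$ ranges over a polynomially-bounded set) is the delicate part. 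A secondary technical point is handling \emph{fractional} orientations and weighted edges cleanly in the counting argument: the ``no edge leaves a down-closed set'' statement must be phrased as ``at most an $\eta$-fraction of edge weight leaks out per level,'' and summing these leakages is where the $\eps^2/\log n$ (as opposed to $\eps/\log n$) bound is likely consumed. I would organise the proof around a single potential $\Phi_j = \sum_{u : \gee(u) \ge \tau_j}\gee(u)$ for a geometric sequence of thresholds $\tau_j$, show $\Phi_j \le (1+O(\eta))\Phi_{j-1}$-style recursions, and unwind.
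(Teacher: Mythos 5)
Your high-level plan — threshold sets keyed on out-degree in $\overrightarrow{G}$, a $(1+\eps/16)$-growth stopping criterion to bound the threshold depth $k$ by $O(\eps^{-1}\log n)$ (since sets can only grow geometrically $\log_{1+\eps/16}n$ times), and the $\eta = O(\eps^2/\log n)$ budget so that $(1+\eta)^k = 1+O(\eps)$ — does match the paper's proof architecture, and Theorem~\ref{thm:localDensity} is indeed the entry point. But the upper bound as sketched has a genuine gap at its central step. The claim $\hat\rho(T) \le \rho^*(v)$ for the \emph{globally defined} set $T = \{u : \gee(u) \ge (1+\eta)^{-k}\gee(v)\}$ is false: $T$ also contains vertices from denser regions of $G$ (e.g.\ a distant clique), where both the $\eta$-fair out-degree and the local density far exceed $\rho^*(v)$, so the induced density of $T$ can be arbitrarily larger than $\rho^*(v)$. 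Nothing in $\eta$-fairness alone prevents this. The fix the paper uses — and which your sketch is missing — is to first invoke a \emph{locally fair} orientation $\overrightarrow{G}_x$ (existence from Corollary~\ref{cor:alwaysfair}) and partition $V$ into $G^1 = G[u : \gee_x(u) \le \gamma]$ and $G^2 = G[u : \gee_x(u) > \gamma]$, where $\gamma = \rho^*(v)$; the threshold sets $G^1_i$ are taken \emph{inside $G^1$ only}. The remaining leakage, namely edges from $G^1_{k+1}$ to $G^2$, is then charged to $\sum_{u\in V(G^1_{k+1})}\gee_x(u)\le\gamma|V(G^1_{k+1})|$ using the fact that local fairness forces all weight on a $G^1$-to-$G^2$ edge to be oriented from $G^1$ to $G^2$. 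Without this second orientation and the $G^1$/$G^2$ split, there is no way to cap the edge mass inside the threshold set.

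A secondary point: the two directions are not symmetric in the paper, so "the lower bound is symmetric" will not hold up. The paper's upper bound uses the locally fair orientation and never touches the diminishing-dense decomposition; the lower bound uses the decomposition directly (defining $H_j = \{w \in S_i : \gee(w) \le \gee(u)(1+\eta)^j\}$, finding a stopping $k$, and showing $\hat\rho_{B_{i-1}}(S_i - H_k) > \rho^*(u)$ contradicts the max-quotient-density optimality of $S_i$) and never needs the locally fair orientation. Ironically, your upper-bound sketch — with its reliance on quotient densities and the decomposition — is structurally closer to the paper's \emph{lower}-bound argument, which suggests you had the right tools in hand but aimed them at the wrong direction.
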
                           

\noindent
This immediately implies the following Corollary by applying~\cite{chekuri2024adaptive}:
\begin{restatable}{corollary}{dynamic}
    \label{cor:dynamic}
    There exists an algorithm~\cite{chekuri2024adaptive} that can fractionally orient a dynamic unit-weight graph  $G$ with $n$ vertices subject to edge insertions and deletions with  deterministic worst-case $O(\eps^{-6} \log^4 n))$ update time such that for all  $v \in V$:
    \[ \gee(v) \in [(1 + \eps)^{-1} \rho^*(v), (1 + \eps) \rho^*(v)]. \]
\end{restatable}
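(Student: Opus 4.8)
The plan is to derive the corollary as an essentially immediate consequence of Theorem~\ref{thm:approx} together with the existing dynamic data structure of Chekuri et al.~\cite{chekuri2024adaptive}. Their algorithm, when instantiated with fairness parameter $\eta$, maintains on a dynamic unit-weight graph a fractional orientation $\overrightarrow{G}$ that is $\eta$-fair after every update, with deterministic worst-case update time polynomial in $\eta^{-1}$, $\log n$, and $\log \rho^{\max}(G)$. Since Theorem~\ref{thm:approx} guarantees that in \emph{any} $\eta$-fair orientation with $\eta \le \eps^2/(128\log n)$ the out-degree $\gee(v)$ of every vertex already lies in $[(1+\eps)^{-1}\rho^*(v),(1+\eps)\rho^*(v)]$, the whole argument reduces to two bookkeeping steps: (i) run the Chekuri et al.\ structure with $\eta := \eps^2/(128\log n)$, so that the hypothesis of Theorem~\ref{thm:approx} holds at all times, and (ii) verify that with this choice the update time collapses to $O(\eps^{-6}\log^4 n)$.

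For step (ii) I would simply substitute $\eta^{-1} = \Theta(\eps^{-2}\log n)$ into their complexity bound and use that for a unit-weight graph $\rho^{\max}(G) \le n$, so $\log \rho^{\max}(G) = O(\log n)$; the stated running time then follows by a routine calculation, with no need to reopen the analysis of their data structure. The output of the algorithm is exactly the maintained orientation: each vertex $v$ reports its current out-degree $\gee(v)$, and correctness at every point in time is precisely the conclusion of Theorem~\ref{thm:approx} applied to the orientation currently stored.

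The only point I would check with care is that the orientation maintained \emph{internally} by Chekuri et al.\ is $\eta$-fair in the exact sense used here, namely $\gee(u\tor v) > 0 \Rightarrow \gee(u) \le (1+\eta)\gee(v)$, rather than merely satisfying some weaker invariant that happens to suffice for extracting an approximate densest subgraph. Their analysis does establish this stronger invariant, so I expect no genuine obstacle — the substance of the result lives entirely in Theorem~\ref{thm:approx}, and this corollary is the bridge that transports that static statement into the dynamic setting. If one wanted to be fully self-contained, the mild subtlety to address is that Theorem~\ref{thm:approx} is stated for a fixed graph, so it must be applied afresh to the graph present after each update; since $n$ is an upper bound on the vertex count throughout, the choice of $\eta$ above remains valid across the entire update sequence.
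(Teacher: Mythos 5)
Your proposal is correct and follows exactly the paper's route: the paper also derives the corollary by running the Chekuri et al.\ structure with $\eta = \eps^2/(128\log n)$ so that Theorem~\ref{thm:approx} applies to the maintained $\eta$-fair orientation, and then reads off the update time by substituting this $\eta$ (and $\log\rho^{\max}(G) = O(\log n)$ for unit-weight graphs) into their bound. The paper treats this as an immediate consequence and gives no further argument; your additional sanity checks (matching the $\eta$-fairness invariant, re-applying the static theorem after each update) are sensible elaborations of the same idea.
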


\subsubsection{Results in LOCAL}
The local density as a measure is not entirely local. 
However, we prove in Section~\ref{sec:local} that far-away subgraphs affect the local density of a vertex $v$ only marginally:

\begin{restatable}{theorem}{newlocal}
\label{thm:newlocal}
    Let $G$ be a unit-weight graph. 
    For any $\eps > 0$ and vertex $v$, denote by $\rho^*(v)$ its local density and by $\rho^*_k(v)$ its local density in $H_k(v)$. Then $\rho^*_k(v) \in [ (1 + \eps)^{-1} \rho^*(v), (1 + \eps) \rho^*(v)]$ for $k \in \Theta( \eps^{-2} \log^2 n)$. \end{restatable}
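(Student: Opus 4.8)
The plan is to exploit the dual characterisation from Theorem~\ref{thm:localDensity}: instead of reasoning about the diminishing-dense decomposition directly, I work with locally fair fractional orientations and show that the out-degree $\gee^*(v)$ is essentially unaffected by the graph beyond a $\Theta(\eps^{-2}\log^2 n)$-hop ball around $v$. Concretely, fix a locally fair orientation $\overrightarrow{G}$ of $G$ and a locally fair orientation $\overrightarrow{H}$ of $H_k(v)$; by Theorem~\ref{thm:localDensity} their out-degrees at $v$ are $\rho^*(v)$ and $\rho^*_k(v)$ respectively, so it suffices to compare these two numbers. The key structural fact I would establish first is that in a locally fair orientation the out-degrees can only grow along directed edges, so starting from $v$ and following edges in the direction of increasing out-degree produces, for each threshold $t$, a "high-out-degree region'' $U_t = \{u : \gee(u) \ge t\}$ that is closed under in-neighbours and whose boundary carries a controlled amount of flow. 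This is the same machinery presumably used to prove Theorem~\ref{thm:approx}, and I would lean on it heavily.

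The heart of the argument is a layering / amortisation over out-degree levels, analogous to the standard argument that an $\eta$-fair orientation is a $(1+\eps)$-approximation when $\eta \le \eps^2/(128\log n)$. Suppose, for contradiction, that $\rho^*_k(v)$ and $\rho^*(v)$ differ by more than a $(1+\eps)$ factor. Because $\rho^*_k(v)$ is the out-degree of $v$ in a locally fair orientation of the strictly smaller graph $H_k(v)$, and adding the rest of $G$ back can only increase $v$'s out-degree (edges leaving $H_k(v)$ may now need to be oriented, and the fairness constraints propagate inward), the interesting direction is $\rho^*(v) > (1+\eps)\rho^*_k(v)$. I would then build a nested sequence of vertex sets $v = X_0 \subseteq X_1 \subseteq \cdots$ by repeatedly taking, at level $i$, all vertices reachable from the current set along directed edges whose tail has out-degree within a $(1+\Theta(\eps/\log n))$ factor of the current level; local fairness forces each such set to be "self-supporting'' in the sense that its induced quotient density is close to the out-degree level. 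If the out-degree at $v$ stays large for $\Omega(\eps^{-2}\log^2 n)$ levels while shrinking by a $(1+\eps)$ factor overall, each level drops the out-degree by at most a $(1+\Theta(\eps^2/\log n))$ factor, and I get a chain of $\Omega(\eps^{-2}\log^2 n)$ strictly nested sets, hence a path of that length leaving $v$ — so all of this structure lives inside $H_k(v)$ for $k \in \Theta(\eps^{-2}\log^2 n)$, and it certifies that $v$ already has out-degree close to $\rho^*(v)$ in $H_k(v)$, contradicting $\rho^*_k(v) < (1+\eps)^{-1}\rho^*(v)$. The bookkeeping here is where the $\log^2 n$ (rather than $\log n$) comes from: one $\log n$ factor is the number of geometrically-spaced out-degree scales between $1$ and $\rho^{\max}(G) \le n$, and the second $\log n$ (times $\eps^{-2}$) is how many near-flat levels you must cross within a single scale before the accumulated loss reaches a constant — exactly as in the $\eta$-fair recourse bound cited in the commentary.

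To turn the contradiction into the clean two-sided bound, I would argue both inequalities symmetrically: the "easy'' inequality $\rho^*_k(v) \le \rho^*(v)\cdot(1+\eps)$ (in fact one expects $\rho^*_k(v)\le\rho^*(v)$ up to the approximation slack, since $H_k(v)\subseteq G$ but restricting the graph can both remove edges and remove the fairness pressure, so a small $\eps$ cushion is the safe statement), and the harder $\rho^*_k(v)\ge (1+\eps)^{-1}\rho^*(v)$ via the nesting argument above. I expect the main obstacle to be making the "self-supporting set'' step rigorous in the \emph{weighted} fractional setting with the quotient-density definition: showing that the region of vertices above a given out-degree threshold, together with the flow crossing its boundary, witnesses a quotient subgraph whose density is within $(1+\eps)$ of that threshold, and that this survives restriction to $H_k(v)$. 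I would handle this by an integration-over-thresholds computation in the spirit of the proof of Lemma~\ref{lem:equalmaxima} (Equations~\eqref{eq:lessthan1}–\eqref{eq:equalsr}), combined with the level-counting to control how the threshold can decay as one moves away from $v$; everything else is routine geometric-series estimation with the stated choice $k \in \Theta(\eps^{-2}\log^2 n)$.
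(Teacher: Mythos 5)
Your proposal starts from the right place — the dual characterisation via locally fair orientations and the $\eta$-fairness machinery — but it diverges from the paper in a way that opens a real gap. The paper does \emph{not} compare two independent locally fair orientations (one of $G$, one of $H_k(v)$) by a level-set argument. Instead it works with a \emph{single} orientation: it takes the locally fair orientation $\overrightarrow{G}$ of $G$ (so $\gee(v)=\rho^*(v)$ by Theorem~\ref{thm:localDensity}), deletes all edges in $H_{k+1}(v)\setminus H_k(v)$, and rebalances using a deletion-only procedure (Algorithm~\ref{alg:recursive}) that restores $\eta$-fairness by recursively pushing weight toward high-out-degree in-neighbours. The crux is that the recursion chain $x_0,x_1,\dots,x_\ell$ satisfies $\gee(x_i)>(1+\eta)(\gee(x_{i-1})-1)$ and is capped at $n$, giving a propagation depth $\ell\in O(\log_{1+\eta}n)=O(\eps^{-2}\log^2 n)$. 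Choosing $k$ larger than this depth means the rebalancing never reaches $v$, so one obtains an $\eta$-fair orientation of $H_k(v)$ in which $\gee(v)$ is \emph{still exactly} $\rho^*(v)$; then Theorem~\ref{thm:approx} applied to $H_k(v)$ finishes both directions at once. This ``surgery plus bounded propagation'' step is what lets Theorem~\ref{thm:approx} be invoked as a black box, and it is absent from your sketch.

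The gap in your version is in the final leap: ``I get a chain of $\Omega(\eps^{-2}\log^2 n)$ strictly nested sets \dots and it certifies that $v$ already has out-degree close to $\rho^*(v)$ in $H_k(v)$.'' What your nesting argument can plausibly deliver (and what Lemma~\ref{lem:local} in the appendix actually proves, by essentially the induction you describe) is a subgraph $H'\subseteq H_k(v)$ with $v\in H'$ and $\rho(H')\ge(1-\eps)\rho^*(v)$. That lower-bounds $\rho^{\max}(H_k(v))$, but it does not lower-bound the \emph{local} density $\rho^*_k(v)$: a set of high average density containing $v$ may get its density from vertices sitting at earlier levels of the diminishing-dense decomposition, while $v$ itself sits at a much lower level. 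In other words, ``$v$ belongs to a dense subgraph'' is not a certificate for ``$v$ has high local out-degree in a locally fair orientation of $H_k(v)$,'' and you would still need an argument that couples the two — which is precisely what the paper's truncation-and-rebalance construction supplies and your level-set comparison does not. (Your scale accounting is otherwise on the right track: the $\eps^{-2}\log^2 n$ does come from $\log_{1+\eta} n$ with $\eta=\Theta(\eps^2/\log n)$, but it is a bound on the hop-depth of a rebalancing cascade in a single orientation, not on the number of nested level sets one crosses when comparing two orientations.)
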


\noindent
This immediately implies a trivial algorithm for problem~\ref{oldproblem:value} in LOCAL (where each vertex $v$ collects its $k$-hop neighbourhood $H_k(v)$ for $k \in \Theta(\eps^{-2} \log^2 n)$ and then solves FO on $H_k(v)$): 

\begin{restatable}{corollary}{local}
    \label{cor:local}
    There exists an algorithm in LOCAL that given a unit graph $G$ and $\eps > 0$ computes in $O(\eps^{-2} \log^2 n)$ rounds for all  $v \in V$ a value $\rho_v \in [(1 + \eps)^{-1} \rho^*(v), (1 + \eps) \rho^*(v)]$.
\end{restatable}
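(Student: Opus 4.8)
The plan is to derive Corollary~\ref{cor:local} directly from Theorem~\ref{thm:newlocal} together with Lemma~\ref{lem:equalmaxima} (applied to the local/quotient setting via Theorem~\ref{thm:localDensity}). The algorithm itself is the ``trivial'' one: each vertex $v$ gathers its $k$-hop neighbourhood $H_k(v)$ for $k = c\eps'^{-2}\log^2 n$ (for a suitable constant $c$ and a suitably rescaled $\eps'$, see below), then locally computes $\rho^*_k(v)$ — the local density of $v$ inside the graph $H_k(v)$ — and outputs that value as $\rho_v$. Collecting $H_k(v)$ takes exactly $k \in O(\eps^{-2}\log^2 n)$ rounds in LOCAL, since in each round a vertex can learn the entire topology one hop further out; all subsequent computation is internal to $v$ and costs no rounds. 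So the only thing to verify is correctness, and that is precisely the content of Theorem~\ref{thm:newlocal}.

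First I would note that computing $\rho^*_k(v)$ from the adjacency information of $H_k(v)$ is a finite local computation: by Theorem~\ref{thm:localDensity} applied to the finite graph $H_k(v)$, the quantity $\rho^*_k(v)$ equals the out-degree of $v$ in any locally fair fractional orientation of $H_k(v)$, and such an orientation exists by Corollary~\ref{cor:alwaysfair}; equivalently it can be read off the diminishing-dense decomposition of $H_k(v)$ (Definition~\ref{def:localdensity}), which is a finite LP-based object. Either way, no communication is needed once $H_k(v)$ is known, so the round complexity is governed entirely by the radius $k$.

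Next I would invoke Theorem~\ref{thm:newlocal}: for $k \in \Theta(\eps^{-2}\log^2 n)$ we have $\rho^*_k(v) \in [(1+\eps)^{-1}\rho^*(v), (1+\eps)\rho^*(v)]$, which is exactly the guarantee Problem~\ref{newproblem:value} asks for. The only mild subtlety is bookkeeping with constants: Theorem~\ref{thm:newlocal} is stated for a fixed $\eps$ and gives $k \in \Theta(\eps^{-2}\log^2 n)$, so to be safe one runs the algorithm with parameter $\eps' = \eps/2$ (say) so that the multiplicative slack, even after any constant-factor loss hidden in the $\Theta(\cdot)$, stays within $(1+\eps)$; this only changes $k$ by a constant factor and keeps $k \in O(\eps^{-2}\log^2 n)$. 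I would also remark that $k$ must be truncated at the diameter $D$ (a vertex simply stops once its ball stops growing), which only helps the round count.

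The main obstacle is entirely upstream: essentially all the work is in Theorem~\ref{thm:newlocal} — showing that truncating to the $k$-hop ball perturbs the local density of $v$ by at most a $(1+\eps)$ factor when $k = \Theta(\eps^{-2}\log^2 n)$. Given that theorem (which we may assume), the corollary is a short argument combining ``collect-the-ball-then-compute-locally'' with ``the computed value is a $(1+\eps)$-approximation''; there is no genuine additional difficulty, only the constant-rescaling cleanup described above. I would therefore keep the proof of the corollary to a few lines and point to Theorem~\ref{thm:newlocal} for the substance.
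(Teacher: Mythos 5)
Your proposal matches the paper's proof. The paper derives Corollary~\ref{cor:local} exactly as you describe: each vertex collects its $k$-hop neighbourhood $H_k(v)$ for $k \in \Theta(\eps^{-2}\log^2 n)$, locally computes the local density (equivalently, the out-degree in a locally fair orientation of $H_k(v)$, via Theorem~\ref{thm:localDensity} and Corollary~\ref{cor:alwaysfair}), and outputs it; correctness is exactly Theorem~\ref{thm:newlocal}, and the round count is governed solely by $k$. Your side remarks about constant rescaling and truncating at the diameter are harmless bookkeeping; the one small inaccuracy is the passing mention of Lemma~\ref{lem:equalmaxima}, which concerns global density and global out-degree and is not the right reference here — the relevant duality is Theorem~\ref{thm:localDensity}, which you in fact use correctly in the next sentence.
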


\subsubsection{Results in CONGEST}
Finally in Section~\ref{sec:comp}, we solve Problem~\ref{newproblem:value} in CONGEST by computing an $\eta$-fair orientation. We use as a subroutine algorithm to compute \emph{blocking flows} in an $h$-layered DAG~\cite{haeupler2023maximum}:

\begin{restatable}{theorem}{main}
\label{thm:main}
      Suppose one can compute a blocking flow in an $n$-node $h$-layered DAG
in $\textnormal{Blocking}(h, n)$ rounds. 
    There exists an algorithm in CONGEST that given a unit-weight graph $G$ and $\eps > 0$ computes in $O(  \eps^{-3} \log^4 n \cdot (\eps^{-2} \log^2 n + \textnormal{Blocking}(\eps^{-2} \log^2 n, n))  )$ rounds an orientation 
    $\overrightarrow{G}$ such that for all $ v \in V$: $\gee(v) \in [(1 + \eps)^{-1} \rho^*(v)), (1 + \eps)\rho^*(v) \rho^*(v)]$.%
\end{restatable}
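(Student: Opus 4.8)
\textbf{Proof proposal for Theorem~\ref{thm:main}.}

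The plan is to turn the existence statement of Theorem~\ref{thm:approx} into an algorithm: it suffices to compute an $\eta$-fair orientation of $G$ with $\eta \le \eps^2/(128 \log n)$, since such an orientation immediately gives each vertex an out-degree that is a $(1+\eps)$-approximation of $\rho^*(v)$. So the bulk of the proof is an efficient CONGEST algorithm that computes an $\eta$-fair orientation for the prescribed $\eta$. Rescaling $\eps$ by a constant, it is enough to achieve $\eta = \Theta(\eps^2/\log n)$ and pay only $\poly\{\log n, \eps^{-1}\}$ overhead for the rescaling.

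First I would set up a standard "round-robin" / water-filling scheme for local fairness, as in the dynamic algorithms of Sawlani--Wang and Chekuri et al.\ and in Su--Vu: maintain a fractional orientation, and repeatedly, for each directed edge $(u,v)$ that violates $\gee(u) \le (1+\eta)\gee(v)$, push a small amount of orientation mass from $u$ to $v$. The key structural facts I would invoke are (i) the optimal out-degree $\Delta^{\min}(G) = \rho^{\max}(G) \le n$, so out-degrees live in a bounded range, and (ii) by Theorem~\ref{thm:newlocal} the local density of $v$ — and hence the target out-degree — is essentially determined by the $\Theta(\eps^{-2}\log^2 n)$-hop neighbourhood of $v$, so it is enough to make the orientation $\eta$-fair \emph{within} balls of radius $h := \Theta(\eps^{-2}\log^2 n)$; this is what lets the round count avoid a dependence on the diameter $D$. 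The outer loop will run $O(\eps^{-3}\log^4 n)$ iterations, which matches the stated first factor; each iteration must bring the orientation closer to $\eta$-fairness by a multiplicative/additive amount that can be tracked by a potential argument (e.g.\ the $\ell_2$ norm or the sorted-out-degree vector of the orientation, as is standard in these analyses).

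The main obstacle — and the reason for the $2^{O(\sqrt{\log n})}$ factor hidden inside $\textnormal{Blocking}(h,n)$ — is implementing each fairness-improving step in CONGEST. Unlike LOCAL, a vertex cannot simply collect its whole $h$-ball; instead, each iteration needs to reroute orientation mass along paths inside the $h$-hop DAG induced by the current out-degree ordering, and doing this efficiently is exactly a \emph{blocking flow} computation in an $h$-layered DAG. I would phrase the rerouting step as: build the layered DAG where layers correspond to (discretised) out-degree values, source/sink are the over/under-oriented vertices, and a blocking flow in this DAG simultaneously fixes all "augmenting" rerouting opportunities of a given length; then invoke the assumed subroutine running in $\textnormal{Blocking}(h,n)$ rounds, with $h = \Theta(\eps^{-2}\log^2 n)$. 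Plugging in the Haeupler et al.\ bound $\textnormal{Blocking}(h,n) = h \cdot 2^{O(\sqrt{\log n})}$ (or $\poly(h,\log n)$ w.h.p.) then yields the claimed total of $O(\eps^{-3}\log^4 n \cdot (h + \textnormal{Blocking}(h,n)))$ rounds, and the deterministic-vs-randomised split in Table~\ref{tab:results} follows from which blocking-flow routine is used. The delicate points I expect to spend the most care on are: bounding the number of distinct out-degree levels so the DAG really has $O(h)$ layers (discretising out-degrees to powers of $(1+\eta')$ for a suitable $\eta' < \eta$, and checking that rounding does not destroy $\eta$-fairness), proving the potential decreases by enough per blocking-flow phase to get the $O(\eps^{-3}\log^4 n)$ iteration bound, and confining all of this to radius-$h$ balls while still producing one globally consistent orientation — the latter needs the locality guarantee of Theorem~\ref{thm:newlocal} together with the observation that disjoint far-apart regions can be handled in parallel.

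Finally, once the $\eta$-fair orientation is in hand, I would close the proof by a one-line appeal to Theorem~\ref{thm:approx}: with $\eta \le \eps^2/(128\log n)$ every vertex's out-degree $\gee(v)$ lies in $[(1+\eps)^{-1}\rho^*(v), (1+\eps)\rho^*(v)]$, so each vertex simply outputs $\gee(v)$, and the round complexity is as computed above.
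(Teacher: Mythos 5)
Your proposal correctly identifies the high-level reduction — compute an $\eta$-fair orientation for $\eta \le \eps^2/(128\log n)$ and close with Theorem~\ref{thm:approx} — and correctly recognises that the technical core is a blocking-flow subroutine on an $h$-layered DAG with $h = \Theta(\eps^{-2}\log^2 n)$. However, the concrete algorithm and its analysis are essentially missing, and the two ideas you offer to fill that gap are not what the paper does and are not obviously repairable.

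First, your plan to "make the orientation $\eta$-fair within balls of radius $h$" and then glue overlapping balls into one globally consistent orientation via Theorem~\ref{thm:newlocal} is not the paper's approach, and you don't explain how to resolve the inconsistencies that arise when two balls overlap and each prescribes a different flip of the same edge. The paper never restricts to balls: its CONGEST algorithm produces a single globally $\eta$-fair orientation. The reason the blocking-flow DAG has height only $O(\eps^{-2}\log^2 n)$ is not a locality argument but a counting one: out-degrees lie in $[0,n]$, so after discretising into levels that are powers of $(1+\eta/2)$ there are only $\ell = O(\eps^{-2}\log^2 n)$ distinct levels (Lemma~\ref{lem:elbound}), and the DAG on which a blocking flow is computed naturally lives between level $h$ and level $\ell$. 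Theorem~\ref{thm:newlocal} is used for the LOCAL result, not CONGEST.

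Second, your proposed iteration bound via an $\ell_2$-potential argument is not the paper's mechanism and is left entirely unproven. The paper instead sweeps levels top-down (the "clock": hours $h = \ell, \ell-1, \dots, 0$) and maintains Invariant~\ref{inv:hour}, which guarantees that throughout hour $h$ there are no violating out-edges above level $h+1$. Within an hour it alternates between an even-minute phase that removes violating out-edges from level $h$ by a greedy rate-halving argument (Lemma~\ref{lem:oneeight}, $O(\log n)$ iterations) and an odd-minute phase that repairs the violating in-edges this created, via blocking flows on a shrinking DAG (Lemma~\ref{lem:decreasing_graph}). The bound on the number of even/odd alternations per hour is not a global potential but the local observation (Theorem~\ref{thm:main_congest}) that a vertex oscillating between levels $h$ and $h-1$ loses a fixed amount of reorientable out-mass each round trip, so it can alternate at most $O(\eta^{-1})$ times. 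Without this level-sweep structure, the invariant, and the per-hour alternation bound, you have no proof that the process terminates in the stated number of rounds; the paper's argument does not have a drop-in replacement by an $\ell_2$-potential. You should work out the discrete level structure, the invariant, and the alternation bound, or supply a full alternative potential analysis — as it stands, the central engine of the proof is absent.
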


As a corollary,  we obtain the first deterministic algorithm running in a sublinear number of rounds for $(1+\varepsilon)$-approximate dense subgraph detection in the CONGEST model (Table~\ref{tab:results}).



\section{Conceptual results for local density}
\label{sec:definition}

Our primary contribution is  the definition of local out-degree as a dual to local density.

\begin{lemma}
    \label{lem:welldefined}
    For any two locally fair orientations $\overrightarrow{G}$ or $\overrightarrow{G}'$ where a vertex $u$ has out-degree $\gee(u)$ or $\gee'(u)$ respectively, $\gee(u) = \gee'(u)$.
\end{lemma}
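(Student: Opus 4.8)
The plan is to show that any two locally fair orientations $\overrightarrow{G}$ and $\overrightarrow{G}'$ induce the same out-degree at every vertex, by a ``no improving cycle / potential argument'' on the symmetric difference. Concretely, suppose for contradiction that the set $W = \{u : \gee(u) \neq \gee'(u)\}$ is nonempty. Pick a vertex $u^\star$ where $\gee(u^\star) - \gee'(u^\star)$ is maximised (WLOG this maximum is strictly positive, else swap the roles of the two orientations), and let $\lambda := \gee(u^\star) - \gee'(u^\star) > 0$. The idea is that, because $u^\star$ sends strictly more flow out in $\overrightarrow{G}$ than in $\overrightarrow{G}'$ while each incident edge has a fixed total weight, the edges incident to $u^\star$ that ``shift orientation'' from $\overrightarrow{G}'$ to $\overrightarrow{G}$ must expose a neighbour $w$ with $\gee(u^\star \tor w) > \gee'(u^\star \tor w)$; in $\overrightarrow{G}$ local fairness forces $\gee(u^\star) \le \gee(w)$, and in $\overrightarrow{G}'$ we have the reverse-direction edge $\gee'(w \tor u^\star) > \gee(w \tor u^\star) \ge 0$ active, so local fairness forces $\gee'(w) \le \gee'(u^\star)$. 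Chaining these: $\gee(w) - \gee'(w) \ge \gee(u^\star) - \gee'(u^\star) = \lambda$, so $w$ also attains the maximum.

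The key step is then to make this into a genuine contradiction rather than an infinite regress, which I would do by a global summation / averaging argument. Let $T := \{u : \gee(u) - \gee'(u) = \lambda\}$, the set of vertices attaining the maximal surplus. The argument above shows: for every $u \in T$, every neighbour $w$ to which flow strictly increased ($\gee(u \tor w) > \gee'(u \tor w)$) lies in $T$. Summing $\gee(u) - \gee'(u)$ over $u \in T$ gives $\lambda |T|$ on one hand; on the other hand, expanding the out-degrees as sums over directed edges, the edges with both endpoints in $T$ contribute a net total of $0$ (each such undirected edge $\overline{uw}$ contributes $(\gee(u\tor w) - \gee'(u \tor w)) + (\gee(w \tor u) - \gee'(w \tor u)) = \gee(\overline{uw}) - \gee(\overline{uw}) = 0$), and there are no edges leaving $T$ along which the flow strictly increased. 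Edges leaving $T$ along which flow strictly decreased (or stayed equal) only make $\sum_{u\in T}(\gee(u)-\gee'(u))$ smaller, not larger. Hence $\lambda|T| = \sum_{u \in T}(\gee(u)-\gee'(u)) \le 0$, contradicting $\lambda > 0$ and $T \neq \emptyset$. I should double-check the sign bookkeeping: for $u \in T$ the out-degree difference is $\sum_{w}(\gee(u\tor w) - \gee'(u\tor w))$; splitting the sum into $w \in T$ and $w \notin T$, the $w\in T$ part telescopes to $0$ when summed over all of $T$, and for $w \notin T$ we have $\gee(u \tor w) \le \gee'(u \tor w)$ (no strict increase can escape $T$), so each such term is $\le 0$. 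That yields $\lambda |T| \le 0$, done.

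I expect the main obstacle to be the step that rules out flow increase escaping $T$ — i.e. proving carefully that if $u \in T$ and $\gee(u \tor w) > \gee'(u \tor w)$ then $w \in T$. This requires combining two applications of local fairness (one in each orientation, on the two opposite directions of the same undirected edge, noting that $\gee(u\tor w) > \gee'(u \tor w) \ge 0$ means $u$ sends flow to $w$ in $\overrightarrow{G}$, while $\gee(\overline{uw}) - \gee(u \tor w) < \gee(\overline{uw}) - \gee'(u\tor w)$, i.e. $\gee'(w \tor u) > \gee(w\tor u) \ge 0$, means $w$ sends flow to $u$ in $\overrightarrow{G}'$) with the maximality of $\lambda$. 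The inequalities $\gee(u) \le \gee(w)$ and $\gee'(w) \le \gee'(u)$ give $\gee(w) - \gee'(w) \ge \gee(u) - \gee'(u) = \lambda$, and since $\lambda$ is the maximum, equality holds and $w \in T$. Once this local claim is nailed down, the summation argument is routine. A cleaner alternative I would consider is to phrase the whole thing via the ``excess'' function and route it through an eulerian-type decomposition of the difference flow $\overrightarrow{G} - \overrightarrow{G}'$ into cycles and paths, but the direct maximal-set argument above seems shortest and avoids fractional-decomposition technicalities.
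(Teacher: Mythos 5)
Your proof is correct, and it takes a genuinely different global route from the paper. The paper builds the symmetric difference digraph $S$ (edges $\overrightarrow{ab}$ where $\gee(a\tor b) > \gee'(a \tor b)$), argues one may assume $S$ is acyclic by augmenting $\overrightarrow{G}'$ along any directed cycle, then follows a directed path from $u$ to a sink $v$; chaining local fairness in $\overrightarrow{G}$ along the path gives $\gee(u)\le\gee(v)$, chaining it in $\overrightarrow{G}'$ along the reverse edges gives $\gee'(v)\le\gee'(u)$, and combined with $\gee(u)>\gee'(u)$ and $\gee(v)<\gee'(v)$ (sink condition) this is contradictory. You instead take the extremal set $T$ of vertices attaining the maximal surplus $\lambda=\gee(u)-\gee'(u)$, prove $T$ is closed under out-edges along which flow strictly increased, and finish by a conservation argument: summing $\gee(u)-\gee'(u)$ over $u\in T$, internal edges cancel (each undirected edge contributes $\gee(\overline{uw})-\gee(\overline{uw})=0$), and outgoing contributions are nonpositive by closure, giving $\lambda|T|\le 0$. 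The key local lemma — applying fairness once in $\overrightarrow{G}$ via $\gee(u\tor w)>0$ and once in $\overrightarrow{G}'$ via $\gee'(w\tor u)>0$ on the same edge — is the same in both; what differs is the global extraction of a contradiction. Your approach buys a cleaner finish: it avoids the cycle-removal step entirely (which in the paper requires justifying that augmenting $\overrightarrow{G}'$ along a cycle preserves local fairness, a point the paper leaves implicit), replacing the path-to-sink hunt with a one-line conservation-of-mass argument. The paper's approach is more visual and closer in spirit to the flow-decomposition intuition the rest of the paper leans on. Both are valid; yours is arguably shorter to make fully rigorous.
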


\begin{proof}
Suppose for the sake of contradiction that there exists two locally fair orientations  $(\overrightarrow{G}, \overrightarrow{G}')$ and a vertex $u \in V$ where $\gee(u) > \gee'(u)$.
We define their \emph{symmetric difference} graph $S$ as a digraph where the vertices are $V$ and there exists an edge $\overrightarrow{ab}$ whenever $\gee(a \tor b) > \gee'(a \tor b)$.
We may assume that $S$ contains no directed cycles:

Indeed if $S$ contains any directed cycle $\pi$ we change $\overrightarrow{G'}$, where for all $\overrightarrow{ab} \in \pi$ we slightly increase $\gee'(a \tor b)$ until $S$ loses an edge. 
This operation does not change the out-degree of any vertex in $\overrightarrow{G}'$. So, we still have two locally fair orientations $(\overrightarrow{G}, \overrightarrow{G}')$  with $\gee(u) > \gee'(u)$. 

Since  $\gee(u) > \gee'(u)$, the vertex $u$ must have at least one out-edge in $S$ and since $S$ has no cycles, it follows that $u$ must have some directed path $\pi_v$ to a sink $v$ in $S$. 
Since $v$ is a sink in the symmetric difference graph it follows that $\gee(v) < \gee'(v)$. 

\noindent
 However we now observe the following property of the path $\pi_v$:
\begin{itemize}
    \item $\forall \overrightarrow{ab} \in  \pi_v$, $\gee(a \tor b) > \gee'(a \tor b)$. Thus, $\gee(a \tor b) > 0$ and so there exists a directed path from $u$ to $v$ in $\overrightarrow{G}$. Since    
    $\overrightarrow{G}$ is locally fair this implies that $\gee(u) \leq \gee(v)$.
    \item $\forall \overrightarrow{ab} \in  \pi_v$, $\gee(a \tor b) > \gee'(a \tor b)$. Thus, $\gee'(a \tor b) < \gee(ab)$ and so $\gee'(b \tor a) > 0$.  It follows that there exists a directed path from $v$ to $u$ in $\overrightarrow{G}'$. Local fairness implies $\gee'(v) \leq \gee'(u)$.
\end{itemize}
The 4 equations: $\gee(u) > \gee'(u)$, $\, \gee(v) < \gee'(v)$, $\, \gee(u) \leq \gee(v)$, and $\, \gee'(u) \geq \gee'(v)$ give a contradiction. 
\end{proof}

Lemma~\ref{lem:welldefined} would imply that the local out-degree is well-defined, \emph{if} the set of locally fair orientations is non-empty. 
Bera, Bhattacharya, Choudhari and Ghosh already aim to prove this in Section 4.1 of~\cite{BeraBCG22} (right above Equation 9).  
They claim that a locally fair orientation always exist by the following argument:
They consider an arbitrary orientation $\overrightarrow{G}$ that is not locally fair.
They claim that for any pair $(u, v)$ where $\gee(u) > \gee(v)$ and $\gee(u \tor v) > 0$ it is possible to transfer some out-degree from $\gee(u)$ to $\gee(v)$.
The existence of a locally fair orientation would follow, if it can be shown that this procedure converges to a locally fair orientation. Indeed, since the space of all orientations is a compact polytope, the limit of a converging sequence over this domain must lie within the space. 

However, it is intuitive but not clear that this procedure indeed converges. Indeed, decreasing $\gee(u)$ and increasing $\gee(v)$ may cause some other edge $(w, u)$ or $(v, w)$ to start violating local fairness. One way to show convergence is to define a potential function and to show that such a transfer always decreases the potential. 
We define the potential function $\sum_{v \in V} \gee^2(v)$, thereby creating a quadratic program where the domain is the space of all orientations of $G$. 
We prove that any optimal solution to this quadratic program must be a locally fair orientation. Any quadratic function over a compact domain has an optimum and so the existence of a locally fair orientation follows.

\fairImpliesDense*

\begin{proof}
     We consider the following quadratic program $\textbf{FO}^2$ from \cite[Section 4]{Prac2} where we compute a fractional orientation of the graph $G$ subject to a quadratic cost function: 
\begin{align*}
  \min \sum \gee(u)^2  & & \text{s.t.} \\
   \gee(u \tor v) + \gee(v \tor u) \geq \gee(\overline{uv}) & &\forall \overline{uv}\in E \\
  \gee(u) \geq \sum_{v\in V} \gee(u \tor v) & &\forall u\in V \\
  \gee(u \tor v), \gee(v \tor u) \geq 0 & & \forall u, v \in V
\end{align*}

Consider any optimal solution to the quadratic program. It must be that $\gee(u) = \sum_{v \in V} \gee(u \tor v)$. 
Danisch, Chan, and Sozio~\cite[Corollary 4.4]{Prac2} prove that for any vertex $u$, the local density $\rho^*(u) = \gee(u)$.

We first note that any solution to the quadratic program is an orientation. Indeed, suppose for the sake of contradiction that there exists an edge $\overline{uv} \in E$ where $\gee(u \tor v) + \gee(v \tor u) > \gee(\overline{uv})$.
We may now decrease either $\gee(u \tor v)$ or $\gee(v \tor u)$ to obtain another viable solution to the program. 
Consider decreasing $\gee(u \tor v)$, then we may decrease $\gee(u)$ and maintain a viable and better solution to the program -- a contradiction.

Secondly, we claim that the optimal solution to the quadratic program is a locally fair orientation. Suppose for the sake of contradiction that there exist $u, v$ with $\gee(u) = \gee(v) + \delta'$ and $\gee(u \tor v) = \delta$ for $\delta, \delta' > 0$. 
We can decrease $\gee(u \tor v)$ to zero by increasing $\gee(v \tor u)$ by $\Delta = \min\{ \delta, \delta' \}$ and still maintain a solution to the program.
This reduces the solution's value by $ (\gee(u) - \Delta)^2 - (\gee(v) + \Delta)^2$. 
However, we now found a solution to the quadratic program with a lower value than the optimal solution -- a contradiction.

Thus, the solution to $\textbf{FO}^2$ gives a locally fair orientation $\overrightarrow{G}$ where each vertex $u$ has out-degree $\gee(u)$.

The local density $\gee^*(u) = \gee(u)$ is by Lemma~\ref{lem:welldefined} well-defined. Danisch, Chan and Sozio~\cite[Corollary 4.4]{Prac2} show that $\rho^*(u) = \gee(u)$, which proves the theorem. 
\end{proof}

\noindent
Since the local density equals the local out-degree, we conclude from~\cite{Prac2} that:

\equal*

Since a quadratic program over a convex domain always has a solution, we may also note the following interesting fact:

\alwaysFair*

\section{Results for dynamic algorithms}
\label{sec:etafair}

We use our definition of local out-degree to show that there already exist dynamic algorithms that approximate the local density of each vertex. 
Recall that an orientation $\overrightarrow{G}$ is $\eta$-fair whenever for all $\overline{uv} \in E(\overrightarrow{G})$, $\gee(u \tor v) > 0$ implies that $\gee(u) \leq (1 + \eta) \gee(v)$. 
We show that if we choose $\eta \leq \frac{\eps^2}{128 \cdot \log n}$, then for  any $\eta$-fair orientation, for all $v$, the out-degree $\gee(v)$ is a $(1 + \eps)$ approximation of $\gee^*(v) = \rho^*(v)$. 
Moreover, we prove that the maximal local out-degree (i.e. $\max_{u \in V} \gee(u)$) is a $(1 + \eps)$ approximation of $\Delta^{\min}(G) = \rho^{\max}(G)$; illustrating that approximating the local measures is a strictly more general problem. 

\begin{restatable}{lemma}{estimate}
\label{lem:estimate}
Let $\eta \leq \frac{\eps^2}{128 \cdot \log n}$ and $k \leq \log_{(1 + \frac{1}{16} \eps)} n$.
Then $(1 + \eta)^{-k} \geq (1 + 0.5 \eps)^{-1}$. \end{restatable}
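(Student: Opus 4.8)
The plan is to prove the inequality $(1+\eta)^{-k} \geq (1+0.5\eps)^{-1}$ by showing the equivalent statement $(1+\eta)^k \leq 1 + 0.5\eps$, and to do this by bounding $(1+\eta)^k$ via the exponential function. First I would take logarithms: it suffices to show $k \ln(1+\eta) \leq \ln(1 + 0.5\eps)$, and since $\ln(1+\eta) \leq \eta$, it is enough to show $k\eta \leq \ln(1+0.5\eps)$. I would then bound the left-hand side using the two hypotheses. From $k \leq \log_{(1+\frac{1}{16}\eps)} n = \frac{\ln n}{\ln(1+\frac{1}{16}\eps)}$ and $\eta \leq \frac{\eps^2}{128\log n}$, we get $k\eta \leq \frac{\ln n}{\ln(1+\frac{1}{16}\eps)} \cdot \frac{\eps^2}{128 \log n}$. (Here I would need to be slightly careful about whether $\log$ denotes $\ln$ or $\log_2$; a constant factor either way is absorbed, so I will treat it as natural log or note the harmless constant discrepancy.) This simplifies to roughly $\frac{\eps^2}{128\ln(1+\frac{1}{16}\eps)}$.

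The key step is then to lower-bound $\ln(1+\frac{1}{16}\eps)$ from below by something like $c\eps$ for a suitable constant, valid on the relevant range of $\eps$. Using the standard inequality $\ln(1+x) \geq x - x^2/2 \geq x/2$ for $x \in [0,1]$, with $x = \frac{1}{16}\eps$ (so $\eps \leq 16$, which we may assume, or handle $\eps$ small WLOG since a $(1+\eps)$-approximation is only meaningful for small $\eps$), we obtain $\ln(1+\frac{1}{16}\eps) \geq \frac{\eps}{32}$. Plugging this in: $k\eta \leq \frac{\eps^2}{128} \cdot \frac{32}{\eps} = \frac{\eps}{4}$. Finally I must check $\frac{\eps}{4} \leq \ln(1+0.5\eps)$; this is false in general (the inequality goes the wrong way for the logarithm of a sum), so instead I would compare directly: we want $(1+\eta)^k \leq e^{\eps/4} \leq 1 + 0.5\eps$, and the inequality $e^{x} \leq 1 + 2x$ holds for $x \in [0, \ln 2]$ roughly, so $e^{\eps/4} \leq 1 + \eps/2$ holds as long as $\eps/4$ is below some absolute constant (again fine for small $\eps$). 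Taking reciprocals gives the claim.

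The main obstacle — really the only subtlety — is managing the constants and the implicit assumption that $\eps$ is bounded above by a small absolute constant, together with pinning down whether $\log$ means $\log_2$ or $\ln$ in the hypothesis; the factor $128$ in the denominator of $\eta$ looks chosen precisely to give enough slack to absorb both the $\ln 2$ conversion factor (if $\log = \log_2$) and the loosenesses in $\ln(1+x) \leq x$ and $\ln(1+\frac1{16}\eps) \geq \eps/32$ and $e^{x} \leq 1+2x$. So the proof is a short chain of elementary inequalities; I would lay it out as: $(1+\eta)^{-k} \geq e^{-k\eta} \geq e^{-\eps/4} \geq (1+0.5\eps)^{-1}$, justifying each step with $1+\eta \leq e^{\eta}$, the arithmetic bound $k\eta \leq \eps/4$ derived above, and $e^{\eps/4} \leq 1+\eps/2$ for $\eps$ in the admissible range.
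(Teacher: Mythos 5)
Your proof is correct and takes essentially the same route as the paper: both establish the chain $(1+\eta)^{-k}\geq e^{-k\eta}\geq e^{-\eps/4}\geq(1+\eps/2)^{-1}$, using $1+\eta\leq e^\eta$, then $k\eta\leq\eps/4$ via $\log(1+\tfrac{\eps}{16})\geq\tfrac{\eps}{32}$, and finally $e^{\eps/4}\leq 1+\eps/2$ for $\eps$ below a constant. One small note: your mid-proof worry that $\tfrac{\eps}{4}\leq\ln(1+\tfrac{\eps}{2})$ ``is false in general'' is unfounded — it holds for all $\eps\leq 2$ by $\ln(1+x)\geq x-x^2/2$ — and moreover it is exactly the inequality $e^{\eps/4}\leq 1+\eps/2$ that you then invoke, just written before rather than after exponentiating; so the ``direct comparison'' is not a different fact. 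You are right that the paper leaves the $\log$-base convention and the tacit $\eps=O(1)$ assumption implicit and that the constant $128$ is there to absorb this slack.
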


\begin{proof}
    Using $\log(1 + x) \geq x / 2$ whenever $x < 1$, we obtain:
\[
-\log_{1+\frac{\eps}{16}}(n) = -\frac{\log(n)} {\log 1+\frac{\eps}{16}}  \geq -\frac{\log(n)} {\frac{\eps}{32}} \geq -\frac{\eps}{4} \cdot{}\frac{128\log(n)} {\eps^2}
\]

\begin{align*}
     (1+\eta)^{-k} \geq  (1+\frac{\eps^2}{128 \cdot c \cdot \log n})^{- c \cdot \log_{1+\frac{\eps}{16}}(n)} 
     \geq (1+\frac{\eps^2}{128 \cdot c \log n})^{-\frac{\eps}{4} \cdot{}\frac{128 \cdot c \cdot \log(n)} {\eps^2} } \\
     (1+\eta)^{-k}  \geq  EXP\left[-\frac{\eps}{4}  \right] \geq  EXP\left[-\log(1+\frac{\eps}{2})\right] \geq (1+\frac{\eps}{2})^{-1}
\end{align*}
\end{proof}

\approx*

\begin{proof}
\textbf{First, we show that for all vertices $v$, $\gee(v) \leq (1 + \eps)\rho^*(v)$.}

        Suppose for the sake of contradiction that there exists a vertex $u$ with $\gee(u) > (1+ \eps) \rho^*(u)$.
        We fix $\rho^*(u) = \gee^*(u) = \gamma$ and work with $\gamma$ throughout the remainder of this proof to show a contradiction. 
     By Corollary~\ref{cor:alwaysfair}, there exists at least one locally fair fractional orientation $\overrightarrow{G}_x$. 
     By Corollary~\ref{cor:equal}, every vertex $v$ in this orientation has out-degree $\gee_x(v) = \gee^*(v) = \rho^*(v)$. 
    And thus, the fractional orientation $\overrightarrow{G}_x$ is not equal to $\overrightarrow{G}$. 
    
    Given $\overrightarrow{G}$ and a locally fair fractional orientation $\overrightarrow{G}_x$, we do three steps:
    \begin{itemize}
        \item We partition the vertices of $G$ to create two graphs $G_1$ and $G_2$. 
            The partition is based on the orientation $\overrightarrow{G}_x$ as we set: $G^1 = G[ v \in V \mid \gee_x(v) \leq \gamma ]$ and by $G^2 = G[v \in V \mid \gee_x(v) > \gamma]$. For ease of notation, we write any edge with one endpoint $a \in V(G^1)$  and one endpoint $b\in V(G^2)$  as $(a, b)$ and never as $(b, a)$. 
            \item From $G^1$, we create a family of nested subgraphs using $\overrightarrow{G}$. We define graphs $G^1_{i} := G^1[ v \in V(G^1) \mid  \gee(v) \geq \frac{\gee(u) }{(1 + \eta)^i}]$. 
    We denote by $k$ the lowest integer such that $|V(G^1_{k+1})| < (1 + \frac{\eps}{16}) |V(G^1_{k})|$.
We apply Lemma~\ref{lem:estimate} to observe that $(1 + \eta)^{-k} \geq (1 + \eps/2 )^{-1}$.
\item Finally, we use both orientations to create three claims that contradict one another.           
    \end{itemize}

\noindent
    \emph{The first claim. }
     We denote by $E_\uparrow$ the set of all edges $e = (a, b)$ with $a \in V(G^1_{k+1})$ and $b \in V(G^2)$ and claim that:
\begin{equation}
\label{eq:claima}
\sum_{e \in E(G^1_{k+1}) }  \gee(e) + \sum_{e \in E_\uparrow} \gee(e) \leq \sum_{v \in V(G_{k+1}^1)} \gee_x(v). 
\end{equation}

\noindent
    Indeed for $\overline{ab} \in E(G^1_{k+1})$, both endpoints are in $V(G_{k+1})$. Because $\overrightarrow{G}_x$ is locally fair, this implies that $\gee_x(a \tor b) + \gee_x(b \tor a) = \gee( \overline{ab})$.
    For all $\overline{ab} \in E_\uparrow$, $\gee_x(b) > \gamma \geq \gee_x(a)$ per definition of $G^1$ and $G^2$. By local fairness, $\gee_x(a \tor b) = \gee(\overline{ab})$ and the inequality follows.

\noindent
\emph{The second claim.}
    We secondly claim that:
\begin{equation}
\label{eq:claimb}
      \sum_{v \in V(G^1_k)} \gee(v) > \sum_{v \in V(G_{k+1}^1)} \gee_x(v).
\end{equation}

\noindent
This is because we can lower bound $ \sum_{v \in V(G^1_k)} \gee(v)$ as follows:

    \begin{align*}
    \sum_{v \in V(G^1_k)} \gee(v) &\geq (1 + \eta)^{-k} \cdot \gee(u) \cdot |V(G^1_k)| > (1 + \eta)^{-k} \cdot \gee(u) \cdot |V(G^1_{k+1})| \cdot (1 + \frac{\eps}{16})^{-1} \\
    &> (1 + \frac{\eps}{2})^{-1} \cdot (1 + \eps) \gamma \cdot |V(G^1_{k+1})| \cdot (1 + \frac{\eps}{16})^{-1} \geq \gamma \cdot |V(G^1_{k+1})|
\end{align*}
The claim follows by noting that per definition of $G^1$, for all $v \in V(G^1_{k+1})$, $\gee_x(v) \leq \gamma$. 

\noindent
    \emph{The third claim.} Lastly, we claim that:
    
\begin{equation}
\label{eq:claimc}
    \sum_{v \in V(G^1_k)} \gee(v) \leq \sum_{e \in E(G^1_{k+1}) }  \gee(e) + \sum_{e \in E_\uparrow} \gee(e)
    \end{equation}
    Consider any $v \in V(G^1_k)$ and any vertex $a$ with $\gee(v \tor a) > 0$.
    Recall that $\overrightarrow{G}$ is an $\eta$-fair orientation.
    Thus, if $a \in G^1$, then $\overline{va} \in E(G^1_{k+1})$. 
    If $a \in G^2$, then per definition $\overline{va} \in E_\uparrow$. Per definition of a fractional orientation $\gee(v \tor a) \leq \gee( \overline{va} )$ and so the claim follows. 

\noindent
\emph{A contradiction.}
    Equation~\ref{eq:claima}, \ref{eq:claimb} and \ref{eq:claimc} contradict each other.  Thus, we have proven that for all vertices $v$, $\gee(v) \leq (1 + \eps) \rho^*(v)$.

\textbf{Next, we show that for all vertices $v$, $\gee(v) \geq (1 + \eps)^{-1} \rho^*(v)$}
    Assume for the sake of contradiction that there exists a vertex $u$ with $\gee(u) < (1 + \eps)^{-1} \rho^*(u)$. 
    We use the local density (Definition~\ref{def:diminishing}+\ref{def:localdensity}). 
    I.e., there exists a unique integer $i$ and two sets of vertices $(S_i, B_{i-1})$ with $u \in S_i$.
    If we denote $B = B_{i-1}$ then:
    \[
    S_i = arg  \, \max\limits_{X \subset V - B} \frac{1}{|X|} \sum_{ \hat{E}_B(X) } \gee(e ) \textnormal{ and } \rho^*(u) = \frac{1}{|S_i|} \sum_{ \hat{E}_B(S_i) } \gee(e).
    \]

    \noindent
    Next, we use our $\eta$-fair fractional orientation to define a family of vertex sets:
    \[
    H_j := \{ v \in S_i \mid \gee(v) \leq \gee(u) \cdot (1 + \eta)^j \}.
    \]

    \noindent
    For the minimum $k$ where $|H_{k+1}| <  (1 + \frac{1}{16} \eps) |H_{k}|$. 
    $(1 + \eta)^{k} \leq (1 +  \frac{\eps}{2})$ (Lemma~\ref{lem:estimate}).

    Denote by $E_\uparrow$ all edges with one vertex in $H_{k}$ and the other in $S_i -H_{k}$. Recall that $\hat{E}_B(X)$ denotes all edges that have one endpoint in $X$ and the other in $X$ or $B$. 
    Since $H_k \cap B = \emptyset$, $\hat{E}_B(S_i - H_k) = \hat{E}_B(S_i) - (\hat{E}_B(H_k) \cup E_\uparrow)$ and so:
    \begin{equation}
        \label{eq:localdensity_equality}
        \hat{\rho}_B(S_i - H_k) \cdot (|S_i| - |H_k|) = \sum_{e \in \hat{E}_B(S_i - H_k) } \gee(e) =  \left( \sum_{e \in \hat{E}_B(S_i) } \gee(e) \right) - \left(   \sum_{e \in \hat{E}_B(H_k) \cup E_\uparrow} \gee(e) \right)
    \end{equation}

    We claim that:
    \begin{equation}
        \label{eq:v-estimate}
            \sum_{e \in \hat{E}_B(H_k) \cup E_\uparrow} \gee(e) \leq \sum_{v\in V(H_{k+1})} \gee(v). 
    \end{equation}

    Indeed:
    \begin{itemize}
        \item if $\overline{ab} \in E(H_k)$ then $\gee(\overline{ab})$ is distributed over $\gee(a)$ and $\gee(b)$ with $a, b \in H_k \subset H_{k+1}$.
        \item if $\overline{ab} \in E_\uparrow$ with $a \in H_k$ and $b \in S_i \cap H_{k+1}$, then then $\gee(\overline{ab})$ is distributed over $\gee(a)$ and $\gee(b)$ with $a, b \in H_{k+1}$.
        \item if $\overline{ab} \in E_\uparrow$ with $a \in H_k$ and $b \in S_i - H_{k+1}$, then $\gee(b) > (1+\eta)\gee(a)$ and by definition of $\eta$-fairness, $\gee(a \tor b) = \gee(\overline{ab})$.
    \end{itemize}

    Now:
    \begin{align*}
        &\sum_{v\in V(H_{k+1})} \gee(v) \leq \gee(u) \cdot (1 + \eta)^{k+1} \cdot |V(H_{k+1})| \leq \gee(u) \cdot (1 + \frac{\eps}{2}) \cdot |H_{k}| \cdot (1 + \frac{\eps}{16})  \Rightarrow \\
        &\sum_{v\in V(H_{k+1})} \gee(v) \leq (1+\eps)\gee(u) \cdot |H_{k}| < \rho^*(u) \cdot |H_{k}| \Rightarrow  \\
       &    \sum_{e \in \hat{E}_B(H_k) \cup E_\uparrow} \gee(e)  < \rho^*(u) |H_k|  \quad  \quad \quad  \quad  \quad \quad  \emph{by Equation~\ref{eq:v-estimate}}
\end{align*}

\noindent
We now apply the fact $\rho^*(u)|S_i| =  \sum\limits_{ e \in \hat{E}_B(S_i) } \gee(e)$ to note that:
\[
\sum_{e \in \hat{E}_B(S_i - H_k) } \gee(e) =   \sum_{e \in \hat{E}_B(S_i) } \gee(e) \quad   - \quad  \sum_{e \in \hat{E}_B(H_k) \cup E_\uparrow} \gee(e) > \rho^*(u)|S_i| - \rho^*(u) |H_k|
\]

However, this implies that the set $S_i - H_k$ is not empty, since $S_i = H_k$ would imply that $0 > 0$.
Therefore, we may apply Equation~\ref{eq:localdensity_equality} to claim that: 

\[
\hat{\rho}_B(S_i - H_k) \cdot (|S_i| - |H_k|) > \rho^*(u) (|S_i| - |H_k|)  \Rightarrow \hat{\rho}_B(S_i - H_k) > \rho^*(u).
\]

\noindent
However, we now have found a non-empty set $X = S_i - H_k \subseteq V - B$, where $\hat{\rho}_B(X) > \hat{\rho}_B(S_i)$ which contradicts the definition of $S_i$. 
\end{proof}

If $G$ is a unit-weight graph, Chekuri et al.~\cite{chekuri2024adaptive} present a dynamic algorithm to maintain an $\eta$-fair orientation in a unit-weight graph with $\eta \in O(\eps^{-2} \log n)$. Thus, by Theorem~\ref{thm:approx}, we may now conclude that they approximate the local density (and/or the local out-degree):

\dynamic*

\section{Results in LOCAL}
\label{sec:local}

\noindent
We prove that the local out-degree of each $v \in V$ is (largely) determined by its local neighbourhood. As a result, we immediately get an algorithm to solve Problem~\ref{newproblem:value} in LOCAL. 

\newlocal*

\begin{proof}
    To prove the theorem, we design a simple deletion-only algorithm to maintain an $\eta$-fair orientation. For $\eta = \frac{\eps^{2}}{128 \log n}$, this algorithm has a recursive depth of $O(\eps^{-2} \log^2 n)$. 
    
    Specifically, we say that a directed edge $\overline{uv}$ is bad whenever $\gee(u \tor v) > 0$ and $\gee(u) > (1 + \eta)\gee(v)$. 
    In an $\eta$-fair orientation no edge is bad.     
    Whenever we delete an edge $\overline{x_1 x_0}$, the out-degree $\gee(x_1)$ decreases by $\gee(x_1 \tor x_0)$. For vertices $x_2$ with $\gee(x_2 \tor x_1) > 0$, it may now be that $\gee(x_2) > (1 + \eta)\gee(x_1)$ (and thus the edge $\overline{x_2 x_1}$ becomes bad). Note if there exists such a vertex $x_2$, then it must hold for the vertex $x_2^* := \arg \max_{x_2 \in V} \{ \gee(x_2) \mid \gee(x_2 \tor x_1) > 0 \}$. 
    This leads to a recursive algorithm to decrease the out-degree of a vertex (Algorithm~\ref{alg:recursive}). 
    
 \begin{algorithm}[t]
    \caption{\textsc{decrease}($\gee(u \tor v)$ by $\Delta$ -- assuming that $\Delta \leq \gee(u \tor v)$)}
    \label{alg:recursive}
    \begin{algorithmic}
    \STATE $w^* \gets \arg \max_{w \in V} \{  \gee(w) \mid \gee(w \tor u) > 0 \}$
    \WHILE{ $\Delta > 0$ \textbf{and}  $\gee(w^*) > (1 + \eta)( \gee(u) - \Delta)$}
    \IF{ $\Delta > \gee(w^* \tor u)$}
    \STATE \textsc{Decrease}($\gee(w^* \tor v)$ by $\gee(w^* \tor v)$)
    \STATE $\Delta = \Delta - \gee(w^* \tor u)$
    \ELSE
    \STATE \textsc{Decrease}($\gee(w^* \tor v)$ by $\Delta$)
    \ENDIF
    \STATE $w^* \gets \arg \max_{w \in V} \{  \gee(w) \mid \gee(w \tor u) > 0 \}$
    \ENDWHILE
    \end{algorithmic}
  \end{algorithm}
    We claim that this algorithm as a recursive depth of $O(\log_{1 + \eta} n)$.
    Indeed any sequence of recursive calls is a path in $G$. Denote the path belonging to the longest sequence of recursive calls by $x_0, x_1, \ldots x_\ell$. 
    Since $\Delta$ is always at most 1, it must hold for all $i$ that: $\gee(x_i) > (1 + \eta)(\gee(x_{i-1}) - 1))$. 
    Since a graph may have at most $n^2$ edges, $\gee(x_\ell) \leq n$ and it follows that the recursive depth is at most $\ell \in O(\log_{1+\eta} n)$.
 We now apply $\log(1 + x) \geq x / 2$ and note that:
$\ell \leq \frac{\log n}{\log(1 + \eta)} \leq \frac{\log n}{\eta / 2} \subseteq O( \frac{\log n}{ 64 \eps^2 / \log n}) \subseteq O(\frac{\log^2 n}{\eps^2})$.   

Given this theoretical algorithm, we prove the lemma. Consider any fair orientation $\overrightarrow{G}$. Then by Theorem~\ref{thm:localDensity} for any vertex $v$, $\gee(v) = \rho^*(v)$. 
Choose some $k \in \Theta(\eps^{-2} \log^2 n)$ sufficiently large and let $H_k(v)$ be the $k$-hop neighbourhood of $v$ and $E_k$ be all the edges in $H_{k+1}(v)$ that are not in $H_{k}$. 

Choose $\eta = \frac{\eps^{2}}{128 \log n}$. The orientation $\overrightarrow{G}$ is per definition an $\eta$-fair orientation. We run on $\overrightarrow{G}$ our deletion-only algorithm, deleting all edges in $E_k$. 
Since our algorithm has a recursive depth of $\ell < k$, we end up with an $\eta$-fair orientation of $H_k(v)$ where $\gee(v) = \rho^*(v)$. We apply Theorem~\ref{thm:approx} to conclude that $\rho^*(v) = \gee(v) \in [(1+\eps)^{-1} \rho_k^*(v), (1+\eps)^{-1} \rho_k^*(v)]$ which concludes the theorem.
\end{proof}

\local*

\section{Results in CONGEST}
\label{sec:comp}

We now describe an algorithm in CONGEST that for any unit-weight graph $G$, creates an $\eta$-fair orientation (with $\eta = \frac{\eps^2}{128 \cdot \log n}$). 
Our algorithm uses as a subroutine a distributed algorithm to compute a \emph{blocking flow} in an $h$-layered DAG in $O(\textnormal{Blocking}(h, n))$ rounds. 

\begin{definition}
    An edge-capacitated DAG $G$ is $h$-layered if the vertices can be embedded on a grid of height $h$, such that every directed edge $\overline{uv}$ points downwards. 
\end{definition}

\begin{definition}
    For an edge-capacitated DAG $G$ with sources $S$ and terminals $T$, a flow $f$ from $S$ to $T$ is \emph{blocking} if every augmenting path of $f$ contains at least one saturated edge. 
\end{definition}

\begin{lemma}[Lemma 7.2 and 9.1 in~\cite{haeupler2023maximum}]
    \label{lem:blocking}
    There exists an algorithm which, given an $n$-node $h$-layer edge-capacitated DAG $D$ with sources $S$ and terminals $T$ computes a blocking ST-flow in CONGEST in:
    \begin{itemize}
        \item $\textnormal{Blocking}(h, n) = \tilde{O}(h^4)$ rounds with high probability,
        \item $\textnormal{Blocking}(h, n) = \tilde{O}(h^6 \cdot 2^{c\sqrt{\log n}})$ deterministic rounds for some constant $c$. 
    \end{itemize}
\end{lemma}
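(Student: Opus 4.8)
This lemma is quoted from~\cite{haeupler2023maximum} (their Lemmas~7.2 and~9.1), so no new proof is required; nevertheless, since these round bounds propagate into every estimate in Section~\ref{sec:comp}, it is worth recording the shape of the argument one would use to re-derive them. The overall strategy is to reduce the computation of a blocking $ST$-flow in the $h$-layered DAG $D$ to a small number of \emph{hop-bounded} (length-constrained) flow computations, each of which is solved with the distributed expander-routing machinery of~\cite{haeupler2023maximum}.

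First, I would exploit the layered structure: because every edge of $D$ points strictly downward on a grid of height $h$, every $S$--$T$ path has at most $h$ edges, so a flow $f$ is blocking exactly when the subgraph of $D$ formed by its unsaturated edges contains no $S$--$T$ path at all. One then builds $f$ in super-phases indexed by a target distance $d = 1, 2, \ldots, h$: in super-phase $d$ one takes the BFS layering of the current unsaturated forward graph, computes a \emph{maximal} set of forward augmenting paths of length exactly $d$, and adds the corresponding flow. After super-phase $d$ no unsaturated $S$--$T$ path of length $\le d$ remains (maximality kills the length-$d$ ones; shorter paths were already gone and edges are only deleted), so after the $h$-th super-phase the accumulated flow is blocking. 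Hence $O(h)$ super-phases suffice, and the whole task reduces to: in an $h$-layered residual DAG, compute a maximal set of shortest forward augmenting paths in CONGEST.

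The core of the proof --- and the part I expect to be the genuine obstacle --- is this last subroutine. Augmenting one path at a time is hopeless ($\Omega(m)$ augmentations), so one must route a maximal family of length-$\le h$ paths \emph{simultaneously}, which is a hop-bounded flow problem. Here one applies the length-constrained expander decomposition of~\cite{haeupler2023maximum}: decompose $D$ into pieces each admitting low-congestion hop-bounded routing, route the hop-bounded flow approximately inside the pieces, and run $O(\log n)$ rounds of boosting to turn the approximate hop-bounded flow into an exact maximal one. Since dilation is $O(h)$ and the decomposition plus routing contribute an $\tilde{O}(h^2)$ overhead, each subroutine call costs $\tilde{O}(h^3)$ rounds, and multiplying by the $O(h)$ super-phases gives $\textnormal{Blocking}(h,n)=\tilde{O}(h^4)$ with high probability. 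The deterministic variant is obtained by replacing the randomised expander decomposition and expander routing by their deterministic counterparts (deterministic low-diameter decompositions and deterministic expander routing), which raises the cost of each call to $\tilde{O}(h^5\cdot 2^{c\sqrt{\log n}})$ --- the larger power of $h$ and the $2^{c\sqrt{\log n}}$ factor both reflecting the weaker guarantees of deterministic expander machinery --- and hence $\textnormal{Blocking}(h,n)=\tilde{O}(h^6\cdot 2^{c\sqrt{\log n}})$. Proving that the accumulated approximate hop-bounded flows really assemble into a \emph{blocking} flow rather than merely a large one, and constructing the deterministic routing primitives, are precisely where the heavy lifting of~\cite{haeupler2023maximum} occurs, and re-deriving those primitives would constitute the bulk of any self-contained proof.
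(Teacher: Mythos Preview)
Your opening sentence is exactly right: the paper does not prove this lemma at all, it simply cites Lemmas~7.2 and~9.1 of~\cite{haeupler2023maximum} as a black box and uses the stated round complexities in Theorem~\ref{thm:main} and Corollary~\ref{cor:main}. There is nothing further to compare, since the paper contains no argument for this statement; your additional sketch of the Dinic-style super-phase reduction plus length-constrained expander machinery is plausible commentary but goes well beyond anything the present paper attempts, and its correctness would have to be checked against~\cite{haeupler2023maximum} itself rather than against this paper.
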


We compute an $\eta$-fair orientation by repeatedly constructing a DAG with $h \in O(\eps^{-2} \log^2 n)$ and computing blocking flows. 
Theorem~\ref{thm:approx} implies in an $\eta$-fair orientation (for our choice of $\eta$) the out-degree of each vertex $v$ is a $(1+\eps)$-approximation $\rho^*(v)$.

\subparagraph{The initialising step.}
Before we start our algorithm, we create a starting orientation where we set for every edge $e = \overline{uv}$, $\gee(u \tor v) = \frac{1}{2} \gee(e)$ and $\gee(v \tor u) = \frac{1}{2} \gee(e)$. This gives each vertex $u$ \emph{some} {out-degree} $\gee(u)$ which we partition:

\begin{definition}
    Let each vertex $u$ have out-degree $\gee(u)$.
    We define \emph{level} $i$ as:
    \[
    L_i := \left\{ u \in V \mid \gee(u) \in \left[ (1 + \frac{\eta}{2})^i, (1 + \frac{\eta}{2})^{i+1} \right] \right\}. 
    \]
    A vertex $u \in L_i$ is \emph{at level} $i$ and $\ell'$ denotes the highest level that is not empty.  \newline
    Whenever $\gee(u) = (1 + \frac{\eta}{2})^i$, $u$ may decide whether $u \in L_i$ or $u \in L_{i-1}$; whenever our algorithm increases $\gee(u)$ the vertex $u$ defaults to the lowest possible level and vice versa.  
\end{definition}

\begin{definition}
    Consider an edge $\overline{uv}$ with $u \in L_i$ and $v \in L_j$. We say that:
    \begin{itemize}
        \item $(u, v)$ is an \emph{out-edge} from $u$ and an in-edge into $v$ whenever $i > j$ and  $\gee(u \tor v) > 0$, and
        \item $(u, v)$ is \emph{violating} whenever $i > j + 1$ and $\gee(u \tor v) > 0$
    \end{itemize}
\end{definition} 
\noindent
Note that the orientation is $\eta$-fair whenever there exist no violating edges.

\begin{figure}[t]
  \centering
  \includegraphics[width = \linewidth]{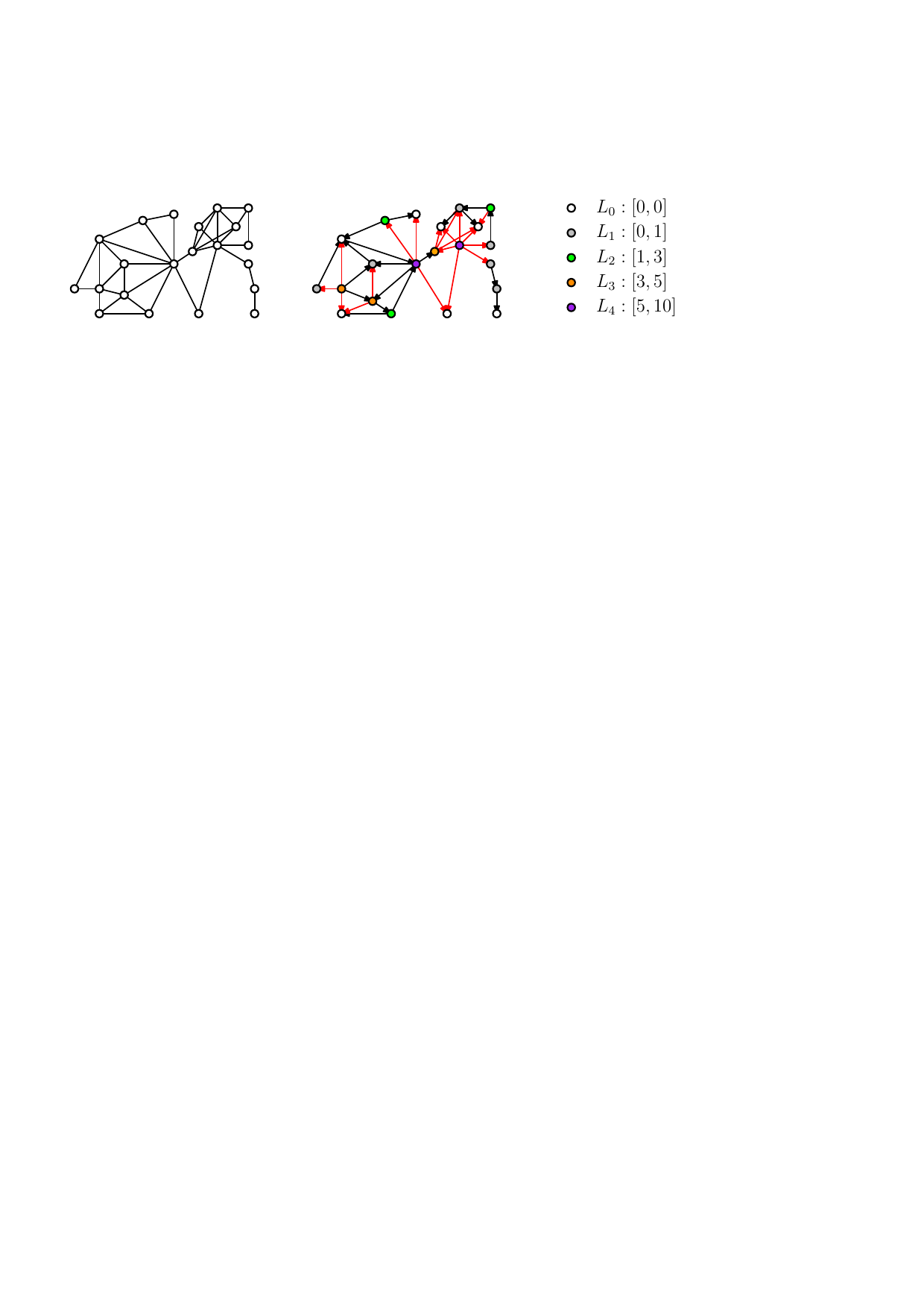}
  \caption{
Given a graph $G$, we arbitrarily orient $G$. 
This allows us to partition the vertices of $G$ into \emph{levels} $L_1, \ldots L_6$ based on their current out-degree. 
We say that an edge $(u, v)$ is \emph{violating} whenever $\gee(u \tor v) > 0$, $u \in L_i$, $v \in L_j$ and $i > j+1$. We show violating edges in red. Our algorithm iterates over an integer $h$ from high to low, and tries to flip all violating edges from level $L_h$. 
  }
  \label{fig:levelpartition}
\end{figure}

\begin{lemma}
\label{lem:elbound}
Let $\eta \leq \frac{\eps^2}{128 \cdot \log n}$.
For our orientation, let $\ell'$ be the highest level such that $L_{\ell'}$ is not empty then $\ell'\leq \eps^{-2} \log^2 n $. 
\end{lemma}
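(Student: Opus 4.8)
The plan is to bound the maximum out-degree $\gee(u)$ attainable in the trivial starting orientation (where every edge is split evenly) and convert that bound on $\gee(u)$ into a bound on the level index $\ell'$ via the definition $L_i = \{u : \gee(u) \in [(1+\tfrac{\eta}{2})^i, (1+\tfrac{\eta}{2})^{i+1}]\}$. In the initial orientation each vertex $u$ has $\gee(u) = \tfrac12 \deg(u) \le \tfrac12 (n-1) < n$, since the graph is unit-weight and simple with $n$ vertices. Hence if $u \in L_{\ell'}$ then $(1+\tfrac{\eta}{2})^{\ell'} \le \gee(u) < n$, so $\ell' < \log_{1+\eta/2} n = \frac{\log n}{\log(1+\eta/2)}$.

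Next I would estimate $\log(1+\eta/2)$ from below. Since $\eta \le \frac{\eps^2}{128\log n} < 1$, we have $\eta/2 < 1$, and by the standard inequality $\log(1+x) \ge x/2$ for $x \in (0,1)$ we get $\log(1+\eta/2) \ge \eta/4$. Therefore
\[
\ell' < \frac{\log n}{\log(1+\eta/2)} \le \frac{4\log n}{\eta} \le \frac{4\log n}{\frac{\eps^2}{128\log n}} = \frac{512 \log^2 n}{\eps^2}.
\]
This is $O(\eps^{-2}\log^2 n)$; to get the cleaner stated bound $\ell' \le \eps^{-2}\log^2 n$ one should either absorb the constant (the lemma as used downstream only needs $h \in O(\eps^{-2}\log^2 n)$), or tighten the constant $128$ to a larger value in the hypothesis, or note that $\gee(u) \le \tfrac12(n-1)$ together with a sharper bound on $\log(1+\eta/2)$ (using $\log(1+x) \ge x - x^2/2$) shaves the factor; I would simply remark that the constants can be adjusted and that what matters is the asymptotic bound.

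There is essentially no hard part here: the argument is a one-line counting bound ($\gee(u) < n$ in the even split) followed by a routine logarithm estimate, identical in spirit to the recursive-depth computation in the proof of Theorem~\ref{thm:newlocal}. The only thing to be careful about is that the lemma statement claims the bound $\ell' \le \eps^{-2}\log^2 n$ exactly, whereas the naive computation yields a constant factor; so the one subtlety to flag is matching the claimed constant, which I would handle by choosing the constant in the definition of $\eta$ (or in the grid height $h$) generously enough, since every later use only invokes $h = O(\eps^{-2}\log^2 n)$ anyway.
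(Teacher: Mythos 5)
Your proof takes the same route as the paper's: bound the out-degree by $n$, convert to a level bound via $\log_{1+\eta/2} n$, and estimate $\log(1+\eta/2)\geq\eta/4$. Your observation about the constant is in fact a genuine catch: plugging $\eta=\frac{\eps^2}{128\log n}$ into $\frac{\log n}{\eta/4}$ yields $\frac{512\log^2 n}{\eps^2}$, whereas the paper's intermediate step writes $\frac{\log n}{\eta/4}=\frac{\log n}{32\eps^2/\log n}$, an arithmetic slip (that equality would require $\eta=\frac{128\eps^2}{\log n}$). Moreover, note that the stated hypothesis $\eta\leq\frac{\eps^2}{128\log n}$ cannot by itself give an \emph{upper} bound on $\ell'$ since $\eta$ appears in the denominator; the lemma implicitly relies on the algorithm's specific choice $\eta=\frac{\eps^2}{128\log n}$. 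As you say, none of this affects downstream uses, which only need $\ell'\in O(\eps^{-2}\log^2 n)$.
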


\begin{proof}
The maximal out-degree of a vertex is $n$.  
Thus, $\ell' \leq  \frac{\log n}{\log(1 + \frac{\eta}{2})}$.
We now apply $\log(1 + x) \geq x / 2$ and note that:
$\ell' \leq \frac{\log n}{\log(1 + \frac{\eta}{2})} \leq \frac{\log n}{\eta / 4} = \frac{\log n}{ 32 \eps^2 / \log n} \leq \frac{\log^2 n}{\eps^2}$.
\end{proof}

\subsection{Algorithm overview}

Denote $\ell = \lceil \eps^{-2} \log^2 n \rceil$.
Our algorithm runs on a `clock' denoted by $(h : m : s)$ where:
\begin{itemize}
       \item Each \textsc{hour} $h$ lasts $2 \lceil \eta^{-1} \rceil + 2$  \textsc{minutes},
           \item Each even \textsc{minute} $m$ lasts $4 \lceil \log_{8/7} n \rceil + 2$ rounds,
           \item Each odd \textsc{minute} $m$ in \textsc{hour} $h$ lasts $\ell - h + 1$ \textsc{seconds},
               \item Each \textsc{second} $s$ lasts $\ell + \textnormal{Blocking}(\ell, n)$ rounds.
\end{itemize}

\noindent 
Each vertex tracks the clock to know which actions of our algorithm it should execute (if any). 
Our clock is special, in the sense that hours tick downwards.  Minutes and seconds tick upwards, starting from zero.  
Each vertex $v$ in our graph keeps track of the current time, measured in the current \textsc{hour} $h$, \textsc{minute} $m$ and \textsc{second} $s$. 
We maintain the following:

\begin{invariant}
\label{inv:hour}
  During \textsc{hour} $h$, there are no violating out-edges from level $k$ for all $k > h + 1$. 
   At the start of each even \textsc{minute} in \textsc{hour} $h$, there are no violating out-edges from level $k$ for all $k > h$.
\end{invariant}

\noindent
This invariant implies that we compute an $\eta$-fair orientation when the clock reaches $(0 : 0 : 0)$. Going from \textsc{hour} $h$ to \textsc{hour} $h-1$ we maintain this invariant by flipping directed paths:

\begin{definition}
    For any edge $\overline{uv}$ with $\gee(u \tor v) > 0$, we say that our algorithm is \emph{flipping} $\overline{uv}$ whenever it decreases $\gee(u \tor v)$ (increasing $\gee(v \tor u)$). Moreover, we say that $\overline{uv}$ is flipped whenever our algorithm has decreased $\gee(u \tor v)$ such that $\gee(u \tor v) = 0$. 
\end{definition}

\subparagraph{Algorithm (see also Figure~\ref{fig:algorithm_overview} and Algorithms~\ref{alg:hour} and \ref{alg:evenminute} and \ref{alg:minute} and \ref{alg:second})}
Each time frame has a purpose:
\begin{itemize}
    \item Each \textsc{hour} $h$, the goal is to identify and `fix' all violating out-edges from vertices in $L_h$; without introducing violating out-edges from vertices in a level $L_k$ with $k > h$. We do this iteratively, using two different steps:
    \begin{itemize}
        \item Each even \textsc{minute}, we fix all violating out-edges from vertices in $L_h$, possibly creating violating out-edges from vertices in $L_{h+1}$ to vetices in $L_{h-1}$. 
        \item Each odd \textsc{minute}, we fix all violating out-edges from vertices in $L_{h+1}$ to vertices in $L_{h-1}$. We do this in such a manner, that we create no new violating out-edges from vertices in $L_k$ for $k > h$. However, we may create violating out-edges from level $L_h$.
    \end{itemize}
    \item Each even \textsc{minute} $m'$, we define the set $V_{m'} \subset L_h$ of vertices that have at least one violating out-edge.  
    Over $4 \lceil \log_{8/7} n \rceil + 2$ rounds, each $u \in V_{m'}$ flips violating out-edges $\overline{uv}$. Moreover:
        \begin{itemize}
            \item The out-degree $\gee(u)$ is decreased such that $u$ drops at most one level, and 
           \item The out-degree $\gee(v)$ such that $v$ increases its level up to at most $h - 1$. 
           \end{itemize}
        We prove that at the end of this \textsc{minute} $m'$, there exist no more $u \in V_{m'}$ with a violating out-edge. 
        Thus, for each vertex $u \in V_{m'}$, either $u$ decreased one level, or all $\overline{uv}$ that were violating now have that $\gee(u \tor v)  = 0$, or $v \in L_{h-1}$. 
        \item In each odd \textsc{minute} $m$, we inspect the vertices $T_m := \{ u \in V_{m-1} \textnormal{ that dropped a level} \}$. 
                For each $u \in T_m$, for each edge $\overline{wu}$ with $w \in L_{h+1}$ and $\gee(w \tor u) > 0$ the edge $(w, u)$ has become a violating in the previous \textsc{minute}. 
                We want to fix these edges, whilst guaranteeing that we create no violating in-edges from vertices in level $h+2$ and above.   \\
                We obtain this, by recording at the start of the \textsc{minute} for every vertex $u$ its level $l_m(u)$. 
                We then invoke $\ell - h + 1$ \textsc{seconds}.    Each \textsc{second} $s$, we create a DAG $D_s$ where the sinks are $T_m$. We increase for all $u \in T_m$ the out-degree $\gee(u)$ by flipping a directed path from a source in $D_s$ to $T_m$. We construct our DAG in such a manner that this procedure does not create new violating in-edges, and that for all \textsc{seconds} $s$, $D_{s+1}$ is a subgraph of $D_s$. 
    \item At the start of \textsc{second} $s$ of each \textsc{minute} $m$, we fix for every edge $\overline{uv}$ the values $\gee_s(u \tor v)$, and the levels $l_s(u)$ and $l_s(v)$ of $u$ and $v$. We then define a graph $D_s = (V_s, E_s)$ as follows: 
        \begin{itemize}
            \item The edges $E_s$ are all violating in-edges to vertices in $T_m$ plus all $\overline{uv}$ with:  \\
$
            l_s(u) = l_m(u) > h + 1$, $l_s(u) > l_s(v)$, and $\gee_s(u \tor v) > 0$. 
            \item The vertices $V_s$ include $T_m$ plus all vertices in $G$ with a directed path to $T_m$ in $E_s$. 
            \item The vertices $S_s \subset V_s$ are all sources in $D_s$ (these are not in $T_m$).  
        \end{itemize}
        For each $v \in T_m$ we increase $\gee(v)$ by flipping a directed path from a vertex in $S_s$ to $v$. 
        We continue this until either $\gee(v) = (1 + \frac{\eta}{2})^{h+1}$, or, there exist no more edges $(u, v) \in E_s$ with $\gee(u \tor v) > 0$. In both cases, $v$ has no more violating in-edges. 
        To find these directed paths, we create a flow problem on a graph $D_s^*$ where the maximal path has length $\ell + 2$:
        \begin{itemize}
            \item  For each $u \in S_s$, we define $\sigma(u) = \gee(u) - (1 + \frac{\eta}{2})^{l_m(u) - 1}$ (the maximal amount we can decrease $\gee(u)$ such that is does not arrive in level $l_m(u) - 2$). We connect every $u \in S_s$ to a unique source $s_u$ where the edge $\overline{s_u u}$  has capacity $\sigma(u)$.
            \item For each $v \in T_m$, we define $\delta(v) = (1 + \frac{\eta}{2})^{h+ 1}  - \gee(v)$ (the maximal amount we can increase $\gee(v)$ such that it does not arrive in level $h + 1$). We connect every $v \in T_m$ to a unique sink $t_v$ where the edge $\overline{v t_v}$ has capacity $\delta(v)$. 
            \item Each other edge $\overline{uv} \in E_s$ has capacity $\gee(u \tor v)$. 
        \end{itemize}
\end{itemize}

\begin{figure}[h]
  \centering
  \includegraphics[width = \linewidth]{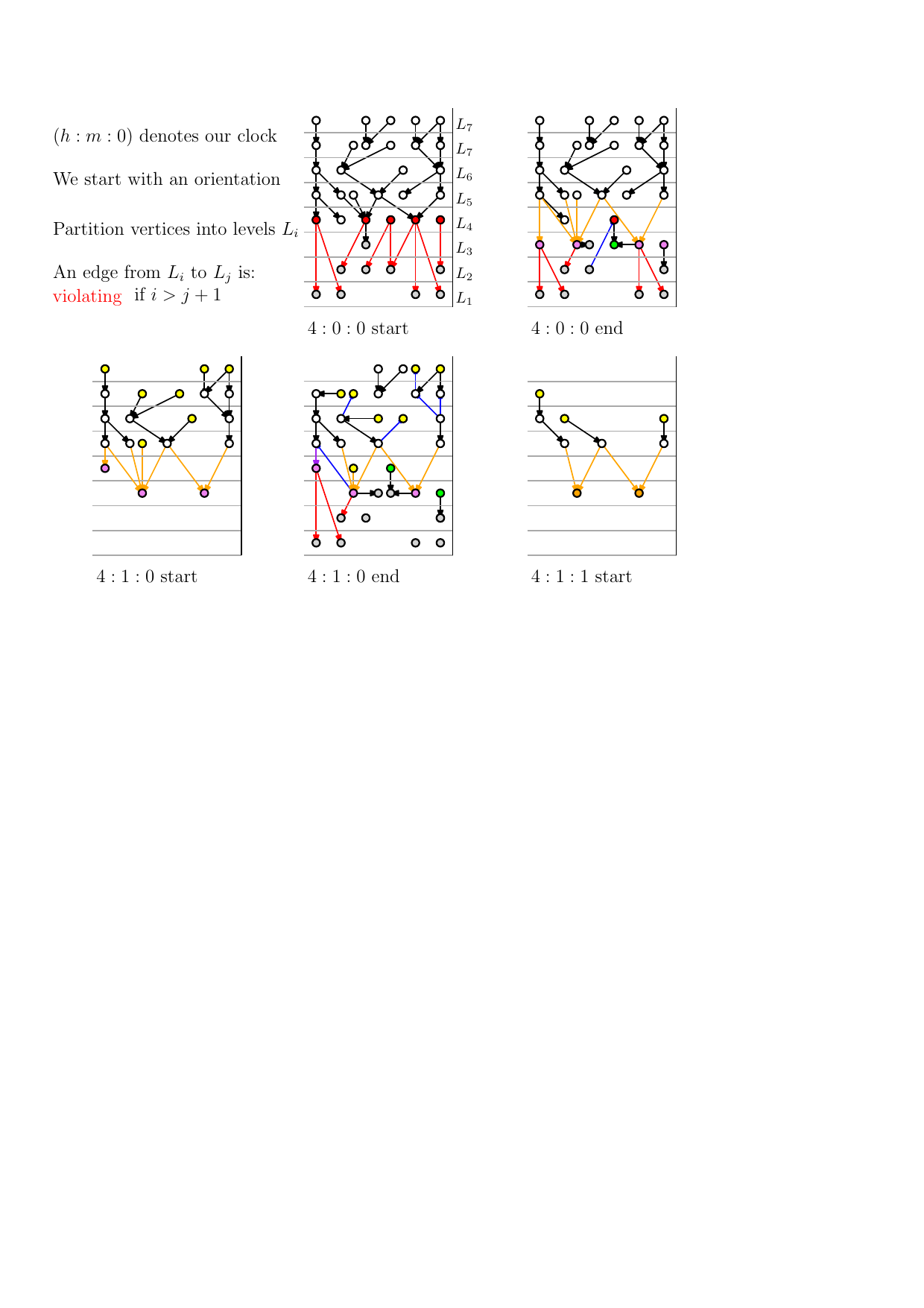}
  \caption{
$(4 : 0 : 0)$ - at the first \textsc{minute} start in \textsc{hour} $4$, we consider all violating out-edges $\overline{uv}$ from level $4$ (red). Per definition, these edges point to level $2$ or lower.  \newline
$(4 : 0 : 0)$ - at the first \textsc{minute} end, either $u$ has dropped a level (pink), $v$ increased their level to $h - 1$ (green) or the edge $\overline{uv}$ is flipped (blue). We consider violating in-edges to pink vertices (orange)  \newline
$(4 : 1 : 0)$ - at the first \textsc{second} start, we construct a DAG $D_0$ where the edges are the orange edges plus black edges. The vertex set are all $u$ with a directed path to a pink vertex ($ v \in T_1$). The yellow vertices are $S_0$. \newline
$(4 : 1 : 0)$ -  at the first \textsc{second} end, edges in the DAG may have flipped (blue), vertices in $S_0$ may have dropped a level or vertices in $T_1$ may have increased a level (making some edges no longer violating -- purple).  \newline
$(4 : 1 : 1)$ - at the second \textsc{second} start, we construct a DAG $D_1$. Note that $D_1$ is a subgraph of $D_0$. \vspace{-0.5cm}
  }
  \label{fig:algorithm_overview}
\end{figure}

\subsection{Formal algorithm definition.}

We now formalise our algorithm top-down, starting with defining variables.

\begin{definition}
    At the start of $(h, m', 0)$, where $m'$  is even, we fix the set
    $
    V_{m'} := \{  v \in L_h \mid v \textnormal{ has at least one violating out-edge} \}. $
    At the start of $(h, m, 0)$  where $m$  is odd, we fix:
    \begin{align*}
    T_m := \{  v \in V_{m-1} \mid \quad   &v \textnormal{ decreased by one level in the previous } \textsc{minute} \textnormal{ and }  \\ & v \textnormal{ has at least one violating in-edge} \}.
    \end{align*}
    Finally, we denote for any vertex $u$ by $l_m(u)$ its level \emph{at the start of the }\textsc{minute} $m$.
\end{definition}

\begin{definition}
    \label{def:variables}
    At the start of $(h : m : s)$, where $m$ is odd,  we fix the following:
\begin{itemize}
    \item For any vertex $u$, we denote by $l_s(u)$ its level {at the start of the \textsc{second}}. 
    \item For any edge $\overline{uv}$ denote by $\gee_s(u \tor v)$ the out-degree from $u$ to $v$ at the start of the \textsc{second}. 
    \item We define the edge set  $E_s$ as all violating in-edges to vertices in $T_m$ plus all $\overline{uv}$ with:  \\
$
            l_s(u) = l_m(u) > h + 1$, $l_s(u) > l_s(v)$, and $\gee_s(u \tor v) > 0$. 
    \item $V_s$ are all vertices with a directed path to a vertex in $T_m$.
    \item $D_s = (V_s, E_s)$ is a DAG where $S_s$ are all the sources (per definition $S_s \cap T_m = \emptyset$).
    \end{itemize} 
\end{definition}
\newpage

\begin{definition}
    \label{def:DAG}
    At the start of $(h, m, s)$ where $m$ is odd, given $T_m$, $S_s$ and $D_s$, we define the DAG $D_s^*$ by connecting all $u \in S_s$ to a unique sink $s_u$ and all $v \in T_m$ to a unique sink $t_v$.
    \begin{itemize}
                 \item  For $u \in S_s$, the edge $\overline{s_u u}$ has capacity $\sigma(u) = \gee_s(u) - (1 + \frac{\eta}{2})^{l_m(u) - 1}$.
            \item For $v \in T_m$, the edge $\overline{v t_v}$ has capacity $\delta(v) = (1 + \frac{\eta}{2})^{h+ 1}  - \gee_s(v)$.
            \item Each other edge $\overline{uv} \in E_s$ has capacity $\gee_s(u \tor v)$. 
    \end{itemize}
\end{definition}

\begin{observation}
      At the start of $(h : m : s)$ with $m$ odd, if every vertex $u$ is given $l_m(u)$, whether $u \in T_m$, and $h$, then we may compute all elements of Definitions~\ref{def:variables} and \ref{def:DAG} in $\ell$ rounds. 
\end{observation}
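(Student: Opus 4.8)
The plan is to observe that all but one of the objects in Definitions~\ref{def:variables} and~\ref{def:DAG} are determined by information held at a single vertex or at the two endpoints of an edge, so they can be computed with $O(1)$ rounds of communication; the one genuinely non-local object is the vertex set $V_s$, which is a reachability set, and I would compute it by backward token-propagation along the edges of $E_s$, bounding the number of rounds via the layered structure from Lemma~\ref{lem:elbound}.

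Concretely: at the start of the \textsc{second} each vertex $u$ already maintains its own out-degree $\gee_s(u)$, hence knows $l_s(u)$; by hypothesis it is also given $l_m(u)$, the bit $\theta[u \in T_m]$, and $h$; and each endpoint of an edge $\overline{uv}$ already knows $\gee_s(u \tor v)$. So $l_s(\cdot)$ and $\gee_s(\cdot \tor \cdot)$ cost zero rounds. I would then spend one round in which every vertex sends $(l_s(u), l_m(u), \theta[u \in T_m])$ to its neighbours. After this round each edge $\overline{uv}$ can decide locally whether it lies in $E_s$: it checks whether it is a violating in-edge to a vertex of $T_m$ (which needs both levels, the bit $\gee_s(u \tor v) > 0$, and $T_m$-membership of the head), or whether $l_s(u) = l_m(u) > h+1$, $l_s(u) > l_s(v)$ and $\gee_s(u \tor v) > 0$. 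Thus after one round every vertex knows which of its incident edges belong to $E_s$ and their orientations.

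To obtain $V_s$ (and thereby $D_s$ and $S_s$) I would mark every vertex of $T_m$ and then, for $\ell$ rounds, let each marked vertex send a mark backward along every incident $E_s$-edge on which it is the head; a vertex ends up marked exactly when it has a directed $E_s$-path to $T_m$, i.e. precisely when it belongs to $V_s$. Once $V_s$ is known, $D_s = (V_s, E_s)$ is known to each vertex (it knows which incident $E_s$-edges have both endpoints marked), and a vertex of $V_s$ recognises itself as a source in $S_s$ iff it has no incoming $E_s$-edge. Finally $D_s^*$ needs no further communication: the virtual vertex $s_u$ (resp. $t_v$) and the edge $\overline{s_u u}$ (resp. $\overline{v t_v}$) are simulated at $u$ (resp. $v$), the capacity $\sigma(u) = \gee_s(u) - (1 + \frac{\eta}{2})^{l_m(u) - 1}$ (resp. $\delta(v) = (1 + \frac{\eta}{2})^{h+1} - \gee_s(v)$) is a function of data $u$ (resp. $v$) already holds, and each interior edge retains capacity $\gee_s(u \tor v)$.

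\textbf{The main obstacle}, and the only step that is not pure bookkeeping, is arguing that the reachability propagation terminates within $\ell$ rounds. Here I would use that every edge of $E_s$ points from a strictly higher level to a strictly lower level --- the explicitly listed edges satisfy $l_s(u) > l_s(v)$, and a violating in-edge has level gap at least two by the definition of \emph{violating} --- so along any directed $E_s$-path the level strictly decreases; hence such a path has length at most $\ell'$, and by Lemma~\ref{lem:elbound} $\ell' \le \ell$. Adding the single classification round, the whole computation costs $\ell + O(1)$ rounds, which fits the $\ell$-round allotment of a \textsc{second} (up to the additive Blocking term), proving the observation.
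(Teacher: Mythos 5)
Your proof is correct, and the paper in fact gives no proof of this observation at all — it is stated as self-evident and the text moves on — so what you have written is a legitimate filling-in of the intended argument. The key points are all right: $l_s(u)$, $\gee_s(u \tor v)$, and the edge set $E_s$ are local once a single exchange round has taken place; the reachability set $V_s$ is the only non-local object; and the level strictly decreases along every directed $E_s$-edge (by $l_s(u) > l_s(v)$ for the second type, and by the level gap $\geq 2$ for a violating in-edge), so the backward-propagation from $T_m$ saturates in at most $\ell' \le \ell$ rounds. Once $V_s$ is known, $S_s$ is identified locally, and $D_s^*$ is a purely local decoration (with $\eta$ and $h$ known globally). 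This is surely what the authors had in mind, especially as they themselves record exactly the same level-monotonicity fact in Lemma~\ref{lem:decreasing_graph} and bound the height of $D_0$ by $\ell - h + 1$.

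One small piece of loose accounting: you conclude a round cost of ``$\ell + O(1)$'' and then claim this fits the $\ell$-round budget. It is cleaner to use the sharper height bound the paper itself uses — paths in $D_s$ go from a level at most $\ell'$ to level $h-1$, so they have length at most $\ell' - (h-1) \le \ell - h + 1$, and the propagation plus the one classification round cost $\ell - h + 2 \le \ell$ for $h \ge 2$ — or simply to note that the additive constant is absorbed in the paper's $O(\cdot)$ notation for the overall round complexity, as the claim ``$\ell$ rounds'' is being used as a coarse upper bound inside a bigger product. Neither reading affects the final theorem.
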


 \begin{algorithm}[H]
    \caption{\textsc{hour}($h$)}
    \label{alg:hour}
    \begin{algorithmic}
    \FOR{$m = 0$ to $2 \lceil \eta^{-1} \rceil + 2$} 
    \IF{ $m$ is even}
    \STATE \textsc{evenminute}($h$, $m$)
    \ELSE
    \STATE \textsc{oddminute}($h$, $m$)
    \ENDIF
    \ENDFOR
    \IF{$h > 0$}
    \STATE \textsc{hour}$(h-1)$
    \ENDIF
    \end{algorithmic}
  \end{algorithm}

 \begin{algorithm}[H]
    \caption{\textsc{evenminute}(int $h$, int $m$)}
    \label{alg:evenminute}
    \begin{algorithmic}
    \STATE $V_m := \{ v \in L_h \mid v \textnormal{ has at least one violating out-edge } \}$
       \FOR{ $t = 0$ to $2 \lceil \log_{8/7} n  \rceil + 1$}
    \STATE Let $A_t \subset V_m$ be the set of vertices at level $h$ with at least one violating out-edge
    \STATE Each $a \in A_t$ determines the set $E_t(a)$ of violating out-edges.
    \STATE Each $a \in A_t$ computes $\delta_t(a) = \gee(a) - (1 + \eta)^{h - 1}$ and reports $\delta_t(a) / |E_t(a)|$ across $E_t(a)$. 
    
    \tcc{next round:}
    \STATE Let $B_t$ be the set of vertices that receive at least one violating in-edge from $A_t$. 
    \STATE Each  $b \in B_t$ determines the set $I_t(b)$ of vertices that reported a value to $v$.
    \STATE Each  $b \in B_t$ sorts the $ a \in I_t(b)$ by $\delta(a)$.  
    \STATE Each  $b \in B_t$ greedily decreases $\gee(a \tor b)$ by at most $\delta_t(a)$ for $a \in I_t(b)$; until $\gee(b) = (1 + \frac{\eta}{2})^h$.      
    \ENDFOR
    \end{algorithmic}
  \end{algorithm}

 \begin{algorithm}[H]
    \caption{\textsc{oddminute}(int $h$, int $m$)}
    \label{alg:minute}
    \begin{algorithmic}
    \STATE $T_m := \{v \in V_{m-1} \mid v \textnormal{ has at least one violating in-edge } \}$
    \FOR{ $s = \ell$ down to $h$}
    \STATE \textsc{second}$(h, m ,s)$
    \ENDFOR
    \end{algorithmic}
  \end{algorithm}

 \begin{algorithm}[H]
    \caption{\textsc{second}(int $h$, int $m$, int $s$)}
    \label{alg:second}
    \begin{algorithmic}
    \STATE Compute the graph $D_s$ in $\ell$ rounds. 
    \STATE Compute the graph $D_s^*$ by adding a shared source to $S_s$ and a shared sink to $T_m$. 
    \STATE $f =$ \textsc{ComputeBlockingFlow}($D_s^*$)
    \STATE Flip all edges in $D_s$ with the corresponding flow in $f$. 
    \end{algorithmic}
  \end{algorithm}

\subsection{Sketching our algorithm's correctness.}

Per definition, our algorithm runs in $O( \eta^{-1}  \log n \cdot  \ell \cdot (\ell + \textnormal{Blocking}(\ell, n)) = O(  \eps^{-3} \log^4 n \cdot (\eps^{-2} \log^2 n + \textnormal{Blocking}(\eps^{-2} \log^2 n, n))  )$ rounds. 
What remains is to show that we maintain Invariant~\ref{inv:hour}, which implies that we compute an $\eta$-fair orientation. 

\noindent
We note that by the choice of our algorithm's variables, we have the following property:

\begin{observation}
    \label{obs:shift}
    For all times $(h : m : s)$ with $m$ odd:
    \begin{itemize}
        \item Vertices at level $k > h$ only decrease their level and by at most one (because afterwards, $l_s(u) < l_m(u)$ and thus $u$ has no out-edges in $E_s$),
    \item  vertices at level $h - 1$ only increase their level (by at most $1$), and
    \item vertices at level $k' < h - 1$ and level $h$ do not change their level. 
    \end{itemize}
\end{observation}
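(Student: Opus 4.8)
The statement concerns only odd \textsc{minute}s, and a \textsc{second} $(h:m:s)$ alters an out-degree $\gee(\cdot)$ only along the flipped directed paths of $D_s$ --- that is, only at a source in $S_s$, at an internal vertex of $D_s$, or at a sink in $T_m$. The plan is therefore to decide, for each possible current level $k$ of a vertex $u$, which of these three roles $u$ can occupy, and then to read the resulting level change off the capacities of $D_s^*$ in Definition~\ref{def:DAG}. The one structural fact I would use repeatedly is that the blocking flow $f$ computed in \textsc{second}$(h,m,s)$ obeys flow conservation at every vertex of $D_s^*$ other than the artificial global source and sink; hence at every \emph{internal} vertex $u$ of $D_s$ the flipped flow into $u$ equals the flipped flow out of $u$, and since flipping $\overline{ab}$ merely transfers $f(\overline{ab})$ from $\gee(a\tor b)$ to $\gee(b\tor a)$, the out-degree $\gee(u)$, and hence the level of $u$, is unchanged. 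So only the vertices of $S_s$ and $T_m$ can move between levels during a \textsc{second}.

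Next I would read off $E_s$ from Definition~\ref{def:variables}: an out-edge of a vertex $w$ in $E_s$ is either of the second type, which forces $l_s(w)=l_m(w)>h+1$, or a violating in-edge into some $t\in T_m$; every such $t$ began the \textsc{minute} at level $h-1$ (as $T_m\subseteq V_{m-1}\subseteq L_h$ and each $t$ dropped exactly one level in the preceding even \textsc{minute}) and, as the capacity argument below shows, never rises above level $h$, so $l_s(t)\ge h-1$ and the violation inequality $l_s(w)>l_s(t)+1$ forces $l_s(w)\ge h+1$. Consequently a vertex at level $h$, or at a level $k'<h-1$, has no out-edge in $E_s$, so it is neither a source nor an internal vertex of $D_s$; if it is moreover not in $T_m$ it does not lie on any flipped path and its level is frozen --- which, together with the sink analysis below, settles the third bullet. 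A vertex at level $k>h$ is never a sink (since $T_m$-levels stay $\le h$), so it is a source or an internal vertex of $D_s$, and only in the source case can its level change.

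For the monotonicity and the ``at most one level'' bounds I would invoke the two artificial capacities. A sink $v\in T_m$ only \emph{receives} flipped flow, so $\gee(v)$ is non-decreasing over the \textsc{minute}; by conservation the gain equals the flow on $\overline{v t_v}$, which has capacity $\delta(v)=(1+\frac{\eta}{2})^{h+1}-\gee_s(v)$, so $\gee(v)\le(1+\frac{\eta}{2})^{h+1}$, and by the tie-breaking rule (a vertex whose out-degree grows defaults to the lowest admissible level) $v$ stays at level $\le h$ --- this is the second bullet, and it also confirms the claim used above that $T_m$-levels never exceed $h$. A source $u\in S_s$ only \emph{emits} flipped flow, so $\gee(u)$ is non-increasing, and by conservation the loss equals the flow on $\overline{s_u u}$, which has capacity $\sigma(u)=\gee_s(u)-(1+\frac{\eta}{2})^{l_m(u)-1}$, so $\gee(u)\ge(1+\frac{\eta}{2})^{l_m(u)-1}$; if $u$ has not yet moved this \textsc{minute} then $l_s(u)=l_m(u)$ and $\gee_s(u)<(1+\frac{\eta}{2})^{l_m(u)+1}$, so after this \textsc{second} $u$ lies in level $l_m(u)-1$ or $l_m(u)$ --- a drop of at most one level --- which yields the first bullet modulo the accumulation issue below.

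The point I expect to require the most care is ruling out that the drop accumulates over several \textsc{second}s. Once a level-$k$ ($k>h$) source $u$ has dropped to level $l_m(u)-1$, its only way back into a later DAG $D_{s'}$ is via a violating in-edge into $T_m$, since it no longer has a type-(a) out-edge (those require $l_{s'}(u)=l_m(u)$). But once $u$ sits in level $l_m(u)-1$ we have $\gee_{s'}(u)\ge(1+\frac{\eta}{2})^{l_m(u)-1}$, so $\sigma(u)=\gee_{s'}(u)-(1+\frac{\eta}{2})^{l_m(u)-1}<(1+\frac{\eta}{2})^{l_m(u)}-(1+\frac{\eta}{2})^{l_m(u)-1}$; hence any further decrease keeps $\gee(u)$ inside level $l_m(u)-1$, and the total drop over the whole \textsc{minute} remains at most one level. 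The remaining bookkeeping --- that a \textsc{second} changes out-degrees only along $D_s$, the exact composition of $T_m$, and flow conservation of a blocking flow --- is immediate from the algorithm and Definitions~\ref{def:variables}--\ref{def:DAG}.
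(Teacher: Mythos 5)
Your proposal is correct, and it does more careful work than the paper, which states this as an \emph{Observation} with only a one-line parenthetical hint rather than a full proof. Your overall strategy --- classify each vertex as source, internal, or sink of $D_s$, use flow conservation at internal vertices to freeze their levels, and then read the admissible motion of sources and sinks off the artificial capacities $\sigma(u)$ and $\delta(v)$ --- is the intended unpacking of Definitions~\ref{def:variables} and \ref{def:DAG}, and every step checks out, including the tie-breaking appeals that pin a vertex at $(1+\tfrac{\eta}{2})^{l_m(u)-1}$ to level $l_m(u)-1$ (decrease defaults to the higher level) and a vertex at $(1+\tfrac{\eta}{2})^{h+1}$ to level $h$ (increase defaults to the lower level).

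One substantive point in your favour is worth emphasising. The paper's parenthetical justification for the first bullet --- ``afterwards $l_s(u) < l_m(u)$ and thus $u$ has no out-edges in $E_s$'' --- only rules out the second kind of edge in the definition of $E_s$ (the kind requiring $l_s(u)=l_m(u)$). A dropped vertex $u$ at level $l_m(u)-1 \geq h+1$ can still be the tail of a \emph{violating in-edge into $T_m$}, which is the first kind of edge in $E_s$, and hence can re-enter a later $D_{s'}$ as a source. So the paper's hint, read literally, does not forbid a second drop. You correctly route around this by observing that what actually blocks a two-level drop is that $\sigma(u)=\gee_{s'}(u)-(1+\tfrac{\eta}{2})^{l_m(u)-1}$ is anchored to the \emph{minute}-start level $l_m(u)$, which does not change across seconds; hence $\gee(u)\geq(1+\tfrac{\eta}{2})^{l_m(u)-1}$ is preserved no matter how often $u$ reappears as a source. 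This is the right reason, and your proof is the cleaner one. The only cosmetic note is the forward reference to ``the capacity argument below'' when first bounding $T_m$-levels; in a final write-up you would establish the sink bound before using it in the $E_s$-membership analysis, but the logic is not circular.
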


We use this observation to show that we maintain Invariant~\ref{inv:hour} by induction. Trivially, Invariant~\ref{inv:hour} holds at the start of $(\ell : 0 : 0)$. 
We now assume that the invariant holds at $(h : 0 : 0)$, i.e. that there are no violating out-edges from level $L_k$ with $k > h$. 
We prove that our algorithm ensures that at the start of $(h-1 : 0 : 0)$ there are no violating out-edges from level $L_{k}$ with $k \geq h$.

Moreover, we maintain the invariant that throughout \textsc{hour} $(h-1)$ there are no violating out-edges from level $L_{k'}$ with $k' > h + 1$.
We prove this in the following way:

\begin{itemize}
\item     During $(h : m : s)$ for $m$ even, our algorithm eliminates all violating edges going from $L_h$. We show that in each iteration, the number of violating edges from $L_h$ drops by a factor $7/8$  (Lemma~\ref{lem:oneeight}), since the graph has at most $n^2$ edges. This implies that after the \textsc{minute}, there are no more violating edges going from $L_h$ (Corollary~\ref{cor:odd}). However, there may now be violating out-edges from vertices in $L_{h+1}$ to vertices in $T_{m+1} \subseteq  L_{h-1}$. 
    \item During $(h : m : s)$ for $m$ odd, our algorithm eliminates all violating edges going from $L_{h+1}$ to $T_m$. We show that for all $s$, the DAG $D_{s}$ is a subgraph of $D_{s-1}$ where the height is one fewer (Lemma~\ref{lem:decreasing_graph}).   Since the height of $D_s$ is at most $\ell - h + 1$, this implies that after the \textsc{minute} $m$, the graph $D_{s}$ is empty and there are no more violating in-edges to $T_m$ (Corollary~\ref{cor:even}).

    \item Each \textsc{minute}, our algorithm alternates between having violating edges from $L_h$ and $L_{h+1}$. We show that our algorithm can alternate at most $\eta$ times before there are no violating edges from both $L_h$ and $L_{h+1}$, which implies  Invariant~\ref{inv:hour} at the start of $(h-1:0:0)$.  (Theorem~\ref{thm:main_congest}).
\end{itemize}
Invariant~\ref{inv:hour} implies that we compute an $\eta$-fair orientation at time $(0:0:0)$, thus:
\main*

\noindent
We plug in the runtime of Blocking$(h, n)$ of Lemma~\ref{lem:blocking} by  Haeupler, Hershkowitz, and Saranurak~\cite{haeupler2023maximum} to obtain the following runtime:

\begin{restatable}{corollary}{cormain}
    \label{cor:main}
There is an algorithm in CONGEST that given a unit-weight graph $G$ and an $\eps > 0$ computes an orientation $\overrightarrow{G}$ such that $\forall v \in V$, the out-degree $\gee(v) \in [(1 + \eps)^{-1} \rho^*(v), (1 + \eps)\rho^*(v) ]$ in:
\begin{itemize}
         \item $\tilde{O} (\eps^{-11} \log^{12} n )$ rounds with high probability, or 
        \item $\tilde{O}( \eps^{-15} \log^{16} n \cdot 2^{O(\sqrt{\log n})})$ deterministic  rounds.
\end{itemize}
\end{restatable}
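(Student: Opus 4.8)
The plan is to derive Corollary~\ref{cor:main} purely by substitution, since Theorem~\ref{thm:main} has already done all the real work: it reduces the task to a single black box — a CONGEST subroutine computing a blocking flow in an $h$-layered DAG in $\textnormal{Blocking}(h,n)$ rounds — and produces the desired $\eta$-fair orientation in $O(\eps^{-3}\log^4 n \cdot (\eps^{-2}\log^2 n + \textnormal{Blocking}(\eps^{-2}\log^2 n, n)))$ rounds. So I would first instantiate $h := \eps^{-2}\log^2 n$ (the layer count $\ell$ used throughout Section~\ref{sec:comp}), and then plug in the two bounds from Lemma~\ref{lem:blocking} of Haeupler, Hershkowitz, and Saranurak.

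For the randomised bound, substitute $\textnormal{Blocking}(h,n) = \tilde O(h^4)$, so that $\textnormal{Blocking}(\eps^{-2}\log^2 n, n) = \tilde O(\eps^{-8}\log^8 n)$; this term dominates the additive $\eps^{-2}\log^2 n$, and multiplying by the outer factor $\eps^{-3}\log^4 n$ yields $\tilde O(\eps^{-11}\log^{12} n)$ rounds with high probability. For the deterministic bound, substitute $\textnormal{Blocking}(h,n) = \tilde O(h^6 \cdot 2^{c\sqrt{\log n}})$, so that $\textnormal{Blocking}(\eps^{-2}\log^2 n, n) = \tilde O(\eps^{-12}\log^{12} n \cdot 2^{c\sqrt{\log n}})$; again this dominates the additive term, and multiplying by $\eps^{-3}\log^4 n$ gives $\tilde O(\eps^{-15}\log^{16} n \cdot 2^{O(\sqrt{\log n})})$ deterministic rounds. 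Correctness of the output orientation — that $\gee(v) \in [(1+\eps)^{-1}\rho^*(v), (1+\eps)\rho^*(v)]$ for every $v$ — is inherited verbatim from Theorem~\ref{thm:main} (which itself rests on Invariant~\ref{inv:hour} and Theorem~\ref{thm:approx}), so nothing further is needed on that front.

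I do not expect a genuine obstacle here; the only care needed is bookkeeping. I would make explicit that the $\tilde O$ in Lemma~\ref{lem:blocking} hides $\polylog(hn)$ factors, which for $h = \poly(\eps^{-1},\log n)$ are themselves $\polylog n$ times powers of $\eps^{-1}$ and hence safely absorbed into the stated exponents; that the constant in the exponent of $2^{O(\sqrt{\log n})}$ is fixed and independent of $\eps$; and that this superpolylogarithmic factor dominates both the additive $\ell$ term and any polylog overhead, so the closed forms above are tight up to the hidden polylog. A closing remark that the total round complexity is therefore $\poly(\log n,\eps^{-1}) \cdot 2^{O(\sqrt{\log n})}$, hence sublinear in $n$ (indeed in the diameter), matches the CONGEST rows of Table~\ref{tab:results} and completes the corollary.
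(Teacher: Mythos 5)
Your proposal is correct and takes exactly the same approach as the paper: plug the two bounds from Lemma~\ref{lem:blocking} into the round complexity of Theorem~\ref{thm:main} with $h = \eps^{-2}\log^2 n$, and observe that the $\textnormal{Blocking}$ term dominates the additive $\eps^{-2}\log^2 n$. The arithmetic checks out ($\eps^{-3}\log^4 n \cdot \eps^{-8}\log^8 n = \eps^{-11}\log^{12} n$ and $\eps^{-3}\log^4 n \cdot \eps^{-12}\log^{12} n = \eps^{-15}\log^{16} n$), and correctness of the orientation is indeed inherited verbatim from Theorem~\ref{thm:main}.
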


Finally, we apply Corollary~\ref{cor:equal} which states that $\rho^{\max}(G)= \Delta^{\min}(G) = \max_v \gee^*(v)$ to conclude:

\begin{corollary}
    There is a deterministic algorithm in CONGEST that given a unit-weight graph $G$ and an $\eps > 0$, that computes an orientation $\overrightarrow{G}$ such that:
    \[
    \max_{v \in V} \gee(v) \in [(1+\eps)^{-1}\rho^{\max}(G), (1+\eps) \rho^{\max}(G)],
    \]
    in a number of rounds that is sublinear in $n$. 
\end{corollary}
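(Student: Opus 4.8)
The plan is to read this off as an immediate consequence of the deterministic branch of Corollary~\ref{cor:main}, combined with the identity $\rho^{\max}(G) = \max_{v} \gee^*(v)$ from Corollary~\ref{cor:equal} (recall $\gee^*(v) = \rho^*(v)$ by Theorem~\ref{thm:localDensity}) and the elementary fact that the coordinate-wise maximum of multiplicative approximations is a multiplicative approximation of the maximum. First I would invoke Corollary~\ref{cor:main}: its deterministic guarantee produces, in $\tilde O(\eps^{-15}\log^{16} n \cdot 2^{O(\sqrt{\log n})})$ rounds of CONGEST, a fractional orientation $\overrightarrow{G}$ in which every vertex $v$ has out-degree $\gee(v) \in [(1+\eps)^{-1}\rho^*(v),\,(1+\eps)\rho^*(v)]$.

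Next I would observe that this round complexity is sublinear in $n$: for fixed $\eps$ (more generally, whenever $\eps^{-1} = n^{o(1)}$, which is the regime in which the preceding results are stated), both $\poly\log n$ and $2^{O(\sqrt{\log n})}$ are $n^{o(1)}$, so the total is $n^{o(1)} = o(n)$. This settles the running-time part of the claim and requires nothing beyond arithmetic on exponents.

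It then remains to pass from the per-vertex bounds to the global bound. Let $w \in \arg\max_{v} \gee(v)$ and let $v^* \in \arg\max_{v} \rho^*(v)$. From the per-vertex guarantee, $\max_v \gee(v) = \gee(w) \le (1+\eps)\rho^*(w) \le (1+\eps)\max_v \rho^*(v)$, and also $\max_v \gee(v) \ge \gee(v^*) \ge (1+\eps)^{-1}\rho^*(v^*) = (1+\eps)^{-1}\max_v \rho^*(v)$; hence $\max_v \gee(v) \in [(1+\eps)^{-1}\max_v \rho^*(v),\,(1+\eps)\max_v \rho^*(v)]$. Substituting $\max_v \rho^*(v) = \max_v \gee^*(v) = \rho^{\max}(G)$ from Corollary~\ref{cor:equal} rewrites this as $\max_{v\in V}\gee(v) \in [(1+\eps)^{-1}\rho^{\max}(G),\,(1+\eps)\rho^{\max}(G)]$, which is exactly the statement. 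There is no real obstacle in this argument: all of the substance has already been established in Corollary~\ref{cor:main} (the CONGEST algorithm and its analysis) and Corollary~\ref{cor:equal} (the duality identity), and the only new ingredients are the trivial max-of-approximations inequality and the remark that $n^{o(1)}$ is sublinear.
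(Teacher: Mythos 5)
Your proposal matches the paper's reasoning exactly: the paper also derives this corollary by applying Corollary~\ref{cor:equal} (which gives $\rho^{\max}(G) = \max_v \gee^*(v)$) directly to the deterministic branch of Corollary~\ref{cor:main}, with the max-of-approximations step left implicit. Your explicit treatment of that step and of the regime in which the round count is sublinear is a faithful and complete expansion of what the paper states as an immediate consequence.
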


\subsection{Formally proving correctness}
We formalise Algorithms~\ref{alg:hour} and \ref{alg:evenminute} and \ref{alg:minute} and \ref{alg:second} and prove that they maintain Invariant~\ref{inv:hour}. 

First, we prove in three steps that after each odd \textsc{minute}
 (i.e., at the start of each even \textsc{minute}) there are only violating out-edges from $L_h$ and below (Corollary~\ref{cor:even}).

\begin{lemma}
    \label{lem:only_h_plus_one}
    For all times $(h : m : s)$ with $m$ odd, there are no violating edges from above level $h+1$.  
\end{lemma}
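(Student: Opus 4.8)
The plan is to prove this by induction on the clock, using Invariant~\ref{inv:hour} together with Observation~\ref{obs:shift} to control how levels can change. At a high level, the statement says that during any odd \textsc{minute} of \textsc{hour} $h$, no edge is violating if its tail sits strictly above level $h+1$. Since Invariant~\ref{inv:hour} already guarantees that there are no violating out-edges from level $k$ for all $k > h+1$ \emph{during} \textsc{hour} $h$, the content of this lemma is really that the \textsc{second}-by-\textsc{second} updates performed inside an odd \textsc{minute} do not newly create such an edge. So the induction is over the \textsc{seconds} $s$ within a fixed odd \textsc{minute} $m$ of \textsc{hour} $h$.

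The key steps I would carry out are as follows. First, I would fix $h$, fix an odd \textsc{minute} $m$, and observe that at the start of $(h:m:0)$ there are no violating edges from above level $h+1$: this is exactly Invariant~\ref{inv:hour} applied at \textsc{hour} $h$ (a violating out-edge from level $k>h+1$ is in particular a violating out-edge, and the invariant forbids these during \textsc{hour} $h$). Second, I would argue the inductive step: assuming there are no violating edges from above level $h+1$ at the start of \textsc{second} $s$, I must show the same holds at the start of \textsc{second} $s+1$. By Observation~\ref{obs:shift}, during $(h:m:s)$ a vertex at level $k > h$ can only decrease its level, and by at most one; and a vertex at level $k' < h-1$ does not change its level. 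Hence no vertex strictly above level $h+1$ gains out-degree, and no tail of a previously non-violating edge is pushed upward. The only way a new violating edge from above level $h+1$ could appear is if some edge $\overline{uv}$ with $u$ at level $> h+1$ sees its head $v$ drop — but the only vertices that change level downward in a way relevant here are the source vertices $u' \in S_s$ (dropping at most one level, and by construction of $\sigma(u')$ never below $l_m(u')-1$, so never two levels), and I would check that such a drop of a head vertex by one level cannot turn a non-violating edge from a level-$(>h+1)$ tail into a violating one, because the tail's level is itself unchanged and was already at most one above the head (else the edge would have been violating at the start of the \textsc{second}, contradicting the induction hypothesis) — wait, that last point needs care, so this is where I expect the real work to be.

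The main obstacle is precisely the bookkeeping around head vertices dropping a level: I need to rule out that an edge $\overline{uv}$ with $u$ at some level $k > h+1$, currently \emph{exactly} one level above $v$ (hence an out-edge but not violating), becomes violating because $v$ drops. The resolution should be that the vertices which drop a level during \textsc{second} $s$ are exactly those in $S_s$, and by Definition~\ref{def:variables} any such vertex $u'$ has $l_s(u') = l_m(u') > h+1$, i.e. it participates in $D_s$ only while still at its \textsc{minute}-start level; once it drops, $l_s(u') < l_m(u')$ and it leaves $D_s$. Crucially $S_s$ vertices drop by \emph{at most one} level (the capacity $\sigma(u') = \gee_s(u') - (1+\tfrac{\eta}{2})^{l_m(u')-1}$ is exactly the amount that keeps $u'$ from falling into level $l_m(u')-2$). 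So a head $v=u'$ that was at level $l_m(u') = k-1 > h+1$ and drops to $k-2$ would indeed make $\overline{uv}$ violating — so I must instead argue that such an edge $\overline{uv}$ with $u$ at level $k$ and $v \in S_s$ at level $k-1$ cannot exist going \emph{into} the \textsc{second}, OR that $u$ is itself in $V_s$ and gets flipped/lowered in tandem. I would close this by invoking the structure of $D_s$: if $v \in S_s$ then $v$ has an out-edge in $E_s$, and the chain of out-edges in $D_s$ propagates upward through exactly the level-$l_m(\cdot)$ vertices above $h+1$; I would show that $u$ (being a predecessor of $v$ in $E_s$, since $l_s(u) > l_s(v)$ and $\gee_s(u\tor v)>0$ put $\overline{uv}\in E_s$) lies in $V_s$ as well, and the blocking flow, by flipping a path through $\overline{uv}$ or by saturating it, prevents $u$ from being stranded one level above a dropped $v$. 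If that argument is too delicate, the fallback is a cleaner global claim: track the quantity ``number of edges violating-from-above-$h+1$'' and show each \textsc{second} cannot increase it from zero, using that $D_s$ only contains edges whose tails are at their \textsc{minute}-start level, so flipping flow along $D_s$ only ever \emph{removes} edges from the ``above $h+1$'' danger zone (tails drop out of it) and never adds any. I expect the latter framing to be the one that actually appears in the paper, and I would write the proof around it: namely, no edge outside $D_s$ is touched during \textsc{second} $s$, edges inside $D_s$ have tails that only move downward, and an edge becomes newly violating-from-above-$h+1$ only by its tail being pushed up or staying while its head drops two levels — neither of which happens — giving the induction step and hence the lemma.
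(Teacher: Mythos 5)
Your high-level plan — induction over \textsc{seconds}, base case from Invariant~\ref{inv:hour}, and isolating the case where a head vertex $v$ drops one level while its tail $u$ sits at level $l_s(u) = l_s(v)+1 > h+1$ — is the right one, but neither of your two attempted closings works, and the first is internally inconsistent. You write that ``$l_s(u) > l_s(v)$ and $\gee_s(u\tor v) > 0$ put $\overline{uv} \in E_s$''; this omits the requirement $l_s(u) = l_m(u) > h+1$ in Definition~\ref{def:variables}, and more importantly, if $\overline{uv}$ \emph{were} in $E_s$ then $v$ would have an in-edge in $D_s$, hence would not be a source, hence would not drop — contradicting the very case you set out to handle. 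The inference has to run the other way: because $v$ drops it is a source, so $\overline{uv} \notin E_s$; since $l_s(u) > h+1$, $l_s(u) > l_s(v)$, and $\gee_s(u\tor v) > 0$ all hold, the only failing membership condition is $l_s(u) = l_m(u)$, so $u$ has already dropped once this \textsc{minute} and $l_m(u) = l_s(u)+1$. Combined with $l_s(v) = l_m(v)$ ($v$ is dropping now for the first time), this gives $l_m(u) - l_m(v) = 2$ with $\gee_0(u\tor v) > 0$: the edge $\overline{uv}$ was \emph{already} violating from level $l_m(u) > h+1$ at the start of the \textsc{minute}, contradicting the base case. This trace-back to the minute start is the paper's actual argument and is exactly the step your proposal is missing.

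Your fallback also does not stand on its own: the claim that ``an edge becomes newly violating-from-above-$h+1$ only by its tail being pushed up or staying while its head drops two levels'' is false — as you observe a few lines earlier, a head that starts exactly one level below the tail and drops a single level already yields a violation. So the careful $E_s$-membership bookkeeping cannot be avoided.
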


\begin{proof}
    At the start of $(h : m : 0)$, there can only be violating in-edges from vertices in $L_{h+1}$ to vertices in $T_m$. 
    Consider for the sake of contradiction the first second $s > 0$, where after $(h : m : s)$ there exists an outgoing edge $(u, v)$ with $u \in L_k$ and $k > h + 2$.
Each vertex $v$ decreases its out-degree by at most one level, and thus $l_s(u) = l_s(v) + 1$, $l_s(v) = l_m(v)$ and $v$ decreased its out-degree by one during the \textsc{second} $s$.     
    Since the edge $\overline{uv}$ is violating at the end of $(h : m : s)$, it must be that $\gee_s(u \tor v) > 0$.
    For each edge $\overline{ab}$ with $l_m(a) > l_m(b)$ the out-degree $\gee(a \tor b)$ only decreases and thus $\gee_0(u \tor v) > 0$. Moreover, by Observation~\ref{obs:shift}, $l_s(v) = l_m(v) \geq h + 1$. 
    
    At the start of $(h : m : s)$, the vertex $v$ must a source in $D_s$ (since only sources decrease their out-degree). 
    However, by the definition of $E_s$, this implies that $l_s(u) \neq l_m(u)$ (else, $E_s$ must include the edge $\overline{uv}$). Thus, $l_m(u) = l_s(u) + 1$.

    However, we now get that $l_m(u) = l_s(u) + 1 = l_s(v) + 2 = l_m(v) + 2$ and $\gee_0(u \tor v) > 0$. 
    This contradicts that $s$ is the first second where there is a violating edge from level $h+2$ or above, since at $(h : m : 0)$ the edge $u \tor v$ is violating. 
\end{proof}

\begin{lemma}
    \label{lem:decreasing_graph}
    For each time $(h : m : s)$ with $m$ odd, $D_s \subset D_{s+1}$ and the height of $D_{s+1}$ is at least one fewer than the height of $D_{s}$.  
\end{lemma}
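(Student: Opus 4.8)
The plan is to argue that the edge set $E_{s+1}$ is contained in $E_s$, and then deduce the containment of $D_{s+1}$ in $D_s$ together with the height drop. First I would recall the definition of $E_s$: it consists of the violating in-edges into $T_m$ together with all $\overline{uv}$ such that $l_s(u) = l_m(u) > h+1$, $l_s(u) > l_s(v)$, and $\gee_s(u \tor v) > 0$. The key monotonicity facts I will use are Observation~\ref{obs:shift} (during an odd \textsc{minute}, vertices above level $h$ only decrease their level and by at most one, vertices at level $h-1$ only increase, and all others are static) and the fact that for any edge $\overline{ab}$ with $l_m(a) > l_m(b)$ the quantity $\gee(a \tor b)$ only decreases over the course of the \textsc{minute} (since such edges are only ever flipped in the direction $a \tor b \to b \tor a$, never back).

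The main step is to show $E_{s+1} \subseteq E_s$. Take an edge $\overline{uv} \in E_{s+1}$. If it is a violating in-edge into $T_m$, I must check it was already present in $E_s$; here I would use Lemma~\ref{lem:only_h_plus_one} (no violating edges from above level $h+1$ during the \textsc{minute}) to pin down $l_{s+1}(u) = h+1$ and $v \in T_m \subseteq L_{h-1}$, and then trace back via Observation~\ref{obs:shift} that $u$ had level $\geq h+1$ at the start of the \textsc{second} $s$ as well, and $\gee_s(u\tor v)>0$ since the out-degree along that direction only decreases — so the edge qualified for $E_s$. Otherwise $\overline{uv}$ satisfies $l_{s+1}(u) = l_m(u) > h+1$, $l_{s+1}(u) > l_{s+1}(v)$, $\gee_{s+1}(u\tor v) > 0$. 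Then $\gee_s(u \tor v) \geq \gee_{s+1}(u \tor v) > 0$ (the edge points from high level to low level, so its out-weight only decreased), so $l_m(u) > l_m(v)$, hence $l_s(u) \geq l_{s+1}(u) = l_m(u)$, and combined with $l_s(u) \leq l_m(u)$ (Observation~\ref{obs:shift} for levels above $h$) we get $l_s(u) = l_m(u)$; similarly $l_s(v) \leq l_{s+1}(v) < l_{s+1}(u) = l_s(u)$ unless $v$ is at level $h-1$, in which case $l_s(v) \leq l_{s+1}(v) \le h-1 < h+1 < l_s(u)$ anyway. So $\overline{uv} \in E_s$. This establishes $E_{s+1} \subseteq E_s$, and since $V_s$ (resp.\ $V_{s+1}$) is determined as the set of vertices with a directed path into $T_m$ using $E_s$ (resp.\ $E_{s+1}$), we get $D_{s+1} \subseteq D_s$ as digraphs.

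For the height bound, I would argue that the source vertices of $D_s$ — the vertices in $S_s$ at the top level $l_m(\cdot)$ appearing in $D_s$ — all strictly decrease their level during \textsc{second} $s$ by exactly one (they are precisely the vertices whose out-degree is decreased by the blocking flow, and the capacity $\sigma(u)$ is chosen so they drop at most one level; and a source with no flow through it would not be retained). Hence at the start of \textsc{second} $s+1$ they no longer satisfy $l_{s+1}(u) = l_m(u)$, so they contribute no out-edges to $E_{s+1}$ and are removed from $D_{s+1}$; this peels off the entire top layer, dropping the height by at least one. The main obstacle I anticipate is the careful bookkeeping in this last point: making precise that every vertex sitting at the top layer of $D_s$ actually carries positive flow (so that it genuinely drops a level and disappears), rather than merely that \emph{some} top-layer vertices do — this requires noting that a DAG layer with no flow-carrying vertex could simply be excluded from $D_s$ in the first place, since $D_s$ by construction keeps only vertices on a directed path from a source in $S_s$ to $T_m$, and the flow is blocking so every such path gets saturated. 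Once that is nailed down, the statement $D_s \subset D_{s+1}$ (with the indices as written, i.e.\ $D_{s+1}$ the smaller graph at the later \textsc{second}) and the height decrease follow.
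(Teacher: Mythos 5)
Your argument for the containment $D_{s+1}\subseteq D_s$ (via $E_{s+1}\subseteq E_s$, using the monotonicity of levels and out-weights in Observation~\ref{obs:shift}) is correct and matches the paper's.

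The height-decrease step has a genuine gap. You claim that every source of $D_s$ drops a level during \textsc{second} $s$, because ``they are precisely the vertices whose out-degree is decreased by the blocking flow.'' This is false: a blocking flow need not saturate the source stub $\overline{s_u u}$ for each $u\in S_s$. The blocking bottleneck can lie deeper in the DAG (an intermediate edge) or at the terminal stub $\overline{v t_v}$, in which case $u$ may carry little or even zero flow and retain its level. Your anticipated fix --- that every top-layer vertex carries positive flow because each lies on a path to $T_m$ and the flow is blocking --- does not close this gap: carrying \emph{some} flow is not the same as carrying $\sigma(u)$ units, and a blocking flow allows paths whose saturated edge is elsewhere. (Concretely, if two sources feed into a common intermediate vertex $w$ and $\overline{wv}$ has small capacity, the blocking flow saturates $\overline{wv}$ and one source may carry no flow at all.) There is also a minor misstatement: $S_s$ are the sources of the DAG, which need not all sit at a single ``top level'' $l_m(\cdot)$.

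The paper instead uses the blocking-flow property in a weaker but sufficient way: for every $u\in S_s$, $v\in T_m$, and every path $\pi$ from $u$ to $v$ in $D_s$, the corresponding path $s_u\to u\to\cdots\to v\to t_v$ in $D_s^*$ contains a saturated edge. A case split on where that saturated edge sits (at $\overline{s_u u}$, at $\overline{v t_v}$, or at an intermediate $\overline{ab}$) shows that in each case $\pi$ cannot persist into $E_{s+1}$: either $u$ drops a level and loses all out-edges, or $v$ rises a level and loses all violating in-edges, or $\gee_{s+1}(a\tor b)=0$. Hence no source of $D_s$ has a path to $T_m$ in $E_{s+1}$, so $S_s\cap V_{s+1}=\emptyset$, and combined with $D_{s+1}\subseteq D_s$ this gives the height drop by one --- without ever needing all sources to drop a level.
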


\begin{proof}
    For every edge $\overline{uv} \in E_s$, by Observation~\ref{obs:shift}, our algorithm only decreases $\gee(u \tor v)$, and only decreases the level of $u$. 
    It immediately follows from the definition of $E_s$ that if $\overline{uv}$ is in $E_s$ but not in $E_{s+1}$, then $\overline{uv} \in E_l$ for all $l < s$ and $\overline{uv} \not \in E_k$ for all $k > s$. Thus, $D_s$ is a subgraph of $D_{s+1}$. 
   
    In $(h : s : m)$ we compute a blocking flow $f$ from $S$ to $T$ and flip edges according to the flow. 
    We claim that this blocking flow implies that for all $u \in S^s$, there exists no path from $u$ to $T_m$ in $E_{s+1}$ (and thus $u \not \in V_k$ for $k > s$). 
     Indeed, for any $u \in S^s$ and any $v \in T_m$, consider an arbitrary directed path $\pi$ from $u$ to $v$ in $D_s$.     One of three (not mutually exclusive) conditions must hold. Either: 
    \begin{itemize}
        \item There exists an $\overline{ab} \in \pi$ that is saturated in $f$. Then $\gee_{s+1}(a \tor b) = 0$ and $\pi \not \subseteq E_{s+1}$.
        \item The edge $\overline{v t_v}$ is saturated in $f$. Then $\gee_{s+1}(v) = (1 + \frac{\eta}{2})^{h+1}$, and $v$ is in level $h$. This implies that $v$ has no incoming violating edges (and thus no incoming edges) in $E_{s+1}$.
        \item The edge $\overline{s_u u}$ is saturated in $f$. Then $\gee_{s+1}(u) = (1 + \frac{\eta}{2})^{l_m(u)-1}$ and $u$ is in level $l_m(u) - 1$. This implies that $u$ has no outgoing edges in $E_{s+1}$. 
    \end{itemize}
    Since $D_{s+1}$ is a subgraph of $D_s$, and no vertices in $S_s$ are in $D_{s+1}$, the height of $D_{s+1}$ is at least one fewer than the height in $D_{s}$. 
\end{proof}

\begin{corollary}
\label{cor:even}
    At the start of $(h : m : 0)$ with $m$ even, there are no violating edges from levels $k > h$. 
\end{corollary}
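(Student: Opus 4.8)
The plan is to derive Corollary~\ref{cor:even} by composing Lemma~\ref{lem:only_h_plus_one} and Lemma~\ref{lem:decreasing_graph}, splitting on the level of the tail of a would-be violating edge. Fix an even \textsc{minute} $m$ in \textsc{hour} $h$. If $m=0$ then $(h:0:0)$ is the start of the \textsc{hour} and the statement is exactly the inductive hypothesis carried over from \textsc{hour} $h+1$, so assume $m\ge 2$, i.e.\ \textsc{minute} $m$ is immediately preceded by the odd \textsc{minute} $m-1$. Suppose for contradiction that at the start of $(h:m:0)$ there is a violating edge $(u,v)$ with $u\in L_k$ and $k>h$. If $k>h+1$ this contradicts Lemma~\ref{lem:only_h_plus_one}, which holds at every \textsc{second} of \textsc{minute} $m-1$ and in particular at its last \textsc{second} $s=h$, i.e.\ in the state at the start of $(h:m:0)$. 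So it remains to rule out $k=h+1$.

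For the case $k=h+1$ I would first establish the structural fact that, at the start of \textsc{minute} $m-1$, the violating out-edges leaving $L_{h+1}$ are exactly the violating in-edges into $T_{m-1}$. This uses Corollary~\ref{cor:even} for the earlier even \textsc{minute} $m-2$ (an instance of smaller index, hence available inductively), which gives no violating edge from any level above $h$ at the start of $(h:m-2:0)$; one then inspects Algorithm~\ref{alg:evenminute} and observes that during \textsc{minute} $m-2$ the out-degrees that change are precisely those of the vertices in $V_{m-2}\subseteq L_h$ (decreasing) and of their out-neighbours (increasing, and only up to level $h-1$), so the only way a new violating out-edge from $L_{h+1}$ can appear is for a vertex of $V_{m-2}$ to drop from $L_h$ to $L_{h-1}$ while retaining an in-edge from $L_{h+1}$ --- and such a vertex is, by definition, precisely a member of $T_{m-1}$.

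Now I would run \textsc{minute} $m-1$. By Lemma~\ref{lem:decreasing_graph} the DAGs produced over its $\ell-h+1$ \textsc{seconds} form a nested chain whose height strictly decreases by at least one each \textsc{second}; by Lemma~\ref{lem:elbound} the highest non-empty level is $\ell'\le\ell$, so the first DAG, whose sinks $T_{m-1}$ sit at level $h-1$, has height at most $\ell-h+1$ (up to the exact convention for counting layers). Hence after the last \textsc{second} the DAG has no edges. Since by Observation~\ref{obs:shift} the tail of any violating in-edge into a vertex of $T_{m-1}$ keeps its level equal to $l_m(\cdot)$ throughout the \textsc{minute}, such an edge lies in $E_s$ for as long as it survives; emptiness of the final DAG therefore means $T_{m-1}$ has no violating in-edge at the end of \textsc{minute} $m-1$. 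Finally, returning to the hypothetical edge $(u,v)$ with $u\in L_{h+1}$ at the start of $(h:m:0)$: its head $v$ is at level $\le h-1$, and by Observation~\ref{obs:shift} $v$ was already at level $\le h-1$ at the start of \textsc{minute} $m-1$ while $\gee(u\tor v)$ only decreased, so $(u,v)$ was a violating in-edge of $v$ back then; by the structural fact this forces $v\in T_{m-1}$, contradicting that $T_{m-1}$ has no violating in-edge at the end of the \textsc{minute}. This eliminates the case $k=h+1$ and proves the corollary.

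The main obstacle is the level bookkeeping in the two middle steps rather than any new idea: carefully pinning down which vertices can possibly lie at level $\le h-1$, showing that every ``leak'' out of $L_{h+1}$ ends in $T_{m-1}$, and making the off-by-one between ``DAG height'' and ``number of \textsc{seconds}'' ($\ell-h+1$) come out right using $\ell'\le\ell$ from Lemma~\ref{lem:elbound}. The genuinely flow-theoretic content --- that flipping along a blocking flow both empties the DAG and never creates a violating edge above level $h+1$ --- is already packaged in Lemmas~\ref{lem:only_h_plus_one} and \ref{lem:decreasing_graph}, so what is left is a careful but routine composition of those statements with Observation~\ref{obs:shift} and the inductive use of Corollary~\ref{cor:even} itself.
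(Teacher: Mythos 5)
Your proposal is correct and matches the paper's own proof in all essentials: you split off the $k>h+1$ case via Lemma~\ref{lem:only_h_plus_one}, reduce the $k=h+1$ case to showing that such a violating edge must be a surviving in-edge into $T_{m-1}$, and then use Lemma~\ref{lem:decreasing_graph} (together with Observation~\ref{obs:shift}) to conclude the DAG has emptied after the $\ell-h+1$ \textsc{seconds} of \textsc{minute} $m-1$. The only presentational difference is that you justify ``every violating out-edge from $L_{h+1}$ points into $T_{m-1}$'' by explicitly invoking the inductive hypothesis (Corollary~\ref{cor:even} at \textsc{minute} $m-2$, or the carry-over invariant for $m=0$) and inspecting Algorithm~\ref{alg:evenminute}, whereas the paper attributes this to Observation~\ref{obs:shift} directly; this extra scaffolding is sound and arguably more careful, but it does not change the route.
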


\begin{proof}
    Consider the previous \textsc{minute}. 
    At time $(h : m-1: 0)$ the height of $D_0$ is at most $\ell - h + 1$.
    We may now apply Lemma~\ref{lem:decreasing_graph} to conclude that that at second $s = \ell - h + 1$, $V_s$ is empty.     
    
    By Lemma~\ref{lem:only_h_plus_one}, the only violating out-edges from levels $k > h$ come from level $h+1$.    
    By Observation~\ref{obs:shift}, for all $u \in L_{h+1}$ with a violating edge $\overline{uv}$: $l_m(u) = h+1$ and $v \in T_m$ (since all vertices at a level below $h+1$ only increase their out-degree).
    Thus any violating edge $(u, v) \in E_s$ with $u \in V_s$. Since for $s = \ell - h + 1$ the set $V_s$ is empty, the lemma follows. 
\end{proof}

Next, we show that after each even \textsc{minute} (i.e., at the start of each odd \textsc{minute}), there are only violating out-edges from $L_{h-1}$ and below, and from vertices in $L_{h+1}$ to vertices in $L_{h_1}$ (Corollary~\ref{cor:odd}). 

\begin{lemma}
    \label{lem:oneeight}
    At the start of $(h : m : t)$, with $m$ even, denote by $E_t$ the number of violating edges going from $L_h$. 
    Then $|E_t| \leq \frac{7}{8} |E_{t-1}|$. 
\end{lemma}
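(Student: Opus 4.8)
The plan is to analyse one iteration $t$ of the inner \textsc{for}-loop inside \textsc{evenminute}($h$, $m$) and show that the set of violating out-edges emanating from $L_h$ shrinks by at least a factor $1/8$. First I would set up notation: at the start of iteration $t$ let $A_t \subseteq L_h$ be the vertices still having a violating out-edge, and let $E_t$ be the current set of violating out-edges from $L_h$ (equivalently, the violating out-edges leaving $A_t$). Every edge in $E_t$ is reported by its tail $a \in A_t$ with the value $\delta_t(a)/|E_t(a)|$, where $\delta_t(a) = \gee(a) - (1+\eta)^{h-1}$ is the ``budget'' by which $a$ may drop (down to, but not into, level $h-1$), and $E_t(a)$ is the multiset of violating out-edges of $a$. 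The head $b$ of each such edge is in some level $< h$; by Observation~\ref{obs:shift} / the structure of \textsc{evenminute}, $b$ greedily absorbs incoming budget, processing its senders $a \in I_t(b)$ in increasing order of $\delta_t(a)$, stopping once $\gee(b)$ would exceed $(1+\frac{\eta}{2})^h$, i.e.\ once $b$ would be lifted out of the set of ``can still receive'' vertices into level $h$.

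The key combinatorial step is a charging/classification of the edges of $E_t$ that \emph{fail} to be fixed during iteration $t$. An edge $(a,b) \in E_t$ survives into $E_{t+1}$ only if either (i) $b$ stopped absorbing before it got to $a$ in its sorted order — in which case $b$ filled up to $(1+\frac{\eta}{2})^h$ using budget entirely from senders $a'$ with $\delta_t(a') \le \delta_t(a)$, so $b$ received a total of roughly $\gee(b)\cdot\frac{\eta}{2}$ worth of flow, accounted for by a set of ``satisfied'' edges whose per-edge value is at least $\delta_t(a)/|E_t(a)|$ times their count; or (ii) $b$ did process $a$, $a$ sent its full share $\delta_t(a)/|E_t(a)|$ along $(a,b)$, but the edge is still violating because $a$'s level did not drop enough — which can only happen if $a$ sent out its full budget $\delta_t(a)$, i.e.\ $a$ is now in level $h-1$ or on its boundary, in which case \emph{all} of $a$'s violating out-edges got their full share and are removed from $E_{t+1}$, contradicting $(a,b)$ surviving; so really the only surviving edges are of type (i). I would then argue that for each receiver $b$ that ``filled up,'' the number of surviving edges into $b$ is at most a constant fraction of the number of edges that caused it to fill (a pigeonhole on the sorted prefix: the satisfied prefix carries at least half the total count, or a $7/8$-vs-$1/8$ split after tuning constants), and sum over all such $b$, while noting that edges into receivers that did \emph{not} fill up are all fully satisfied and hence disappear.

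The main obstacle I expect is making the double-counting clean: an edge $(a,b)$ that survives because $b$ filled up must be charged to the ``satisfied'' edges of the \emph{same} receiver $b$, and one must check these charge-sets are disjoint across surviving edges and that the ratio of (satisfied at $b$) to (surviving at $b$) is at least $7$ — this is where the specific constant $8/7$ in the number of rounds $2\lceil\log_{8/7}n\rceil+2$ enters, and where the greedy ``sort by $\delta_t(a)$'' rule is essential (it guarantees that if $a$ is skipped, then at least as many edges with smaller-or-equal $\delta$-value were already absorbed, each contributing comparably). A secondary point to verify is the corner case where a single sender $a$ alone overfills $b$; then the surviving edges $(a,b')$ to \emph{other} receivers $b'$ need to be charged instead to the drop in $\gee(a)$, and one must confirm $a$ then drops a full level so that all of $a$'s violating out-edges vanish, i.e.\ $a$ contributes zero surviving edges — so this case does not actually arise. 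Once the per-receiver ratio is established, summing gives $|E_{t+1}| \le \frac{1}{8}|E_t| + (\text{fully satisfied edges removed}) $, hence $|E_t| \le \frac{7}{8}|E_{t-1}|$, which is the claim; iterating $2\lceil\log_{8/7}n\rceil + 1$ times then drives $|E_t|$ below $1$ and establishes Corollary~\ref{cor:odd}.
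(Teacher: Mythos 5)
Your overall intuition --- a one-round charging argument classifying which violating out-edges of $L_h$ are ``killed'' versus ``survive'', with a $\tfrac18$ constant that traces back to the greedy rules of \textsc{evenminute} --- matches the spirit of the paper's proof, but the way you set up the charging has two real problems.

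First, your case (i) is mischaracterised. If $b$ stopped absorbing from $a$ because $\gee(b)$ reached $(1+\tfrac{\eta}{2})^h$, then $b$ is now at level $h$, and the edge $\overline{ab}$ (with $a$ still at level $h$) is \emph{no longer violating}. So a type-(i) edge does not survive into $E_{t+1}$ --- it is removed. The edges that actually survive are precisely those $\overline{ab}$ where $a$ remains at level $h$, $b$ remains below level $h-1$, and $\gee(a\to b)>0$ still, which happens when $a$ sent only a partial fraction of its total budget because \emph{other} receivers of $a$ filled up. This means the per-receiver charging you propose (``for each receiver $b$ that filled up, bound the surviving edges \emph{into} $b$'') is aimed in the wrong direction: a receiver that filled up has no surviving in-edges at all to account for, while a receiver that did not fill up gives you nothing to charge against.

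Second, because of this orientation issue, the quantitative step is left ungrounded. The paper instead charges \emph{per sender} $a$: it declares $a$ ``satisfied'' once $\gee(a)$ drops below the midpoint $(1+\tfrac{\eta}{2})^{h-1}+\tfrac12\Delta_h$ of level $h-1$ (where $\Delta_h=(1+\tfrac{\eta}{2})^h-(1+\tfrac{\eta}{2})^{h-1}$), declares a receiver $b$ satisfied once $\gee(b)=(1+\tfrac{\eta}{2})^h$, and calls an edge satisfied if either endpoint is or if it was flipped to zero. A satisfied edge is no longer a violating out-edge of $L_h$. The numeric constant then falls out of the inclusion $a\in L_h$, which forces $\Delta_h\le\delta_t(a)<4\Delta_h$: if $a$ ends \emph{un}satisfied, it decreased by less than $\delta_t(a)-\tfrac12\Delta_h$, so fewer than $\bigl(1-\tfrac{\Delta_h}{2\delta_t(a)}\bigr)|E_t(a)|\le\tfrac78|E_t(a)|$ of its out-edges received the full share $\delta_t(a)/|E_t(a)|$; the remaining $\ge\tfrac18|E_t(a)|$ were withheld because their receiver filled up, hence are satisfied. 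Summing over all $a$ (satisfied senders contribute all their edges, unsatisfied senders at least a $\tfrac18$ fraction) gives $|E_{t+1}|\le\tfrac78|E_t|$ without any cross-receiver disjointness argument. Your secondary worry about a single sender overfilling a receiver dissolves in this framing, since that sender then drops out of $L_h$ and all of its out-edges leave $E_{t+1}$. To repair your proof, flip the charging to the sender side, introduce the midpoint threshold for ``$a$ satisfied'', and pin down the $\tfrac18$ via $\delta_t(a)<4\Delta_h$.
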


\begin{proof}
    Let $E_t \subseteq A_t \times B_t$.
    For convenience, we denote $\Delta_h = (1 + \frac{\eta}{2})^{h} - (1 + \frac{\eta}{2})^{h-1}.$
    We say that a vertex $a \in A_t$ is \emph{satisfied} if $\gee(a) \leq (1 + \frac{\eta}{2})^{h-1} + \frac{1}{2}\Delta_h$. 
    A vertex $b \in B_t$ is \emph{satisfied} whenever $\gee(b) = (1 + \frac{\eta}{2})^{h}$.
    We say that an edge $\overline{uv} \in E_t$ is \emph{satisfied} whenever $\gee(u \tor v) = 0$ or whenever at least $u$ or $v$ is satisfied. Observe that a satisfied edge is not violating. 

    \noindent
    Fix any edge $\overline{ab} \in E_t$. At the end of iteration $t$:
    \begin{enumerate}
        \item If $\gee(a \tor b) = 0$ then the edge is satisfied. 
        \item If $\gee(a \tor b) > 0$ and $\gee(a \tor b)$ is not decreased by $\frac{\delta_t(a)}{|E_t(a)|}$. This only occurs in our code if $b$ is satisfied and thus $\overline{ab}$ is satisfied. 
        \item If otherwise $\gee(a \tor b) > 0$ and $\gee(a \tor b)$ is decreased by $\frac{\delta_t(a)}{|E_t(a)|}$, we make a case distinction:
        \begin{itemize}
            \item If $a$ is satisfied at the end of iteration $t$, then $\overline{ab}$ is satisfied. 
            \item Otherwise,       
            let $\gee(a)$ denote the out-degree of $a$ at the beginning of iteration $t$ and $\gee'(a)$ denote the out-degree of $a$ at the end of iteration $t$. 
            Per definition of $a \in L_h$, $( 1 + \frac{\eta}{2})^{h } \leq \gee(a) \leq ( 1 + \frac{\eta}{2})^{h+1}$.
            Thus, $\delta_t(a)$ is at least $\Delta_h$ and fewer than $4 \Delta_h$. 
            If follows that if $a$ was not satisfied at the end of iteration $t$, at least $\frac{1}{8}$'th of the edges $\overline{ac} \in E_t(a)$ has not been decreased by $\frac{\delta_t(a)}{|E_t(a)|}$. To all these edges $\overline{ac}$, case $2$ applies.
        \end{itemize}
    \end{enumerate}

    Since satisfied edges are no longer violating, it follows that $|E_{t+1}| \leq \frac{7}{8}|E_t|$.
\end{proof}

\begin{corollary}
    \label{cor:odd}
       At the start of $(h : m : 0)$ with $m$ odd, there are no violating edges from levels $k > h + 1$, or from level $h$. 
\end{corollary}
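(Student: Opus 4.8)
The plan is to analyse the even \textsc{minute} $m-1$ that immediately precedes the odd \textsc{minute} $m$, combining two facts: Corollary~\ref{cor:even}, which says \textsc{minute} $m-1$ \emph{starts} with no violating edges out of any level $k>h$, and Lemma~\ref{lem:oneeight}, which controls the shrinkage of the violating edges out of $L_h$ during \textsc{minute} $m-1$. So it suffices to prove two things: (i) \textsc{minute} $m-1$ never creates a violating edge out of a level $k>h+1$; and (ii) by the end of \textsc{minute} $m-1$ there are no violating edges out of $L_h$. Since levels below $h$ are not constrained by the statement, (i) and (ii) together give the corollary.

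For (i), I would first record the structural effect of Algorithm~\ref{alg:evenminute}: the only out-degrees it ever modifies belong either to vertices of $A_t\subseteq L_h$, which are only decreased (each $a$ losing at most $\delta_t(a)=\gee(a)-(1+\eta)^{h-1}$ in total, so $a$ never drops below level $h-1$), or to vertices of $B_t$, which are only increased, and only up to out-degree $(1+\frac{\eta}{2})^{h}$, so by the ``default to the lowest level'' tie-breaking rule these never rise above level $h-1$. As $A_t$ consists of level-$h$ vertices and $B_t$ of vertices at level $\le h-2$, no vertex at level $\ge h+1$ is ever in $A_t\cup B_t$; hence no such vertex changes its out-degree, changes its level, or gains a new out-edge, and no vertex ever newly occupies a level $\ge h+2$. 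Now take any edge $(w,x)$ with $w$ at a level $k>h+1$; by Corollary~\ref{cor:even} it is non-violating at the start of \textsc{minute} $m-1$, so $l(x)\ge k-1>h$, i.e.\ $x$ too sits above level $h$ and is likewise frozen. Thus $(w,x)$ remains non-violating for the whole \textsc{minute}, which proves (i).

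For (ii), let $E_t$ denote the set of violating edges out of $L_h$ at the start of iteration $t$ of \textsc{minute} $m-1$. Since $G$ has fewer than $n^{2}$ edges, $|E_0|<n^{2}$, and Lemma~\ref{lem:oneeight} gives $|E_{t+1}|\le\frac{7}{8}|E_t|$, hence $|E_t|<(\frac{7}{8})^{t}n^{2}$. The \textsc{minute} runs iterations $t=0,\dots,2\lceil\log_{8/7}n\rceil+1$, and at the last of these $(\frac{7}{8})^{t}n^{2}\le(\frac{7}{8})^{2\log_{8/7}n}\cdot\frac{7}{8}\cdot n^{2}=\frac{7}{8}<1$; since $|E_t|$ is a nonnegative integer, $E_t=\emptyset$ by the end of \textsc{minute} $m-1$, i.e.\ at the start of $(h:m:0)$. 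Combining (i) and (ii) completes the proof. I expect the one genuinely delicate point to be part (i): one must rule out not only that a vertex itself climbs above level $h+1$, but also that an out-neighbour of such a high-level vertex is pushed \emph{downward} in a way that turns an incident edge violating --- and this is exactly where Corollary~\ref{cor:even} (the clean start of the even \textsc{minute}) together with the tie-breaking convention at level boundaries are needed.
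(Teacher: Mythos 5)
Your proof is correct and follows essentially the same route as the paper's: part (ii) is exactly the paper's application of Lemma~\ref{lem:oneeight} with the $n^2$ edge bound and the length of the even \textsc{minute}, and part (i) is a spelled-out version of the paper's one-sentence observation that vertices in $L_h$ drop by at most one level, so no new violating out-edges can arise from levels above $h+1$. The one thing you make explicit that the paper leaves implicit is the dependence on Corollary~\ref{cor:even} for the clean starting state of the even \textsc{minute} (the paper's argument silently assumes it); your treatment is slightly more careful there but not a genuinely different approach.
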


\begin{proof}
    The \emph{minute} is a for-loop of $2 \lceil \log_{8/7} n \rceil  +  1$ iterations, each taking two rounds. 
      Since a graph may have at most $n^2$ edges, $|E_0| \leq n^2$. We now apply Lemma~\ref{lem:oneeight} to obtain that after the \textsc{minute} there are no vertices in $L_h$ with a violating out-edge. 
    This may create vertices in $L_{h+1}$ with a violating out-edge, but since vertices in $L_h$ decreased their level by at most one, no vertices in $L_k$ with $k > h+1$ may have a violating out-edge.   
\end{proof}

Finally, we show one more helper lemma to imply our main theorem:

\begin{lemma}
    \label{lem:alternate}
  If at the start of $(h : x : 0)$ for even $x$,  a vertex $v$ has $l_{x}(v) = h$ and $v$ has an outgoing violating edge, then for all even $m < x$, $v \in T_m$. 
  \end{lemma}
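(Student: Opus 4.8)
The plan is a backwards induction on the even \textsc{minute} $x$ inside the fixed \textsc{hour} $h$; the base case $x=0$ is vacuous. For the inductive step I fix an even $x\ge 2$ together with a vertex $v$ that, at the start of $(h:x:0)$, is at level $h$ and has an outgoing violating edge, and the whole step reduces to proving $v\in T_{x-1}$. Indeed, $T_{x-1}\subseteq V_{x-2}$ by the definition of $T_{x-1}$, and $V_{x-2}$ is precisely the set of level-$h$ vertices that have an outgoing violating edge at the start of $(h:x-2:0)$; hence $v\in T_{x-1}$ re-establishes the hypothesis of the lemma with $x-2$ in place of $x$. The inductive hypothesis then yields membership of $v$ in the corresponding ``problematic'' set at every \textsc{minute} below $x-2$, and appending $v\in V_{x-2}$ and $v\in T_{x-1}$ covers every \textsc{minute} below $x$, which is the claim.

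First I would record a ``no-descent'' property: throughout \textsc{hour} $h$ no vertex ever moves to a level strictly below $h-1$. For odd \textsc{minutes} this is immediate from Observation~\ref{obs:shift}; for even \textsc{minutes} it follows from Algorithm~\ref{alg:evenminute}, since the only vertices whose out-degree changes there are the $A_t\subseteq L_h$, which drop by at most one level (hence to at most $L_{h-1}$), and the $B_t$, which receive violating in-edges from $L_h$ and so sit at level $\le h-2$ but only have their out-degree \emph{increased}. I would also note that, at the start of \textsc{minute} $x-1$, a level-$h$ vertex with no outgoing violating edge has every $\gee$-positive out-edge heading to level $\ge h-1$.

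Now the core step. By Observation~\ref{obs:shift}, during the odd \textsc{minute} $x-1$ the only vertices that start below level $h$ and end at level $h$ are exactly those of $T_{x-1}$ (the only ones whose out-degree is increased, and they rise to at most level $h$). Since $l_x(v)=h$ and an odd \textsc{minute} lowers a level by at most one, if $v\notin T_{x-1}$ then $v$ was at level $h$ or at level $h+1$ at the start of \textsc{minute} $x-1$. If $v$ was at level $h$ there: by Corollary~\ref{cor:odd} it had no outgoing violating edge, and $v$ is neither in $T_{x-1}$ nor a vertex of any $D_s$ (it has no out-edge in any $E_s$, since those require $l_s(\cdot)=l_m(\cdot)>h+1$ or a violating edge into $T_m\subseteq L_{h-1}$), so its out-degree is unchanged during \textsc{minute} $x-1$, while by no-descent the heads of its out-edges stay at level $\ge h-1$; hence $v$ still has no outgoing violating edge at the start of $(h:x:0)$ — contradiction. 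If $v$ was at level $h+1$ there: by Lemma~\ref{lem:only_h_plus_one} and the no-descent property (used across all of \textsc{hour} $h$), every violating out-edge of a level-$(h+1)$ vertex during \textsc{minute} $x-1$ points to level $h-1$, so once $v$ drops to level $h$ all such edges have level gap at most $1$; moreover, by Observation~\ref{obs:shift}, once $v$ has dropped below $l_m(v)=h+1$ it is absent from every subsequent $D_s$ and so is static for the rest of the \textsc{minute} — again $v$ has no outgoing violating edge at the start of $(h:x:0)$, a contradiction. Therefore $v\in T_{x-1}$, completing the step.

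The hard part is precisely this core step: showing that during an odd \textsc{minute} neither the out-degree of, nor the level of an out-neighbour of, a level-$h$ or level-$(h+1)$ vertex can be perturbed into creating a fresh outgoing violating edge. Assembling the no-descent property together with Observation~\ref{obs:shift}, Lemma~\ref{lem:only_h_plus_one}, and Corollary~\ref{cor:odd} to handle the two cases is the delicate part; the surrounding induction is routine.
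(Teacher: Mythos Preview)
Your overall strategy matches the paper's: you argue by contradiction that if $v\notin T_{x-1}$ then $v$ cannot be at level $h$ with a violating out-edge at $(h:x:0)$. The paper picks an arbitrary odd $m<x$ and argues \emph{forward} from $m$ to $x$; you organise the same idea as a \emph{backward} induction (prove $v\in T_{x-1}$, hence $v\in V_{x-2}$, then recurse on $x-2$). Your case split is also slightly different: the paper splits on whether $v$ dropped in minute $m-1$ and, separately, whether $v$ then had a violating in-edge; you split on $v$'s level at the start of minute $x-1$. Your level-$(h+1)$ case is one the paper does not make explicit, so in that respect your case analysis is more complete.

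There is, however, a genuine gap in your Case~2. You assert that ``every violating out-edge of a level-$(h+1)$ vertex during minute $x-1$ points to level $h-1$'' and justify it with Lemma~\ref{lem:only_h_plus_one} and the no-descent property. Neither suffices. Lemma~\ref{lem:only_h_plus_one} concerns violating edges leaving levels \emph{above} $h+1$, not level $h+1$ itself; and your no-descent property only says that a vertex already at level $\ge h-1$ stays there --- it does not rule out that $v$ has an out-neighbour sitting at level $h-3$, say, from the very beginning of hour $h$. If such a neighbour existed, dropping $v$ to level $h$ would leave that edge violating, and you would get no contradiction. What is missing is the observation that a vertex at level $h+1$ at the start of minute $x-1$ was also at level $h+1$ at the start of the even minute $x-2$ (even minutes do not touch $L_{h+1}$), so by Corollary~\ref{cor:even} it had no violating out-edge there; hence all its out-neighbours were at level $\ge h$, and during minute $x-2$ they drop by at most one level. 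This is what forces the violating targets to sit exactly at level $h-1$; your cited lemmas do not deliver it.
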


\begin{proof}
        Consider any $m < x$ that is odd. 
    If at the start of $(h : m : 0)$, a vertex $v \not \in T_m$ then this can because of two reasons. Either:
    \begin{enumerate}
        \item $v$ did not decrease their level in \textsc{minute} $m-1$, or
        \item   $v$ did decrease their level in \textsc{minute} $m-1$ but then had no incoming violating edges.
    \end{enumerate}

  We show that Case 1 implies that $v$ has no violating out-edges at $(h : x : 0)$. Case 2 implies that $l_{x}(v) \neq h$ and this together implies the lemma. 

  \subparagraph{Case 1:}
Suppose that  $l_{m - 1}(v) = h$ and that at the start of $(h : m-1: 0)$ the vertex $v$ has no violating out-edges. Then $v$ does not change its level during \textsc{minute} $m - 1$ and $l_m(v) = h$. 
      During odd \textsc{minutes}, vertices at level $h$ and levels $k < h - 1$ do not change their level. 
      Thus, at the start of $(h : m+1 : 0)$, $l_{m+1}(v) = h$ and $v$ has not gained any violating out-edges. We recursively apply this argument to conclude that $l_{x}(v) = h$ and $v$ has no violating out-edges. 

      \subparagraph{Case 2:}
    Suppose that $l_m(v) = h - 1$ and $v$ has no violating in-edge at the start of $(h: m : 0)$.
Then by our algorithm, $v$ does not change its level during $\textsc{minute}$ $m$. But this means that in \textsc{minute} $m+1$ the vertex is in level $L_{h-1}$. During even \textsc{minutes}, vertices at level $h-1$ do not change their out-degree and so  $l_{m+2}(v) = h-1$. Moreover, vertices in $L_k$ with $k \geq h$ may only decrease their out-degree and thus cannot suddenly gain a violating in-edge to $v$. Thus, $l_{m+1}(v) = h - 1$ and $v$ has no violating in-edge at the start of $(h: m + 1: 0)$. 
We may recursively apply this argument to conclude that $l_{x}(v) = h-1$. 
    \end{proof}

\begin{theorem}
    \label{thm:main_congest}
    Assume that Invariant~\ref{inv:hour} holds at the start of $(h : 0 : 0)$, then our algorithm maintains Invariant~\ref{inv:hour} throughout \textsc{hour} $h$ until and including the start of $(h-1 : 0 : 0)$. 
\end{theorem}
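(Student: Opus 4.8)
The plan is to prove that Invariant~\ref{inv:hour} is maintained throughout \textsc{hour} $h$ by tracking, minute by minute, the set of levels from which violating out-edges may emanate. We are given that at the start of $(h:0:0)$ there are no violating out-edges from any level $k > h$; we must show that (i) throughout \textsc{hour} $h$ there are no violating out-edges from level $k' > h+1$, and (ii) at the start of $(h-1:0:0)$ there are no violating out-edges from level $k \ge h$. The two main structural facts already established do most of the work: Corollary~\ref{cor:odd} says that at the start of each odd \textsc{minute} the only violating out-edges come from level $h-1$ or below, or from level $h+1$ into $T_m \subseteq L_{h-1}$; and Corollary~\ref{cor:even} says that at the start of each even \textsc{minute} there are no violating out-edges from any level $k > h$. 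Lemma~\ref{lem:only_h_plus_one} guarantees that during odd \textsc{minutes} nothing worse than level $h+1$ ever violates, which immediately handles claim (i). So the crux is claim (ii): showing the alternation between ``violating from $L_h$'' (after an even minute) and ``violating from $L_{h+1}$ into $L_{h-1}$'' (after an odd minute) terminates within the $2\lceil\eta^{-1}\rceil+2$ minutes of an hour.

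First I would set up the induction over minutes within \textsc{hour} $h$: using Observation~\ref{obs:shift} together with Corollaries~\ref{cor:odd} and \ref{cor:even}, argue that after an even \textsc{minute} the only ``new'' violating out-edges are from $L_{h+1}$ to $L_{h-1}$ (the vertices that dropped a level are exactly the candidates, and their in-neighbours in $L_{h+1}$ now point two levels down), and after the following odd \textsc{minute} these are all repaired, possibly creating violating out-edges from $L_h$ again — but only from vertices that are now at level $h$. Then the key counting step: I would use Lemma~\ref{lem:alternate}, which says that if a vertex $v$ is at level $h$ with a violating out-edge at the start of an even \textsc{minute} $x$, then $v \in T_m$ for every earlier even $m < x$ — wait, for every earlier \emph{odd} $m<x$ it belonged to $T_m$, meaning it dropped a level in minute $m-1$. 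Each drop of a level costs $v$ a multiplicative factor of roughly $(1+\tfrac{\eta}{2})$ in out-degree, but $v$ is also being pushed back up during odd minutes only up to level $h-1$; I would track the out-degree of such a persistently-troublesome vertex and show that after $\Theta(\eta^{-1})$ rounds of alternation its out-degree would have to have changed by more than a constant factor, i.e. it would have left the band of levels $[h-1,h]$ entirely — contradiction. This is where the $2\lceil\eta^{-1}\rceil+2$ bound on the number of minutes per hour comes from.

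Concretely, the termination argument is: at the start of the final even \textsc{minute} of \textsc{hour} $h$, suppose for contradiction some $v$ at level $h$ still has a violating out-edge. By Lemma~\ref{lem:alternate}, $v$ belonged to $T_m$ for every odd $m$ in this hour, so $v$ dropped a level during the preceding even minute every time — but dropping from level $h$ means ending below level $h$, and to be back at level $h$ at the next even minute it must have been raised, which odd minutes only do up to level $h-1$. Reconciling ``$v$ must be at level $h$'' with ``$v$ was just raised to at most level $h-1$'' already gives a contradiction unless $v$'s out-degree sits exactly on a level boundary and the default-level convention applies; handling that boundary case carefully — by using the convention that increasing out-degree defaults to the \emph{lowest} level — is, I expect, the main technical obstacle, since it is the one place where the discrete level structure and the continuous out-degrees interact delicately. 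Once that is dispatched, we conclude that at the end of the last even \textsc{minute} there are no violating out-edges from $L_h$, the following (last) odd \textsc{minute} clears the induced violations from $L_{h+1}$ into $L_{h-1}$ by Corollary~\ref{cor:even}, and hence at $(h-1:0:0)$ there are no violating out-edges from any level $k \ge h$, which is exactly Invariant~\ref{inv:hour} for \textsc{hour} $h-1$. Combined with Lemma~\ref{lem:only_h_plus_one} for the ``throughout the hour'' part, this completes the induction.
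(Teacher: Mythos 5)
Your scaffolding is right: you correctly reduce the theorem to Lemma~\ref{lem:only_h_plus_one} (for the ``throughout the hour'' part) together with Corollaries~\ref{cor:odd} and \ref{cor:even}, and you correctly identify Lemma~\ref{lem:alternate} as the tool that forces a persistently-troublesome vertex $v$ at level $h$ to have been in $T_m$ for every odd $m$ in the hour, i.e.\ to be bouncing between $L_h$ and $L_{h-1}$ every minute. What is missing is the termination argument itself, and the specific potential you propose does not work.

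You first try a pure level-count contradiction (``$v$ must be at level $h$'' vs.\ ``odd minutes only raise $v$ to level $h-1$'') and then concede it only fails at boundary values. But it fails \emph{exactly} at the values the algorithm steers $v$ to: the odd \textsc{minute} uses a sink capacity $\delta(v) = (1+\tfrac{\eta}{2})^{h+1} - \gee_s(v)$, so $v$ is pushed up to out-degree precisely $(1+\tfrac{\eta}{2})^{h+1}$, which under the default-to-lowest-level convention places $v$ at level $h$, not $h-1$. So there is no level-count contradiction; the boundary case is the generic case, not an edge case. You then fall back on tracking the out-degree of $v$ and claiming that ``after $\Theta(\eta^{-1})$ rounds of alternation its out-degree would have to have changed by more than a constant factor.'' This is also false: $\gee(v)$ genuinely oscillates between $(1+\tfrac{\eta}{2})^{h-1}$ and $(1+\tfrac{\eta}{2})^{h+1}$ forever; it has no drift. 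Both $\gee(v)$ and $v$'s level are bad potential functions.

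The correct potential is not the out-degree of $v$, but the total weight $\Delta_m = \sum_{\overline{vw}\in E_m(v)}\gee(v\tor w)$ carried by $v$'s \emph{violating} out-edges at the start of each even \textsc{minute} $m$. The key asymmetry — which is the idea your proposal is missing — is that even and odd \textsc{minutes} move $\gee(v)$ along \emph{disjoint} edge sets: each even \textsc{minute} decreases $\gee(v)$ by $(1+\tfrac{\eta}{2})^{h+1}-(1+\tfrac{\eta}{2})^{h-1}$ exclusively by flipping edges in $E_m(v)$, while each odd \textsc{minute} restores $\gee(v)$ exclusively by flipping formerly incoming edges from $L_{h+1}$ (vertices at level $h+1$ or above, which are never violating out-edges from $v$). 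Hence $\Delta_m$ is monotone decreasing with a fixed additive decrement per even \textsc{minute}, and since $\Delta_0 \le (1+\tfrac{\eta}{2})^{h+1}$, it hits $0$ after at most $\eta^{-1}$ odd \textsc{minutes}, contradicting that $v$ still has a violating out-edge at the end of the hour. Without this switch in what quantity is being tracked, your proof does not close.
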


\begin{proof}
       Suppose for the sake of contradiction that at the start of $(h: 2 \lceil  \eta^{-1} \rceil + 2: 0)$ there exists a vertex $v \in L_h$ with at least one violating out-edge. 
    We may apply Lemma~\ref{lem:alternate} so that: 
    
    \begin{itemize}
        \item for even $m$, at the start of $(h : m : 0)$, $l_m(v) = h$ and $v$ had a violating out-edge
    \item for odd $m$, at the start of $(h : m : 0)$, $l_m(v) = h-1$ and $v$ had a violating in-edge. 
    \end{itemize}

    Our algorithms guarantee that at the start of $(h : m : 0)$ for $m$ even, $\gee(v) = (1 + \frac{\eta}{2})^{h + 1}$ and at the end of $(h : m : 0)$ for $m$ odd, $\gee(v) = \gee(v) = (1 + \frac{\eta}{2})^{h - 1}$. 

    Denote for all even $m$ by $E_m(v)$ the violating out-edges from $v$ and by $\Delta_m = \sum_{\overline{vw}} \gee(v \tor w)$.
    Since $v$ is in level $h$ it must be that $\Delta_0 \leq (1 + \frac{\eta}{2})^{h + 1}$. 
    When going from minute $m$ to $m+1$, the out-degree of $v$ decreases by only decreasing $\gee(v \tor w)$ for $\overline{vw} \in E_m(v)$.
    When going from minute $m+1$ to $m$, the out-degree of $v$ increases by only increasing $\gee(v \tor u)$ for $\overline{vw} \not \in E_m(v)$. 
    Thus $\Delta_m$ decreases by $(1 + \frac{\eta}{2})^{h + 1} - (1 + \frac{\eta}{2})^{h - 1}$ during \textsc{minute} $m$. 
    And, after at most
    \[
    \frac{\Delta_0 }{(1 + \frac{\eta}{2})^{h + 1} - (1 + \frac{\eta}{2})^{h - 1} } \leq \frac{(1 + \frac{\eta}{2})^{h + 1}}{(1 + \frac{\eta}{2})^{h + 1} - (1 + \frac{\eta}{2})^{h - 1}} \leq \eta^{-1} \quad \textnormal{ odd } \textsc{ minutes}
    \]
    the value $\Delta_m  = 0$ which contradicts the assumption that $v$ has an outgoing violating edge.     
\end{proof}

\noindent
Invariant~\ref{inv:hour} implies that we compute an $\eta$-fair orientation when the clock reaches $(0:0:0)$, thus:
\main*

\noindent
We plug in the runtime of Blocking$(h, n)$ of Lemma~\ref{lem:blocking} by  Haeupler, Hershkowitz, and Saranurak~\cite{haeupler2023maximum}:

\cormain*

 \newpage 
\bibliographystyle{plain}
\bibliography{refs}

 \newpage
\appendix

\section{Reporting a densest subgraph}
\label{app:reporting}

Problems~\ref{newproblem:value} and \ref{newproblem:value} are what we call the \emph{value variant} of the subgraph density problem, where the goal is to output a value. 
The value variant has a natural alternative, where the goal is to actually report a densest subgraph. 

\subsection{Related work: distributed densest subgraph reporting}

The reporting variant of the subgraph density problem, has two subvariants in the distributed model of computation.
Here, the second subvariant is by Harris~\cite{harris2019distributed}.

\begin{problem}
    \label{oldproblem:reporting}
    Given a graph $G$, after a computation, each vertex outputs a bit $b_v$. Denote by $H$ the subgraph induced by all vertices with a $1$-bit. We consider two variants where either:
    \begin{itemize}
        \item \textbf{Problem 4.1}: We require that $\rho(H) \leq \rho^{\max}(G) \leq (1 + \eps) \rho(H)$, or
        \item  \textbf{Problem 4.2}: Given an oracle that before outputting the bit, gives each vertex a value $\tilde{D} \leq \rho^{\max}(G)$. Each vertex outputs a bit $b_v$ and $\rho(H) \geq (1 - \eps) \tilde{D}$. 
If $\tilde{D} > \rho^{\max}(G)$ then every vertex may output $0$. 
    \end{itemize}
\end{problem}

The reporting subvariants are considerably harder than the value subvariants. 
For Problem \ref{oldproblem:reporting}.1, there is a trivial lower bound of $\Omega(d)$. 
In LOCAL, the problem is trivial to solve in $O(d)$ time. 
Su and Vu~\cite{SuVu20} can solve this variant in $O(\eps^{-1} \log n)$ rounds in LOCAL using the same observation as before and the corresponding trivial algorithm.

In CONGEST, Das Sarma et al.~\cite{SarmaLNT12} present the currently best deterministic algorithm that can report a $(2 + \eps)$-approximate densest subgraph in $O(d \cdot \eps^{-1} \log n)$ rounds (w.h.p). 
Su and Vu's~\cite{SuVu20} randomised algorithm in CONGEST can solve Problem~\ref{oldproblem:reporting}.1 with $O(d + \eps^{-4}\log^4 n)$ rounds (with high probability).
For Problem \ref{oldproblem:reporting}.2, the oracle avoids the $\Omega(d)$ lower bound and 
Su and Vu's~\cite{SuVu20} randomised algorithm can solve this problem in $O(\eps^{-4}\log^4 n)$ rounds (with high probability). See also Table~\ref{tab:results2} for an overview. 

\subsection{Reporting a locally densest subgraph}

For the local density, we can define an equivalent problem definition to Problem~\ref{oldproblem:reporting}:

\begin{problem}
    \label{newproblem:reporting}
    Given $(G, \eps)$ and a vertex $v$, each vertex outputs a bit. Denote by $H$ the 1-bit induced subgraph. We require that $v \in H$, and that $\rho(H) \in [(1-\eps) \rho^*(v), (1+\eps)\rho^*(v)]$.
\end{problem}

We design a set of algorithms to answer the reporting variant for the local subgraph density problem (see Table~\ref{tab:results2} for an overview)

\begin{table}[h]
    \centering
    \begin{adjustbox}{max width=1.3\textwidth,center}
    \begin{tabular}{c|c|c|c|c}
        Model & Problem & Each $v$ reports a bit s.t. for the $1$-bit induced graph $H$: & Rounds & Source \\
        & & (Our new result additionally requires an input vertex $v$) & & \\ \hline
L & \ref{oldproblem:reporting}.1 & $\rho(H) \in [(1+\eps)^{-1} \rho^{\max}(G), (1+\eps)\rho^{\max}(G) ]$ & $\Theta(D)$ & \cite{SuVu20} \\
 & \ref{oldproblem:reporting}.2 & $\rho(H) \geq \tilde{D} \textnormal{ where } \tilde{D} \textnormal{ is given by an oracle.}$ & $O(\eps^{-1} \log n)$ & \cite{SuVu20} \\
 & \textbf{\ref{newproblem:reporting}} & $\pmb{\rho(H) \in [(1+\eps)^{-1} \rho^*(v), (1+\eps)\rho^* (v) ]}$ & $\pmb{O(\eps^{-2} \log^2 n)}$ & \textbf{Cor.~\ref{cor:local}}      \\ \hline
C & \ref{oldproblem:reporting}.1 &  $\rho(H) \in [(2+\eps)^{-1} \rho^{\max}(G), (2+\eps)\rho^{\max}(G) ]$ & $O(D \cdot \eps^{-1} \log n)$ & \cite{SarmaLNT12} \\ 
 & \ref{oldproblem:reporting}.2 & $\rho(H) \geq \tilde{D} \textnormal{ where } \tilde{D} \textnormal{ is given by an oracle.}$ & \textcolor{orange}{ $O(\eps^{-4} \log^4 n)$}& \cite{SuVu20} \\
& \textbf{\ref{oldproblem:reporting}}\textbf{.2} & \pmb{$\rho(H) \in [(1+\eps)^{-1} \rho^*(v), (1+\eps)\rho^* (v) ]$} &$ \pmb{O(  \poly \{ \log n, \eps^{-1} \}) \cdot 2^{O(\sqrt{\log n})}}$  & \textbf{Thm.~\ref{thm:vgeneral}}   \\
& \textbf{\ref{newproblem:reporting}} & \pmb{$\rho(H) \in [(1+\eps)^{-1} \rho^*(v), (1+\eps)\rho^* (v) ]$} &$ \pmb{O(  \poly \{ \log n, \eps^{-1} \}) \cdot 2^{O(\sqrt{\log n})}}$  & \textbf{Thm.~\ref{thm:vgeneral}}   \\
  & \textbf{\ref{newproblem:reporting}} &   \pmb{$\rho(H) \in [(1+\eps)^{-1} \rho^*(v), (1+\eps)\rho^* (v) ]$} & \textcolor{orange}{$\pmb{O(\poly \{ \log n, \eps^{-1} \})}$} & \textbf{Thm.~\ref{thm:vgeneral}}    \\
    \end{tabular}
    \end{adjustbox}
    \caption{Our results in LOCAL (L) or CONGEST (C) for the reporting variant of our problem. $D$ denotes the diameter.  Orange running times are not deterministic and occur with high probability.   }
    \label{tab:results2}
\end{table}

\subsection{Our algorithm}
So far our algorithm is designed to simply output an estimate of the local density at every vertex. 
It is easily extended to output an approximate value of the maximum subgraph density in $O(\text{diameter})$ many rounds, by computing an $\eta$-fair out-orientation (for the correct value of $\eta$) and subsequently broadcasting the maximum fractional out-degree which provides the desired approximate value of $\rho^{\max}(G)$.
We can also extend the algorithm for computing an $\eta$-fair orientation to one that outputs a subgraph with no smaller than $(1-\eps)\tilde{D}$, for some input parameter $\tilde{D}\leq \rho^{\max}(G)$. 
Specifically, we show how to output $0$ or $1$ at each vertex such that the graph induced by all vertices outputting $1$ forms a subgraph with density at least $(1-\eps)\tilde{D}$.

Our reduction relies on the following generalisation of the work by Su and Vu~\cite{SuVu20}. Su and Vu showed their version of the lemma by appealing to network decompositions. We give a simple argument without appealing to network decompositions.
\begin{lemma} \label{lem:local}
    Let $1\geq \eps > 0$ and $n \in \mathbb{N}$ be given. Then for $t = \lceil \frac{2\log n}{\varepsilon} \rceil$ we have: For any graph $G$ on $n$ vertices and for all vertices $v \in V(G)$ there exists a subgraph $H'_{v}$ such that $H'_{v} \subset H_{t}(v)$ and $\rho(H'_{v}) \geq \rho^*(v)(1-\varepsilon)$.
\end{lemma}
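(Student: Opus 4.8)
I want to show that for $t = \lceil 2\varepsilon^{-1}\log n\rceil$, every vertex $v$ has a subgraph $H'_v \subseteq H_t(v)$ with $\rho(H'_v) \geq (1-\varepsilon)\rho^*(v)$. The natural handle is the local out-degree characterisation from Theorem~\ref{thm:localDensity}: fix a locally fair orientation $\overrightarrow{G}$, so that $\gee(v) = \gee^*(v) = \rho^*(v)$ for every vertex. The key structural fact I will exploit is that in a locally fair orientation, out-edges point to vertices of out-degree at least as large, so starting from $v$ and repeatedly following out-edges, out-degrees are non-decreasing. I will use this to locate a dense subgraph close to $v$.

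\textbf{Key steps.} First, using the locally fair orientation, define the nested vertex sets $W_j := \{ u \in V \mid u \text{ reachable from } v \text{ by a directed path of length} \leq j \text{ using edges with positive out-mass}\}$, or alternatively the out-degree-thresholded sets $W_j := \{u \in H_j(v) \mid \gee(u) \geq \gee(v)\}$ reachable within $j$ hops. Since $\overrightarrow{G}$ is locally fair, any $u$ with $\gee(u\tor w)>0$ has $\gee(w)\geq \gee(u)$, so the directed reachable set from $v$ consists entirely of vertices of out-degree $\geq \rho^*(v)$. Second, I run the standard ``ball-growing'' / volume argument: if for every $j \leq t$ the edge count grew by a $(1+\tfrac{\varepsilon}{\text{something}})$ factor, then after $t \approx \varepsilon^{-1}\log n$ steps we would exceed $n^2$ edges (or $n$ vertices), a contradiction; hence there is some $j < t$ where the ball $W_j$ is ``nearly closed'', meaning the out-edges leaving $W_j$ carry only an $\varepsilon$-fraction of the total out-mass $\sum_{u \in W_j}\gee(u)$. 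Third, take $H'_v := G[W_j]$ for that $j$. Then $\sum_{e \in E(H'_v)} \gee(e) = \sum_{u \in W_j}\gee(u) - (\text{out-mass leaving } W_j) \geq (1-\varepsilon)\sum_{u \in W_j}\gee(u) \geq (1-\varepsilon)|W_j|\cdot \rho^*(v)$, since every $u \in W_j$ has $\gee(u)\geq\rho^*(v)$. Dividing by $|W_j|$ gives $\rho(H'_v)\geq(1-\varepsilon)\rho^*(v)$, and $W_j \subseteq H_j(v) \subseteq H_t(v)$ since $j < t$.

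\textbf{The main obstacle.} The delicate part is setting up the ball-growing step so that the ``small boundary'' level $j$ is guaranteed to appear within $t = O(\varepsilon^{-1}\log n)$ hops, and getting the constant right so that $2\varepsilon^{-1}\log n$ suffices. Concretely: if at \emph{every} level $j \le t$ a constant fraction — say more than an $\varepsilon/2$ fraction — of the out-mass of $W_j$ escapes, then the total out-mass (equivalently, the vertex count, since out-degrees are bounded below by $\rho^*(v)$ and each escaping edge reaches a new vertex of comparable or larger out-degree) multiplies by roughly $(1+\varepsilon/2)$ each step. Using $\log(1+\varepsilon/2) \geq \varepsilon/4$ for $\varepsilon \le 1$, after $t$ steps the volume exceeds $n \cdot (1+\varepsilon/2)^{t}$... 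I must be careful that this correctly contradicts $|W_j| \le n$; a clean way is to track $|W_j|$ directly: escaping out-mass $\geq \tfrac{\varepsilon}{2}|W_j|\rho^*(v)$ forces at least $\tfrac{\varepsilon}{2}|W_j|$ new vertices (since each has out-degree $\leq \Delta^{\min}(G)$, but actually I only need that each escaping unit of mass lands on some vertex and these vertices are not yet in $W_j$ — here I must bound in-degrees too, or instead directly bound $\sum_{u\in W_{j+1}}\gee(u) \geq (1+\tfrac{\varepsilon}{2})\sum_{u\in W_j}\gee(u)$ and then use $\sum_{u\in W_j}\gee(u) \le n\cdot n$). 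I expect the honest argument to bound $\sum_{u \in W_j}\gee(u)$ monotonically and observe it is at most $m \le n^2$, giving $t \le \log_{1+\varepsilon/2}(n^2) \le 2\varepsilon^{-1}\log(n^2)/(\log 2)\ldots$ — so I may need to absorb constants, but the stated $\lceil 2\varepsilon^{-1}\log n\rceil$ should come out after choosing the escape-fraction threshold appropriately (e.g. comparing to $|W_j|$ rather than to out-mass, which removes a factor). Verifying this bookkeeping — and confirming that ``edges with positive out-mass from $v$'' indeed only reach out-degree-$\geq\rho^*(v)$ vertices, so that the final density bound goes through — is where the real care is needed.
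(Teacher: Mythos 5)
Your approach matches the paper's: fix a locally fair orientation so $\gee(v)=\rho^*(v)$, grow the set $V_j$ of vertices reachable from $v$ by directed paths of length at most $j$ (all of which have out-degree at least $\rho^*(v)$ by local fairness), and argue that if $\rho(G[V_j])<(1-\varepsilon)\rho^*(v)$ for every $j\leq t$ then $|V_j|\geq(1-\varepsilon)^{-j}$, which exceeds $n$ before $t=\lceil 2\varepsilon^{-1}\log n\rceil$. The one simplification to adopt is to track $|V_j|\leq n$ directly rather than the escaping out-mass against $m\leq n^2$, which is exactly how the paper gets the stated constant; your first definition of $W_j$ (directed reachability, not the degree-threshold variant) is the correct one to use.
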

\begin{proof}
    Suppose for the sake of contradiction that this is not the case.  
Let $G, v$ be a graph and a vertex contradicting the statement. 
Denote by $\overrightarrow{G}$ a locally fair orientation of $G$.
We define $V_i$ as the set of vertices to which $v$ has a directed path of length at most $i$ in $\overrightarrow{G}$. 
Since $\overrightarrow{G}$ is locally fair, all vertices $u \in V_i$ have an out-degree $\gee(x)$ of at least $\rho^*(v)$.

\subparagraph{Induction.}
 We show by induction that if there exists an integer $j$ where for all $i \leq j$ the density $\rho(G[V_i]) < (1-\varepsilon)\rho^*(v)$  then $|V_j| \geq (1-\varepsilon)^{-j}$. 

  Indeed, for the base case observe that $|V_0| = 1 \geq (1-\varepsilon)^{0}$. Now assume $|V_{j-1}| \geq (1-\varepsilon)^{-(j-1)}$. We show that it also holds for $j$. 

    For brevity, we denote for any set of edges $E'$ by $\omega(E') := \sum_{e \in E'} \gee(e)$. 
   All vertices in $V_{j-1}$ have an out-degree of at least $\rho^*(v)$, and their edge points to a vertex in $V_{j-1}$ or $V_j$. Hence:
    \[
    \omega(E(H[V_{j-1}])) + \omega(E(V_{j-1}, V_{j}-V_{j-1})) \geq \rho^*(v)|V_{j-1}|.
    \]
    
    This means that we have 
    \[
    (1-\varepsilon)\rho^*(v) \geq \rho(H(V_j)) \geq \frac{\omega(E(H[V_{j-1}])) + \omega(E(V_{j-1}, V_{j}-V_{j-1}))}{|V_{j}|} \geq \frac{\rho^*(v)|V_{j-1}|}{|V_{j}|}
    \]
    By applying the induction hypothesis, we find that indeed $|V_{j}| \geq (1-\varepsilon)^{-j}$. We conclude that if $\varepsilon \leq 1$ then:
    \begin{align*}
    j \leq \frac{\log n}{\log(\frac{1}{1-\varepsilon})} \leq \frac{\log n}{\log(1+\frac{\varepsilon}{1-\varepsilon})} \leq \frac{\log n}{\log(1+\varepsilon)} \leq \frac{\log n}{\log(1+\varepsilon)} \leq  \frac{2\log n}{\varepsilon}
    \end{align*}
    where the last inequality comes from the fact that $\log(1+x) \geq x-\frac{x^2}{2}$ when $x \geq 0$. 
    
    In particular this means that within the $t$-hop neighbourhood of $v$ in $H$ (for $t \in O(\eps^{-2} \log^2 n)$) we find a subgraph of density at least $ (1-\varepsilon)\rho^*(v)$ and density at most $\rho(H) = \rho^{*}(v)$. Since $H$ is a (weighted) subgraph of $G$, the same holds for $G$. 
\end{proof}

The subgraph reporting algorithm now works as follows: We first compute an $\eta$-fair fractional out-orientation for $\eta = \frac{\eps^2}{128\log n}$. 
From each selected vertex $v$; we run the standard CONGEST BFS-leader election algorithm (for example the one described in chapter 5 of~\cite{JukkaBook}) before truncating it after exactly $\lceil \frac{32 \log n}{\eps} \rceil$ rounds.
Instead of breaking ties on vertex IDs, we break ties on the fractional out-degree of vertices under the $\eta$-fair orientation, giving priority to the vertex with the largest out-degree (we then subsequently break ties by ID). 
Now a vertex is only elected a leader, if every vertex of its $\lceil \frac{32 \log n}{\eps} \rceil$-hop neighbourhood acknowledges it as the current leader once the BFS-leader election algorithm is truncated.
After truncation we spend $\Theta(\frac{\log n}{\eps})$ rounds informing the elected leaders of their status by sending acknowledgements from the $\lceil \frac{32 \log n}{\eps} \rceil$-hop neighbourhoods. 
Once a leader acknowledges that it is a leader, it broadcasts this information to its entire $\lceil \frac{32 \log n}{\eps} \rceil$-hop neighbourhood.
Note that it might be that some parts of the graph become leaderless -- i.e. they did not sit in the $\lceil \frac{32 \log n}{\eps} \rceil$-hop neighbourhood of an elected leader. 
If this is the case, these vertices never receive information that the leader election succeeded, and once sufficient time has passed, they simply output $0$. 

If a leader has fractional out-degree under the $\eta$-fair orientation that is smaller than $\tilde{D}$, the leader will ask its $\lceil \frac{32 \log n}{\eps} \rceil$-hop neighbourhood to all output $0$. 
If this is not the case, we say the leader is $\emph{active}$.
Note that we later show that at least one elected leader stays active if $\tilde{D} \leq \rho(G) $.

For each leader $\ell$, we then compute an $\eta$-fair out-orientation of the $\lceil \frac{32 \log n}{\eps} \rceil$-hop neighbourhood of the leader. 
The leader then uses the truncated BFS tree-structure to gather the value of the largest fractional out-degree $h_{\max}(\ell)$ under this newly computed $\eta$-fair out-orientation as well as the sizes of the sets $T_i(\ell) = \{u \in N^{t}(\ell): h(u) \geq h_{\max}(1+\eta)^{-i}\}$ for $i = 0$ up to $i = O(\frac{\log n}{\eps})$. 
Based on these numbers the leader can determine a cut-off point. That is: the leader calculates the smallest $k$ for which $|T_{k+1}| < (1+\frac{\eps}{16})|T_{k}|$. 
It then sets the cut-off point to be $h_{\max}(1+\eta)^{-(k+1)}$, and broadcasts the cut-off point to its $\lceil \frac{32 \log n}{\eps} \rceil$-hop neighbourhood. Finally, every vertex with a fractional out-degree greater than or equal to the cut-off point declares a $1$ and all other vertices declare $0$.

\subparagraph{Analysis:} We now show correctness of the above algorithm i.e.\ that it correctly terminates with a sufficiently dense subgraph.

We use this property to show that if a leader is active, then 
the nodes that output $1$ in the $\lceil \frac{32 \log n}{\eps} \rceil$ neighbourhood of the leader form a subgraph with density at least $(1 - \eps) \rho(G)$:
\begin{lemma} \label{lma:boundDens}
    Suppose $\ell$ is an active leader. Then the vertices in the $\lceil \frac{32 \log n}{\eps} \rceil$-hop neighbourhood of $\ell$ that output $1$ induce a subgraph of density at least $(1 - \eps) \rho^{*}(\ell)$.
\end{lemma}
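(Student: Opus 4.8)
The plan is to mimic, inside the $t$-hop ball $B := H_t(\ell)$ with $t = \lceil \frac{32\log n}{\eps}\rceil$, the same cut-off argument used in the proof of Theorem~\ref{thm:approx} (the direction showing $\gee(v)\le(1+\eps)\rho^*(v)$). The leader $\ell$ computes a fresh $\eta$-fair orientation of $B$ with $\eta = \frac{\eps^2}{128\log n}$, gathers $h_{\max} := h_{\max}(\ell)$ and the sizes of the nested sets $T_i := \{u\in B : h(u)\ge h_{\max}(1+\eta)^{-i}\}$, and picks the smallest $k$ with $|T_{k+1}| < (1+\tfrac{\eps}{16})|T_k|$; the cut-off is $c := h_{\max}(1+\eta)^{-(k+1)}$ and the reported subgraph $H$ is induced by $\{u\in B : h(u)\ge c\} = T_{k+1}$. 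First I would record two easy facts: (i) by Lemma~\ref{lem:estimate}, since $k \le \log_{(1+\eps/16)} n$ (as $|T_0|=1$ and $|T_{k}|\le n$ forces $k$ small), $(1+\eta)^{-(k+1)} \ge (1+\tfrac\eps2)^{-1}\cdot(1+\eta)^{-1} \ge (1+\eps)^{-1}\cdot\tfrac{16}{17}$, say, so every vertex of $H$ has out-degree $\ge c \ge (1+\eps)^{-1}\,h_{\max}\cdot(\text{const})$; and (ii) since $\ell$ is active, $h_{\max}\ge h(\ell)\ge \tilde D$ is not the quantity we need — rather we need to relate $h_{\max}$ to $\rho^*(\ell)$, for which I invoke Lemma~\ref{lem:local}/Lemma~\ref{lem:estimate}: within $B$ there is a subgraph of density $\ge(1-\eps)\rho^*(\ell)$, and an $\eta$-fair orientation of $B$ has $\max$ out-degree $\ge \rho(B)\ge \rho^{\max}(\text{that subgraph})$, hence $h_{\max}\ge (1-\eps)\rho^*(\ell)$ up to constants; combined with $h_{\max}\le \gee(\ell)$-type bounds this pins $h_{\max}$ to within $(1\pm O(\eps))\rho^*(\ell)$.

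The core computation is the density lower bound for $H = T_{k+1}$. Every edge $\overline{uv}$ with both endpoints in $T_{k+1}$, and every edge $\overline{uv}$ with $u\in T_{k+1}$ and $\gee(u\tor v)>0$ but $v\notin T_{k+1}$, contributes: for the latter, $\eta$-fairness gives $h(u)\le(1+\eta)h(v)$, but $u\in T_{k+1}$ and $v\notin T_{k+1}$ with $v$ ranging only over vertices of out-degree $< c = h_{\max}(1+\eta)^{-(k+1)}$, so actually such an edge can only exist if $v\in T_{k+2}\setminus T_{k+1}$; by minimality of $k$ this is not directly bounded, so instead I follow Theorem~\ref{thm:approx} exactly: take the "inner" set $T_k$, sum out-degrees $\sum_{u\in T_k} h(u) \ge |T_k|\cdot h_{\max}(1+\eta)^{-k}$, note each such out-edge lands inside $T_{k+1}$ (by $\eta$-fairness, since a neighbour $v$ receiving flow from $u\in T_k$ has $h(v)\ge h(u)/(1+\eta)\ge h_{\max}(1+\eta)^{-(k+1)}$, i.e. $v\in T_{k+1}$), hence
\[
\sum_{e\in E(H)} \gee(e) \;\ge\; \sum_{u\in T_k} h(u) \;\ge\; |T_k|\cdot h_{\max}(1+\eta)^{-k}.
\]
Dividing by $|V(H)| = |T_{k+1}| \le (1+\tfrac\eps{16})|T_k|$ and applying Lemma~\ref{lem:estimate} ($(1+\eta)^{-k}\ge(1+\tfrac\eps2)^{-1}$) yields $\rho(H) \ge h_{\max}\cdot(1+\tfrac\eps2)^{-1}(1+\tfrac\eps{16})^{-1} \ge (1+\eps)^{-1} h_{\max}$, and then $h_{\max}\ge(1-\eps)\rho^*(\ell)$ (up to a further $O(\eps)$ slack absorbed into the constant $32$) finishes the bound, giving $\rho(H)\ge(1-O(\eps))\rho^*(\ell)$; rescaling $\eps$ at the outset gives exactly $(1-\eps)\rho^*(\ell)$.

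The main obstacle I anticipate is the bookkeeping around $h_{\max}$ versus $\rho^*(\ell)$: the orientation is recomputed \emph{locally} on the truncated ball $B = H_t(\ell)$, not globally, so I must argue that the local $\eta$-fair orientation's maximum out-degree is still a $(1+\eps)$-approximation of $\rho^*_B(\ell)$ (via Theorem~\ref{thm:approx} applied to $B$, legitimate because $B$ has $\le n$ vertices so $\eta = \frac{\eps^2}{128\log n}$ is still small enough), and then that $\rho^*_B(\ell)$ is within $(1+\eps)$ of $\rho^*(\ell)$ — but the latter is exactly Theorem~\ref{thm:newlocal}/Lemma~\ref{lem:local}, which requires $t\in\Theta(\eps^{-2}\log^2 n)$, matching our choice of $t$. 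I would also need to double-check that $\ell$ being an active leader (out-degree $\ge\tilde D$) plus $\tilde D\le\rho^{\max}(G)$ does not actually enter this particular lemma — indeed it does not; Lemma~\ref{lma:boundDens} is purely about $\rho^*(\ell)$, and activeness is only used elsewhere to guarantee \emph{some} leader survives. Finally, a small care point: the cut-off set could in principle be all of $B$, in which case $k$ is the largest index and $|T_{k+1}|=|T_k|$, and the inequality chain still goes through since then $|T_{k+1}|/|T_k| = 1 \le (1+\tfrac\eps{16})$.
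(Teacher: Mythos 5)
Your proposal is correct and follows essentially the same approach as the paper: invoke Lemma~\ref{lem:local} to show the $t$-hop ball contains a subgraph of density near $\rho^*(\ell)$, deduce $h_{\max}\ge(1-O(\eps))\rho^*(\ell)$ from the fact that any orientation's maximum out-degree dominates $\rho^{\max}$ of the ball, then sum out-degrees of $T_k$ vertices (all of whose out-edges land in $T_{k+1}$ by $\eta$-fairness), divide by $|T_{k+1}|\le(1+\eps/16)|T_k|$, and apply Lemma~\ref{lem:estimate}. The only differences are cosmetic: you are a bit looser with constants (appealing to ``rescaling $\eps$'' where the paper tracks $(1-\eps/16)(1-5\eps/8)\ge 1-\eps$ explicitly), and you spell out the step $\sum_{v\in T_k}\gee(v)\le\sum_{e\in E(T_{k+1})}\gee(e)$ in more detail than the paper does.
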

\begin{proof}
    It follows from Lemma~\ref{lem:local} that the $\lceil \frac{32 \log n}{\eps} \rceil$-hop neighbourhood of $\ell$ contains a subgraph with density at least $\rho^{*}(\ell)(1-\frac{\eps}{16})$. 

    This in turn means that the largest fractional out-degree $h_{\max}$ under the second $\eta$-fair out-orientation computed only on the $\lceil \frac{32 \log n}{\eps} \rceil$-hop neighbourhood of $\ell$ is at least $\rho^{*}(\ell)(1-\frac{\eps}{16})$. 
    Indeed, this neighbourhood has maximum subgraph density at least $\rho^{*}(\ell)(1-\frac{\eps}{16})$ and so the largest fractional out-degree has to be at least $\rho^{*}(\ell)(1-\frac{\eps}{16})$. 

    This means that the density of the subgraph induced by all vertices outputting $1$, call this subgraph $H(\ell)$, can be bounded as follows: 
    \begin{align*}
    \rho(H(\ell)) &\geq \frac{\sum_{v \in T_k} g(v)}{|T_{k+1}|} 
     \geq  \frac{|T_k| \cdot{} h_{\max} \cdot{}(1+\eta)^{-k} }{(1+\frac{\eps}{16})|T_{k}|}  
     \geq  h_{\max}\frac{1}{(1+\frac{5\eps}{8})}  \Rightarrow \\
     \rho(H(\ell)) &\geq  \rho^{*}(\ell)(1-\frac{\eps}{16})(1-\frac{5\eps}{8}) \geq
      \rho^{*}(\ell)(1-\eps)
    \end{align*}
    where we used that $k \leq \log_{1+\eps/16}(n)$ since $T_k$ never can be larger than $n$, and that for $0 \leq x \leq 1$ we have $ \log 1+x \geq \frac{x}{2}$.
\end{proof}
Now we are ready to prove correctness: 
\begin{lemma}\label{lma:correctness}
    Let $G$, $\eps > 0$ and $\tilde{D}$ be given.
    Then the graph induced by all vertices declaring $1$ is either empty or has density at least $\tilde{D}(1-\varepsilon)$.
    Furthermore, if $\tilde{D} \leq \rho^{\max}(G)$ then at least one vertex will declare $1$.
\end{lemma}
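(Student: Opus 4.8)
The plan is to group the $1$-bit vertices by which elected leader instructed them to output $1$, bound the density of each group via Lemma~\ref{lma:boundDens}, observe that a (near-)disjoint union of dense subgraphs is itself dense, and --- for the second assertion --- exhibit one vertex that is guaranteed both to be elected and to stay active.

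I would first record a structural fact about the leader election: two distinct elected leaders $\ell_1 \neq \ell_2$ have disjoint $\lceil 32\eps^{-1}\log n\rceil$-hop neighbourhoods $N^t(\ell_1), N^t(\ell_2)$. Indeed a vertex is elected only if every vertex in its $t$-hop ball acknowledges it once the truncated BFS stops; so a common vertex $v \in N^t(\ell_1)\cap N^t(\ell_2)$ would acknowledge both $\ell_1$ and $\ell_2$, contradicting that $v$ has a unique BFS winner. Hence every $1$-bit vertex lies in the neighbourhood of exactly one active leader $\ell$, and $V(H) = \bigsqcup_{\ell\text{ active}} V(H(\ell))$, where $H(\ell)$ denotes the reported subgraph inside $N^t(\ell)$. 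Edges of $H$ running between two different $H(\ell)$'s only raise $\rho(H)$, so $\rho(H) \geq \sum_\ell \omega(E(H(\ell)))\big/\sum_\ell |V(H(\ell))| \geq \min_{\ell\text{ active}} \rho(H(\ell))$.

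For the density bound: if no leader is active then every vertex outputs $0$ and $H$ is empty. Otherwise fix an active leader $\ell$. By definition of active, $\gee(\ell) \geq \tilde D$ in the first $\eta$-fair orientation (with $\eta = \eps^2/(128\log n)$), and Theorem~\ref{thm:approx} gives $\rho^*(\ell) = \gee^*(\ell) \geq (1+\eps)^{-1}\gee(\ell) \geq (1+\eps)^{-1}\tilde D$. Lemma~\ref{lma:boundDens} then gives $\rho(H(\ell)) \geq (1-\eps)\rho^*(\ell) \geq (1-\eps)(1+\eps)^{-1}\tilde D$, and combined with the previous paragraph $\rho(H) \geq (1-\eps)(1+\eps)^{-1}\tilde D$. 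Running the whole procedure with $\eps$ replaced by a suitable constant fraction of $\eps$ (fixed at the outset) upgrades this to $\rho(H) \geq (1-\eps)\tilde D$.

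It remains to show that at least one vertex outputs $1$ whenever $\tilde D \leq \rho^{\max}(G)$; assume $\rho^{\max}(G) > 0$, else there is nothing to prove. Let $v^\star$ be the vertex of largest out-degree in the first $\eta$-fair orientation (ties by ID); this is exactly the globally top-priority vertex of the election, so its truncated BFS message wins at every vertex of $N^t(v^\star)$ and $v^\star$ is elected. Summing out-degrees over the densest subgraph $S$ gives $\sum_{u\in S}\gee(u) \geq \omega(E(G[S])) = \rho^{\max}(G)\,|S|$, hence $\gee(v^\star) = \max_u \gee(u) \geq \rho^{\max}(G) \geq \tilde D$, so $v^\star$ is active. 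Since $\rho^*(v^\star) \geq (1+\eps)^{-1}\gee(v^\star) > 0$, Lemma~\ref{lem:local} yields a positive-density subgraph inside $N^t(v^\star)$, so the local $\eta$-fair orientation there has $h_{\max} > 0$, and the vertex attaining $h_{\max}$ clears the cut-off point and outputs $1$. The main obstacle I foresee is the election bookkeeping --- confirming the disjointness of elected leaders' $t$-balls (so that $\rho(H)$ really is a weighted average of the $\rho(H(\ell))$) and that the truncated BFS tree of an active leader spans its entire $t$-ball (so that $h_{\max}$ and the sizes $|T_i|$ are computed over the full neighbourhood); the remainder is routine constant-chasing to absorb the $(1+\eps)^{-1}$ and $(1-\eps/16)$ factors into a clean $(1-\eps)$.
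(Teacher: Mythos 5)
Your proposal is correct and follows essentially the same route as the paper: disjointness of the active leaders' $t$-balls (giving $V(H) = \bigsqcup_\ell V(H(\ell))$, so $\rho(H)$ is at least the minimum of the $\rho(H(\ell))$), Lemma~\ref{lma:boundDens} to bound each block, and the globally top-priority vertex providing an active leader whenever $\tilde D \leq \rho^{\max}(G)$. You are slightly more careful on one point: the paper's proof passes from Lemma~\ref{lma:boundDens} directly to $\rho(H(\ell)) \geq (1-\eps)\tilde D$, but the lemma only yields $(1-\eps)\rho^*(\ell)$, and activeness ($\gee(\ell) \geq \tilde D$ in the $\eta$-fair orientation) only gives $\rho^*(\ell) \geq (1+\eps)^{-1}\tilde D$ through Theorem~\ref{thm:approx}, so a constant rescaling of $\eps$ at the outset is indeed needed to reach the stated $(1-\eps)\tilde D$; your explicit remark about this rescaling is the right way to close that small gap. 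Your direct sum-over-the-densest-subgraph argument for $\max_u \gee(u) \geq \rho^{\max}(G)$ is an equivalent, more self-contained variant of the paper's appeal to $\Delta^{\min}(G) = \rho^{\max}(G)$, and the closing observation that $h_{\max} > 0$ forces at least one vertex above the cut-off is a correct (and useful) completion the paper leaves implicit.
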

\begin{proof}
    Note first that the $\lceil \frac{32 \log n}{\eps} \rceil$-hop neighbourhood of every active leader is disjoint. 
    This follows from the fact that every vertex acknowledges exactly one leader, and so two vertices sharing a vertex in their $\lceil \frac{32 \log n}{\eps} \rceil$-hop neighbourhood cannot both be acknowledged as leaders.
    This in turn implies that the subgraph $H$ induced by all vertices outputting a 1 has density at least $\tilde{D}(1-\eps)$. 

    Indeed, let $\ell_1, \dots, \ell_{s}$ be the active leaders. Then by above the induced subgraphs belonging to each leader, $H(\ell_1), \dots, H(\ell_s)$, are all disjoint and so the density of the graph $H = \bigcup \limits_{j} H(\ell_j) = (\bigcup\limits_{j} V(H(\ell_j)), \bigcup\limits_{j} E(H(\ell_j)) )$ satisfies:
    \begin{align*}
        \rho(H)  &= \frac{\sum \limits_{j} |E(H(\ell_j))|}{\sum \limits_{j} |V(H(\ell_j))|} = \frac{\sum \limits_{j} \rho(H(\ell_j))|V(H(\ell_j))|}{\sum \limits_{j} |V(H(\ell_j))|} \\
        &\geq \min \limits_{j}\{\rho(H(\ell_j))\}\frac{\sum \limits_{j} |V(H(\ell_j))|}{\sum \limits_{j} |V(H(\ell_j))|} \geq \min \limits_{j}\{\rho(H(\ell_j))\} \geq \tilde{D}(1-\eps)
    \end{align*}
    where the last inequality follows from Lemma~\ref{lma:boundDens}.

    Finally to show the furthermore part, observe that some vertex $v$ will receive a fractional out-degree at least $\rho(G)$ in the first computed $\eta$-fair orientation (indeed the $\rho(G)$ is the minimum maximum fractional out-degree achievable) and so the vertex with the smallest ID among the vertices with the largest fractional out-degree under the first $\eta$-fair orientation will always become an active leader.
\end{proof}
Finally, we have the following guarantee on the round complexity of the algorithm: 
\begin{theorem}
\label{thm:moregeneral}
    Suppose there exists an algorithm that given a graph $G$ on $n$ vertices and parameters $\eps$ computes an $\eta$-fair orientation in $O(U(n,\eta))$ rounds. 
    Then there exists an algorithm that, given $\tilde{D} \leq \rho^{\max}(G)$ as input at every vertex, outputs a subgraph $H$ with $\rho(H) \geq \rho(1 - \eps)\tilde{D}$ 
    after $O(U(n,\frac{\eps^2}{128\log n})+\frac{\log^2 n}{\eps^2})$ rounds.  
\end{theorem}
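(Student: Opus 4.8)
My plan is to observe that the algorithm asserted to exist is exactly the one spelled out above, so that only two things remain: certifying its output and bounding its round count. The output guarantee is already in hand — Lemma~\ref{lma:correctness} shows that the subgraph $H$ induced by the $1$-bit vertices is either empty or satisfies $\rho(H) \geq (1-\eps)\tilde D$, and is non-empty whenever $\tilde D \leq \rho^{\max}(G)$. That lemma in turn draws on Lemma~\ref{lem:local} (every radius-$\Theta(\eps^{-1}\log n)$ ball contains a subgraph of density at least $(1-\Theta(\eps))\rho^*$ of its centre), on Lemma~\ref{lma:boundDens} (each active leader's ball contributes a subgraph of density at least $(1-\eps)\rho^*(\ell)$), and on the observation that distinct active leaders own vertex-disjoint balls, since every vertex acknowledges exactly one leader. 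So the genuinely new work is the round-complexity calculation, which I would carry out phase by phase, using the fairness parameter $\eta := \frac{\eps^2}{128\log n}$ forced by Theorem~\ref{thm:approx}.

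I would charge the rounds as follows. One call to the assumed subroutine builds a global $\eta$-fair orientation in $O(U(n, \frac{\eps^2}{128\log n}))$ rounds, giving every vertex its out-degree $\gee(v)$, which Theorem~\ref{thm:approx} certifies to be within a $(1\pm\eps)$ factor of $\rho^*(v)$. The truncated BFS leader election, the back-propagation of acknowledgements, the broadcast of leader status, and the ``am I active?'' step (a leader comparing its out-degree to $\tilde D$ and, if it fails, telling its ball to output $0$) all take place inside balls of radius $t = \lceil 32\,\eps^{-1}\log n\rceil$, hence cost $O(\eps^{-1}\log n)$ rounds together; leaderless vertices simply time out and output $0$ at no extra cost. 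Because the balls of distinct active leaders are vertex-disjoint, all the per-leader computations of an $\eta$-fair orientation of a radius-$t$ ball run simultaneously as independent CONGEST instances on graphs with at most $n$ vertices, again in $O(U(n, \frac{\eps^2}{128\log n}))$ rounds. Finally, each active leader aggregates $h_{\max}(\ell)$ and the $O(\eps^{-1}\log n)$ counts $|T_i(\ell)|$ up its BFS tree, selects the cut-off index $k$, and broadcasts it; even a naive (un-pipelined) aggregation of $O(\eps^{-1}\log n)$ integers along a depth-$O(\eps^{-1}\log n)$ tree costs $O(\eps^{-2}\log^2 n)$ rounds, which dominates the remaining bookkeeping. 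Summing the phases gives the claimed bound $O(U(n, \frac{\eps^2}{128\log n}) + \eps^{-2}\log^2 n)$.

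The single delicate point — and it is one the cited lemmas have already absorbed — is that the approximation loss must compound to at most a $(1-\eps)$ factor. Density is lost in three essentially independent places: the $(1-\Theta(\eps))$ slack of Lemma~\ref{lem:local}, the $(1\pm\Theta(\eps))$ gap between out-degrees and $\rho^*$ from Theorem~\ref{thm:approx} (invoked both globally to decide activeness and inside each ball to read off $h_{\max}$), and the $(1+\Theta(\eps))$ loss from the cut-off rule $|T_{k+1}| < (1+\tfrac{\eps}{16})|T_k|$ paired with $k \leq \log_{1+\eps/16} n$. I would run every subroutine with a rescaled parameter $\eps' = c\eps$ for a small absolute constant $c$ so that the product of all these factors is at least $1-\eps$; this only changes hidden constants in $U$ and in the $\eps^{-2}\log^2 n$ term. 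With that rescaling, Lemmas~\ref{lma:boundDens} and~\ref{lma:correctness} apply as stated and the theorem follows. I anticipate no other obstacle: everything else is standard CONGEST machinery (BFS, tree aggregation, broadcast) whose costs I have already accounted for above.
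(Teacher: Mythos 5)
Your proposal is correct and takes essentially the same route as the paper: appeal to Lemma~\ref{lma:correctness} for the density guarantee, then charge the round complexity phase by phase, with the two $\eta$-fair orientation calls costing $O(U(n,\frac{\eps^2}{128\log n}))$ rounds and the aggregation of the $O(\eps^{-1}\log n)$ counts $|T_i|$ over the depth-$O(\eps^{-1}\log n)$ BFS tree dominating the remaining steps at $O(\eps^{-2}\log^2 n)$ rounds. The paper's proof is terser and does not spell out the $\eps'$-rescaling or the disjointness-enables-parallelism point, but these are exactly the ingredients it implicitly relies on, so the two arguments coincide in substance.
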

\begin{proof}
    Observe that computing the $\frac{\eps^2}{128\log n}$-fair orientations takes at most $O(U(n,\frac{\eps^2}{128\log n}))$ rounds.
    All other steps can be performed in $O(\frac{\log^2 n}{\eps^2})$ rounds. 
    Indeed, gathering or broadcasting information from a leader to the leaders $\lceil \frac{32 \log n}{\eps} \rceil$-hop neighbourhood takes at most $O(\frac{\log n}{\eps})$ rounds per piece of information needed.
    The most information intensive step is to gather the sizes of the sets $T_i$ (for the at most $\log_{1+\eps/16} n = O(\frac{\log n}{\eps})$ values of $i$), which can be done in $O(\frac{\log n}{\eps})$ rounds per $i$ that is in $O(\frac{\log^2 n}{\eps^2})$ rounds in total. 

    Finally, correctness of the algorithm follows from Lemma~\ref{lma:correctness}.
\end{proof}
In the variant, where we require that the output graph $H$ has a density that $(1+\eps)$-approximates $\rho^{\max}(G)$, we simply broadcast the maximum fractional out-degree $g^{\max}$ as before in $O(d)$ rounds. Observe that $g^{\max} \geq \rho^{\max}(G)$.
Then for a vertex to become an active leader it must have out-degree at least $g^{\max}$. 
Then we proceed as before. The above arguments can then be adapted mutatis mutandis to show that now $\rho(H) \geq (1-\eps)\rho^{\max}(G)$. 

Notice furthermore that if we want to solve the variant, where an input vertex $v$ is also specified, and we now wish to report a subgraph $H_v$ in the $t$-hop neighbourhood of $v$, containing $v$, with density at least $(1-\eps)\rho^*(v)$, we can modify by the above approach as follows. Let $k$ be such that Theorem~\ref{thm:newlocal} holds.
Next compute an $\eta$-fair orientation of the $k$-hop neighbourhood of $v$ instead. 
Finally, choose $v$ as the sole active leader, and let it report a dense subgraph in its $k$-hop neighbourhood following a similar protocol to before. The only difference is that now we define the the sets $T_{i}(v):= \{u\in N^{k}(v): h(u) \geq h(v)(1+\eta)^{-i}\}$. Other than that the protocol proceeds as before, but now $v \in T_{0}$, and so it is guaranteed to output $1$.
Making the change to the $T_{i}$'s mutatis mutandis to the arguments from Lemma~\ref{lma:boundDens} and Lemma~\ref{lem:local} show that the density of this subgraph will be at least $\rho^*(v)(1-\eps)$ as desired. 

By plugging in our algorithm for computing $\eta$-fair orientations, we arrive at the following result similarly to before: 
\begin{theorem}
\label{thm:vgeneral}
    Suppose there exists an algorithm that given a graph $G$ on $n$ vertices and parameters $\eps$ computes an $\eta$-fair orientation in $O(U(n,\eta))$ rounds. 
    Then there exists an algorithm that, given $v$ as input, outputs a subgraph $H$ with $\rho(H) \geq \rho(1 - \eps)\tilde{D}$ and $v\in H$
    after $O(U(n,\frac{\eps^2}{128\log n})+\frac{\log^3 n}{\eps^3})$ rounds.  
\end{theorem}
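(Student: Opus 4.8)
The plan is to obtain Theorem~\ref{thm:vgeneral} by adapting the algorithm and analysis behind Theorem~\ref{thm:moregeneral} to the case of a single designated vertex $v$, which makes the leader-election part of that proof unnecessary. Fix $\eta := \tfrac{\eps^2}{128\log n}$ and let $k = \Theta(\eps^{-2}\log^2 n)$ be the radius for which Theorem~\ref{thm:newlocal} holds, so that $\rho^*_k(v)\in[(1+\eps)^{-1}\rho^*(v),(1+\eps)\rho^*(v)]$. The algorithm I would run is: (i) run a BFS from $v$ for $k$ rounds, so every vertex learns its distance to $v$ and the vertices of $H_k(v)$ identify themselves together with which incident edges are internal to $H_k(v)$; (ii) invoke the hypothesised subroutine, with accuracy tuned so that it returns an $\eta$-fair orientation of the induced graph $H_k(v)$ (this is a legitimate CONGEST computation since after step (i) every participating vertex knows $H_k(v)$, and $H_k(v)$ has at most $n$ vertices, so it costs $O(U(n,\eta))$ rounds), giving each $u\in H_k(v)$ an out-degree $h(u)$; (iii) convergecast to $v$ the sizes of $T_i(v):=\{u\in H_k(v)\mid h(u)\ge h(v)(1+\eta)^{-i}\}$ for $i=0,\dots,\lceil\log_{1+\eps/16}n\rceil$, have $v$ compute the smallest $k^\ast$ with $|T_{k^\ast+1}(v)|<(1+\tfrac{\eps}{16})|T_{k^\ast}(v)|$ (it exists because $v\in T_0(v)$ and $|T_i(v)|\le n$), and broadcast the threshold $c:=h(v)(1+\eta)^{-(k^\ast+1)}$ to $H_k(v)$; (iv) every $u\in H_k(v)$ outputs $1$ iff $h(u)\ge c$, and every vertex outside $H_k(v)$ outputs $0$.

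The first thing I would verify is the two structural guarantees. Since $h(v)\ge c$, the vertex $v$ lies in $T_{k^\ast+1}(v)$, hence in the reported subgraph $H:=H_k(v)[T_{k^\ast+1}(v)]$; this is exactly the role of defining $T_i(v)$ relative to $h(v)$ rather than to the maximum out-degree. For the density, $\eta$-fairness of the orientation of $H_k(v)$ forces every out-edge of a vertex $a\in T_{k^\ast}(v)$ into $T_{k^\ast+1}(v)$: if $h(a\tor b)>0$ then $h(b)\ge h(a)/(1+\eta)\ge h(v)(1+\eta)^{-(k^\ast+1)}=c$. Summing out-degrees over $T_{k^\ast}(v)$ therefore charges only edges of $H$ (and each such edge at most its weight), so
\[
\sum_{e\in E(H)}\gee(e)\ \ge\ \sum_{a\in T_{k^\ast}(v)}h(a)\ \ge\ |T_{k^\ast}(v)|\cdot h(v)(1+\eta)^{-k^\ast},
\]
and dividing by $|V(H)|=|T_{k^\ast+1}(v)|<(1+\tfrac{\eps}{16})|T_{k^\ast}(v)|$ gives $\rho(H)\ge \tfrac{h(v)(1+\eta)^{-k^\ast}}{1+\eps/16}$. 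Since $k^\ast\le\log_{1+\eps/16}n$, Lemma~\ref{lem:estimate} yields $(1+\eta)^{-k^\ast}\ge(1+\tfrac{\eps}{2})^{-1}$, so $\rho(H)\ge \tfrac{h(v)}{(1+\eps/2)(1+\eps/16)}$. Finally, Theorem~\ref{thm:approx} applied to the graph $H_k(v)$ (valid because our $\eta$ is at most $\tfrac{\eps^2}{128\log|V(H_k(v))|}$) gives $h(v)\ge(1+\eps)^{-1}\rho^*_k(v)$, and combining with Theorem~\ref{thm:newlocal} gives $h(v)\ge(1+\eps)^{-2}\rho^*(v)$; thus $\rho(H)\ge\rho^*(v)/\bigl((1+\eps)^2(1+\tfrac{\eps}{2})(1+\tfrac{\eps}{16})\bigr)$, and running the whole procedure with $\eps$ replaced by a sufficiently small constant fraction of $\eps$ makes the right-hand side at least $(1-\eps)\rho^*(v)$.

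For the round complexity I would simply add up the four phases: the BFS and the final threshold broadcast each take $O(k)=O(\eps^{-2}\log^2 n)$ rounds; computing the $\eta$-fair orientation of $H_k(v)$ takes $O(U(n,\eta))=O\bigl(U(n,\tfrac{\eps^2}{128\log n})\bigr)$ rounds; and gathering the $O(\eps^{-1}\log n)$ values $|T_i(v)|$, each via a convergecast of depth $O(k)$ along the BFS tree, takes $O(\eps^{-1}\log n\cdot\eps^{-2}\log^2 n)=O(\eps^{-3}\log^3 n)$ rounds. Summing gives the claimed $O\bigl(U(n,\tfrac{\eps^2}{128\log n})+\eps^{-3}\log^3 n\bigr)$.

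The hard part is not a new idea — the mechanism is essentially that of Lemma~\ref{lma:boundDens} and Theorem~\ref{thm:moregeneral} — but the careful bookkeeping: one must chain four multiplicative losses (restricting from $G$ to $H_k(v)$ via Theorem~\ref{thm:newlocal}, passing from an exactly fair to an $\eta$-fair orientation via Theorem~\ref{thm:approx}, the peeling step, and Lemma~\ref{lem:estimate}) and choose the internal accuracy parameter as a constant multiple of $\eps$ so that the product still beats $(1-\eps)$, while simultaneously checking that simulating the assumed $\eta$-fair-orientation subroutine on the induced subgraph $H_k(v)$ rather than on $G$ is both legal in CONGEST and within the stated round budget. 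A point worth stating explicitly is that, unlike the global variants, only the one-sided bound $\rho(H)\ge(1-\eps)\rho^*(v)$ together with $v\in H$ is obtained: $H_k(v)$ may be far denser than $\rho^*(v)$, so no matching upper bound on $\rho(H)$ can be expected from this construction.
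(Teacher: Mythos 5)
Your proposal is correct and follows essentially the same route as the paper's (terse) proof: run the $\eta$-fair orientation subroutine on the $k$-hop neighbourhood $H_k(v)$ with $k$ from Theorem~\ref{thm:newlocal}, treat $v$ as the sole active leader, redefine the peeling sets $T_i(v)$ relative to $h(v)$ rather than $h_{\max}$ so that $v$ is forced into the output, and then carry the peeling-step density bound of Lemma~\ref{lma:boundDens} through verbatim; the extra $\log n/\eps$ in the round count comes, exactly as you say, from the taller broadcast tree of height $\Theta(\eps^{-2}\log^2 n)$. The only cosmetic difference is that you establish $h(v)\ge(1+\eps)^{-2}\rho^*(v)$ by chaining Theorem~\ref{thm:newlocal} with Theorem~\ref{thm:approx} and then absorb the resulting constant-factor slack by a final rescaling of $\eps$, whereas the paper appeals directly to (a $T_i(v)$-modified) Lemma~\ref{lem:local} for the lower bound on the anchor out-degree; both give the required $(1-\eps)\rho^*(v)$ guarantee, and your version is the more self-contained of the two.
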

Note that the running time is a bit slower, as the height of the broadcast tree has increased from $O(\tfrac{\log n}{\eps})$ to $O(\tfrac{\log^2 n}{\eps^2})$.

\end{document}